\title{Geometric Inference on Kernel Density Estimates}
\author{
Jeff M. Phillips%
     \thanks{Thanks to supported by NSF CCF-1350888, IIS-1251019, and ACI-1443046.}
     \\ {\small\textsl{jeffp@cs.utah.edu}} \\  {\small University of Utah} \and 
Bei Wang%
      \thanks{Thanks to supported by  INL 00115847 via DOE DE-AC07ID14517, 
                                                            DOE NETL DEEE0004449, 
                                                            DOE DEFC0206ER25781, 
                                                            DOE DE-SC0007446, and
                                                            NSF 0904631.}
      \\ {\small\textsl{beiwang@sci.utah.edu}} \\  {\small University of Utah} \and
Yan Zheng \\ {\small\textsl{yanzheng@cs.utah.edu}} \\  {\small University of Utah}
}
\begin{document}

\begin{titlepage}
\maketitle 
\begin{abstract}
We show that geometric inference of a point cloud can be calculated by examining its kernel density estimate with a Gaussian kernel.  This allows one to consider kernel density estimates, which are robust to spatial noise, subsampling, and approximate computation in comparison to raw point sets.  This is achieved by examining the sublevel sets of the \emph{kernel distance}, which isomorphically map to superlevel sets of the kernel density estimate.  We prove new properties about the kernel distance, demonstrating stability results and allowing it to inherit reconstruction results from recent advances in distance-based topological reconstruction.  Moreover, we provide an algorithm to estimate its topology using weighted Vietoris-Rips complexes.
\end{abstract}
\end{titlepage}

\section{Introduction}
\label{sec:intro}

Geometry and topology have become essential tools in modern data analysis: geometry to handle spatial noise and topology to identify the core structure.  Topological data analysis (TDA) has found applications spanning protein structure analysis~\cite{EFFL95,LEFSS98} to heart modeling~\cite{GCZQMA13} to leaf science~\cite{PSMHW11}, and is the central tool of identifying quantities like connectedness, cyclic structure, and intersections at various scales.  Yet it can suffer from spatial noise in data, particularly outliers.  

When analyzing point cloud data, classically these approaches consider $\alpha$-shapes~\cite{Edelsbrunner1993}, where each point is replaced with a ball of radius $\alpha$, and the union of these balls is analyzed.  More recently a distance function interpretation~\cite{CC12} has become more prevalent where the union of $\alpha$-radius balls can be replaced by the sublevel set (at value $\alpha$) of the Hausdorff distance to the point set.  Moreover, the theory can be extended to other distance functions to the point sets, including the \emph{distance-to-a-measure}~\cite{ChazalCohen-SteinerMerigot2011} which is more robust to noise.  

This has more recently led to statistical analysis of TDA.  These results show not only robustness in the function reconstruction, but also in the topology it implies about the underlying dataset.  This work often operates on persistence diagrams which summarize the persistence (difference in function values between appearance and disappearance) of all homological features in single diagram.  A variety of work has developed metrics on these diagrams and probability distributions over them~\cite{MMH11,TMMH14}, and robustness and confidence intervals on their landscapes~\cite{Bub14,FLRWBS14,CFLRSW13} (summarizing again the most dominant persistent features~\cite{CFLRW14}).  Much of this work is independent of the function and data from which the diagram is generated, but it is now more clear than ever, it is most appropriate when the underlying function is robust to noise, e.g., the distance-to-a-measure~\cite{ChazalCohen-SteinerMerigot2011}.  

\begin{figure}[h!]
\includegraphics[width=0.3\linewidth]{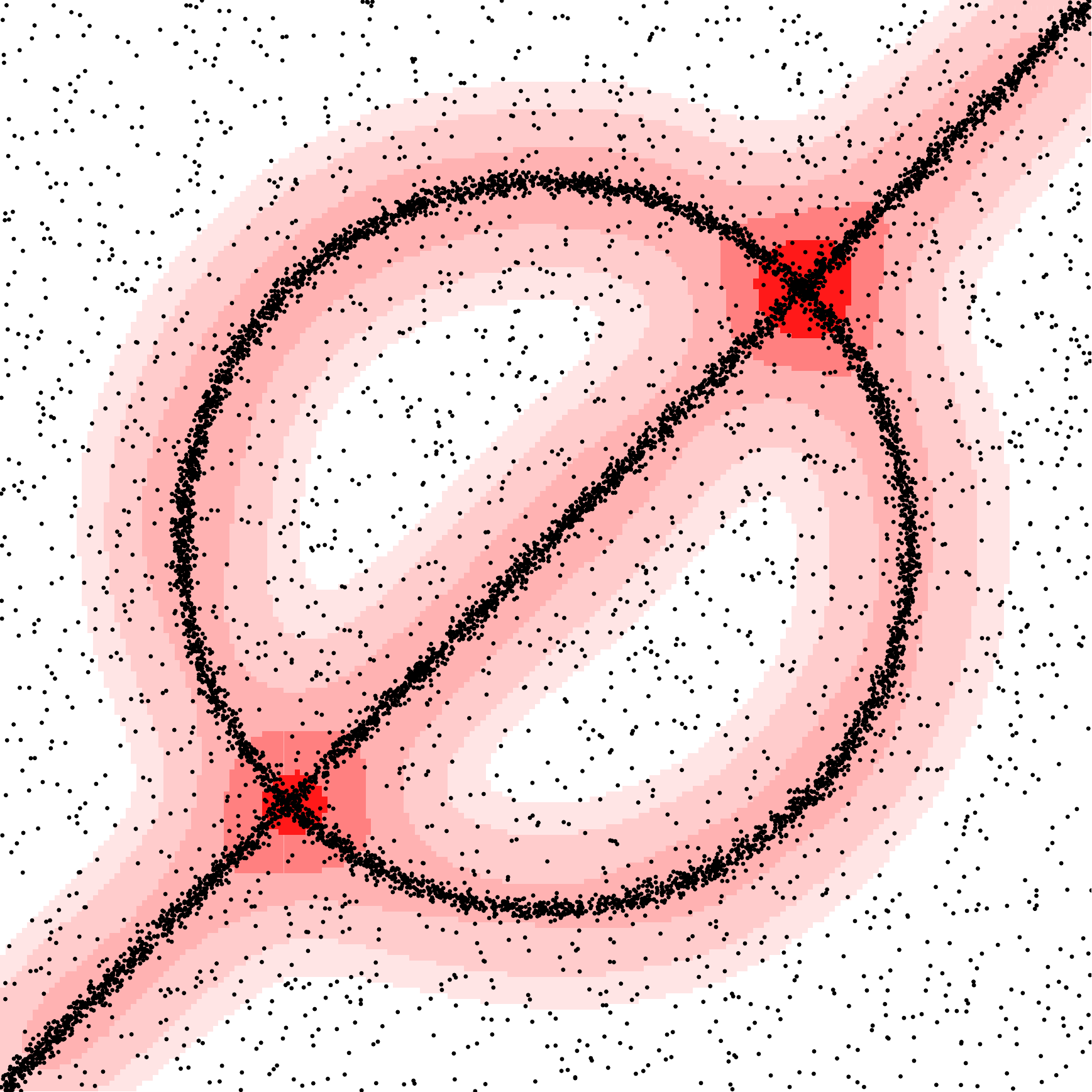}
\hspace{0.01\linewidth}
\includegraphics[width=0.32\linewidth]{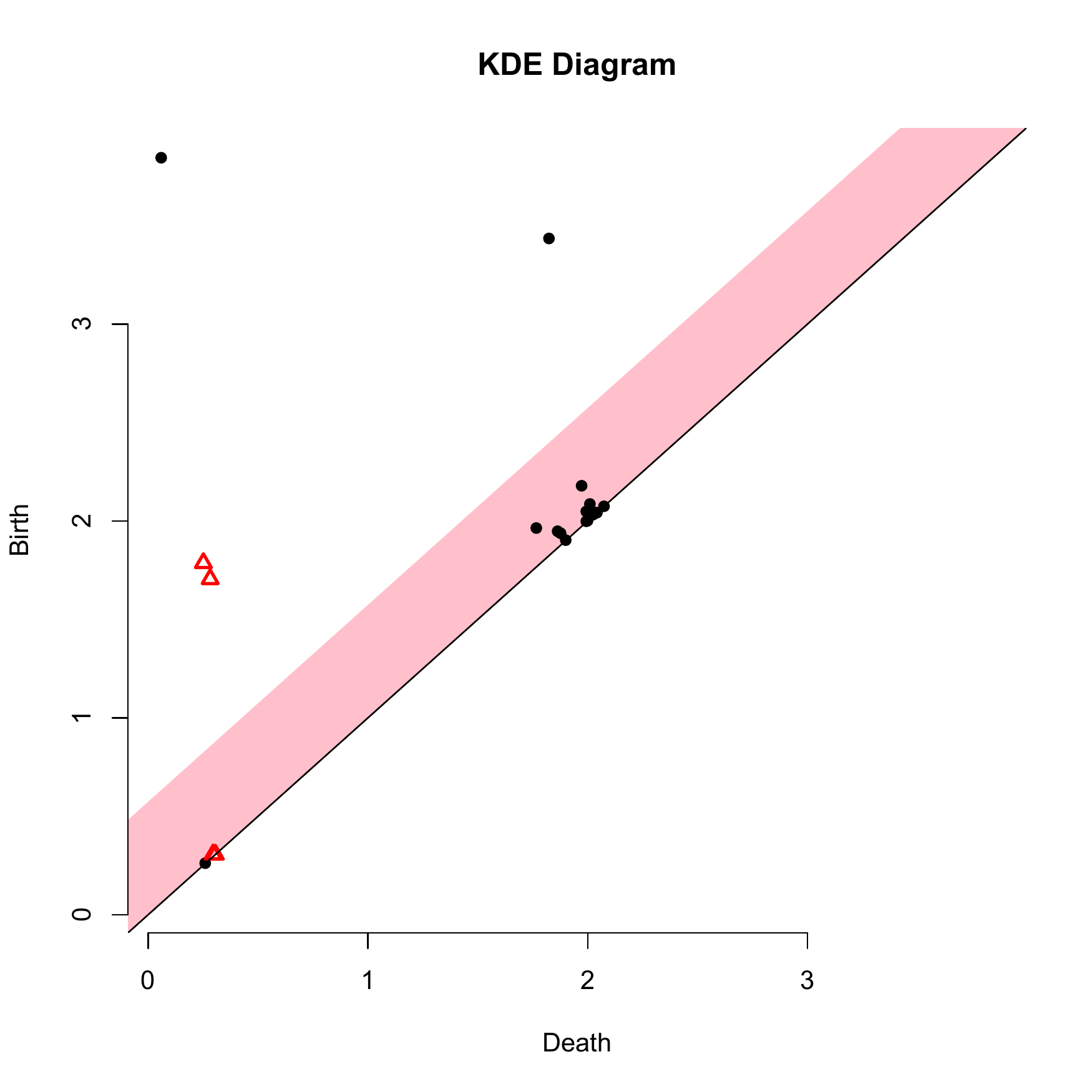}
\hspace{0.01\linewidth}
\includegraphics[width=0.32\linewidth]{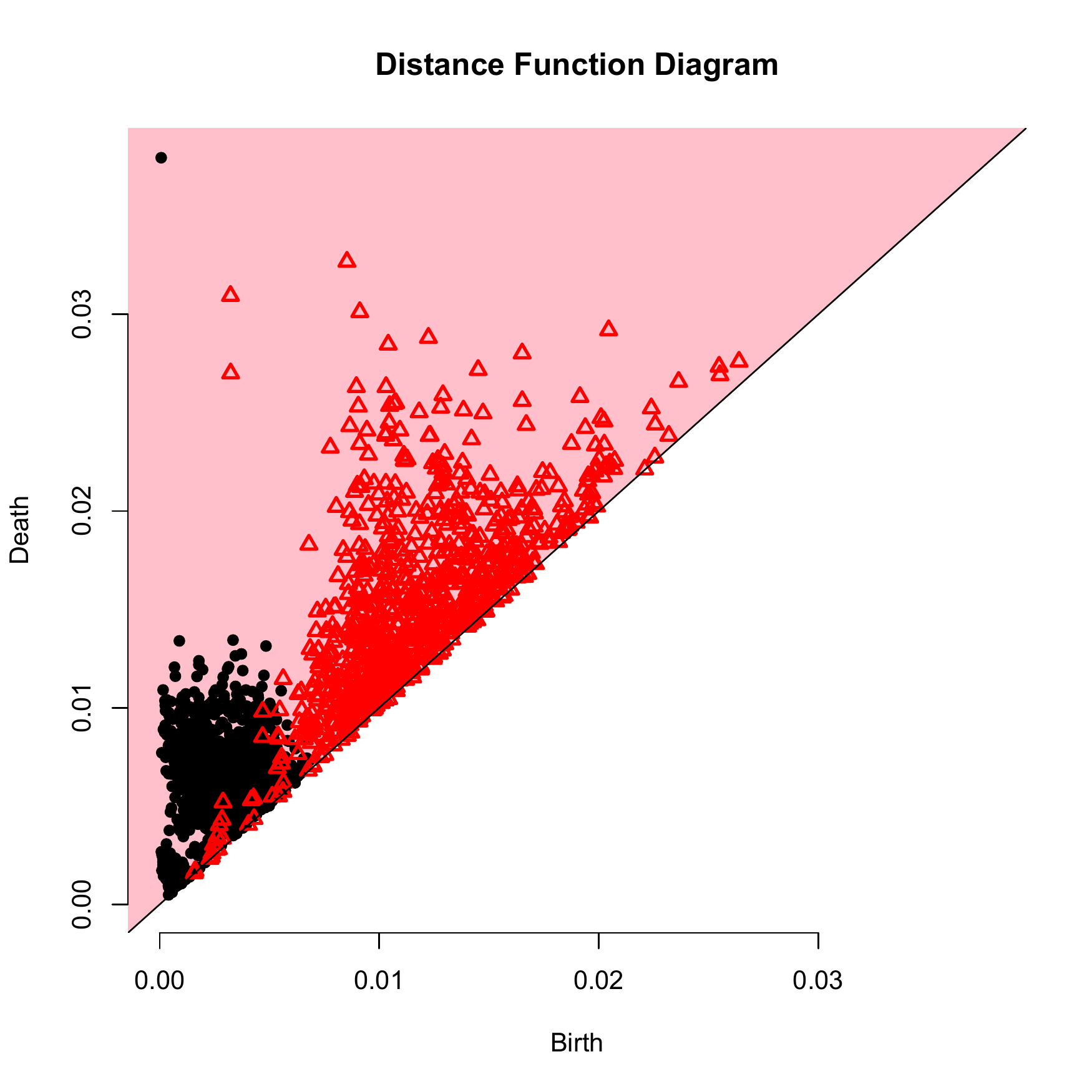}
\vspace{-.11in}
\\
\includegraphics[width=0.30\linewidth]{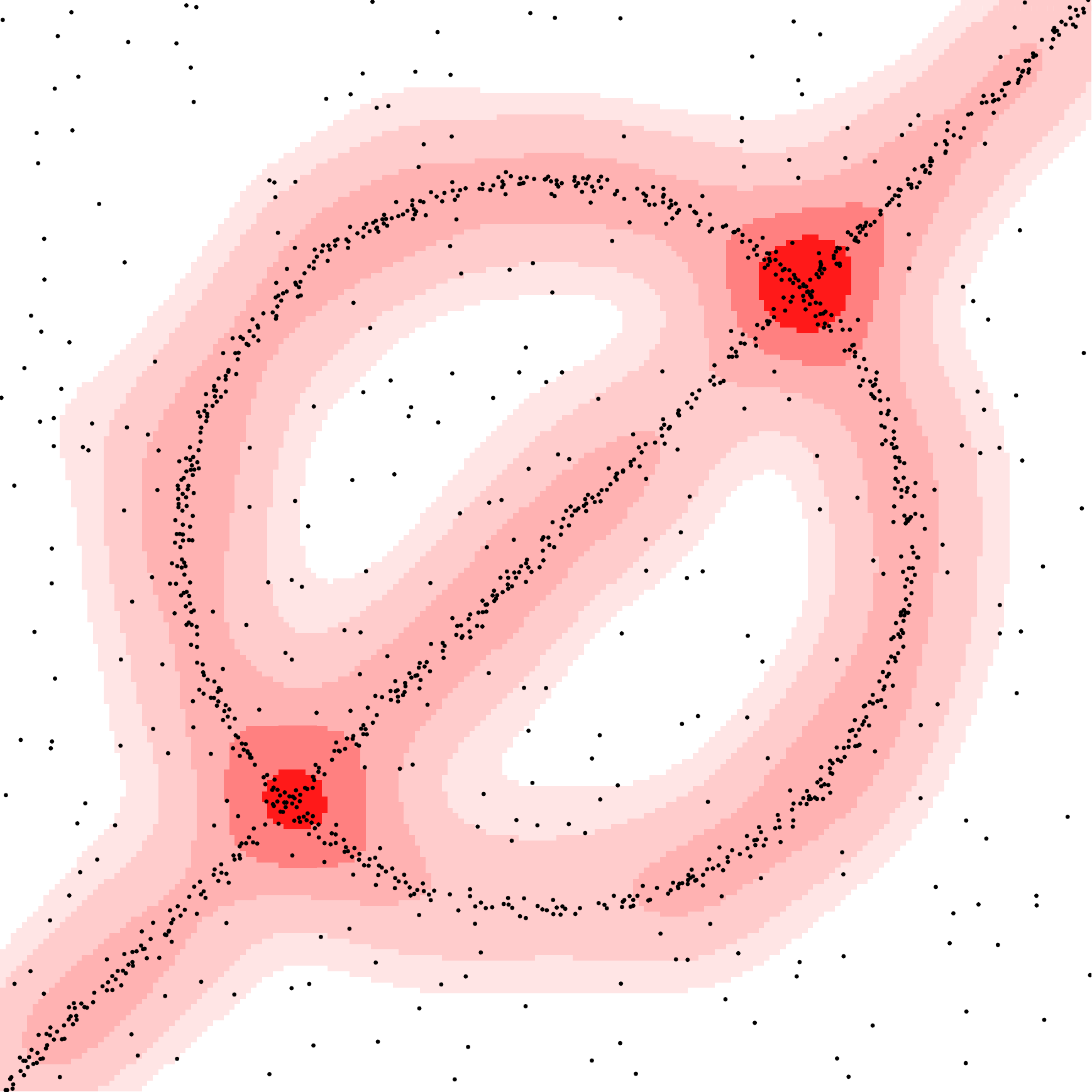}
\hspace{0.01\linewidth}
\includegraphics[width=0.32\linewidth]{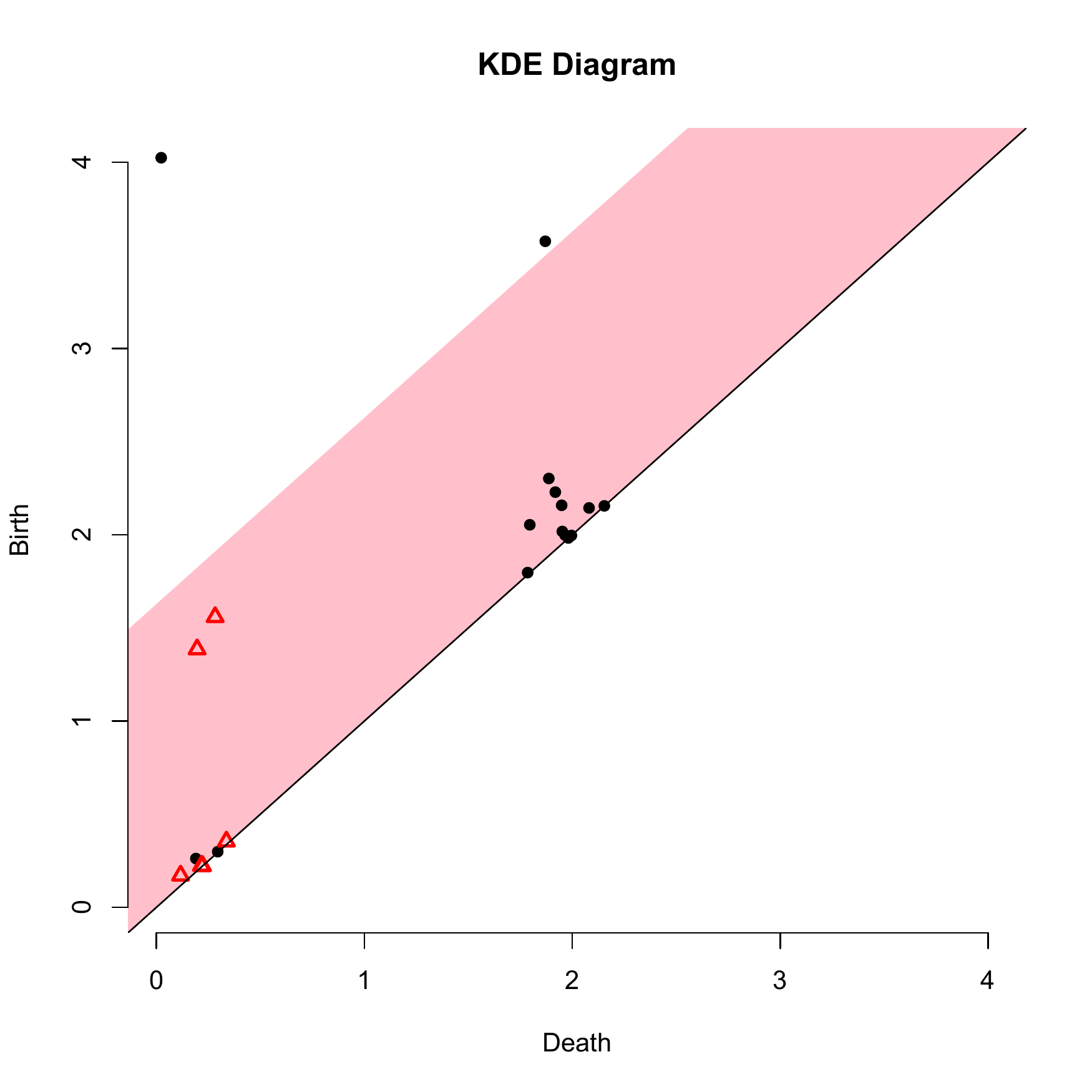}
\hspace{0.01\linewidth}
\includegraphics[width=0.32\linewidth]{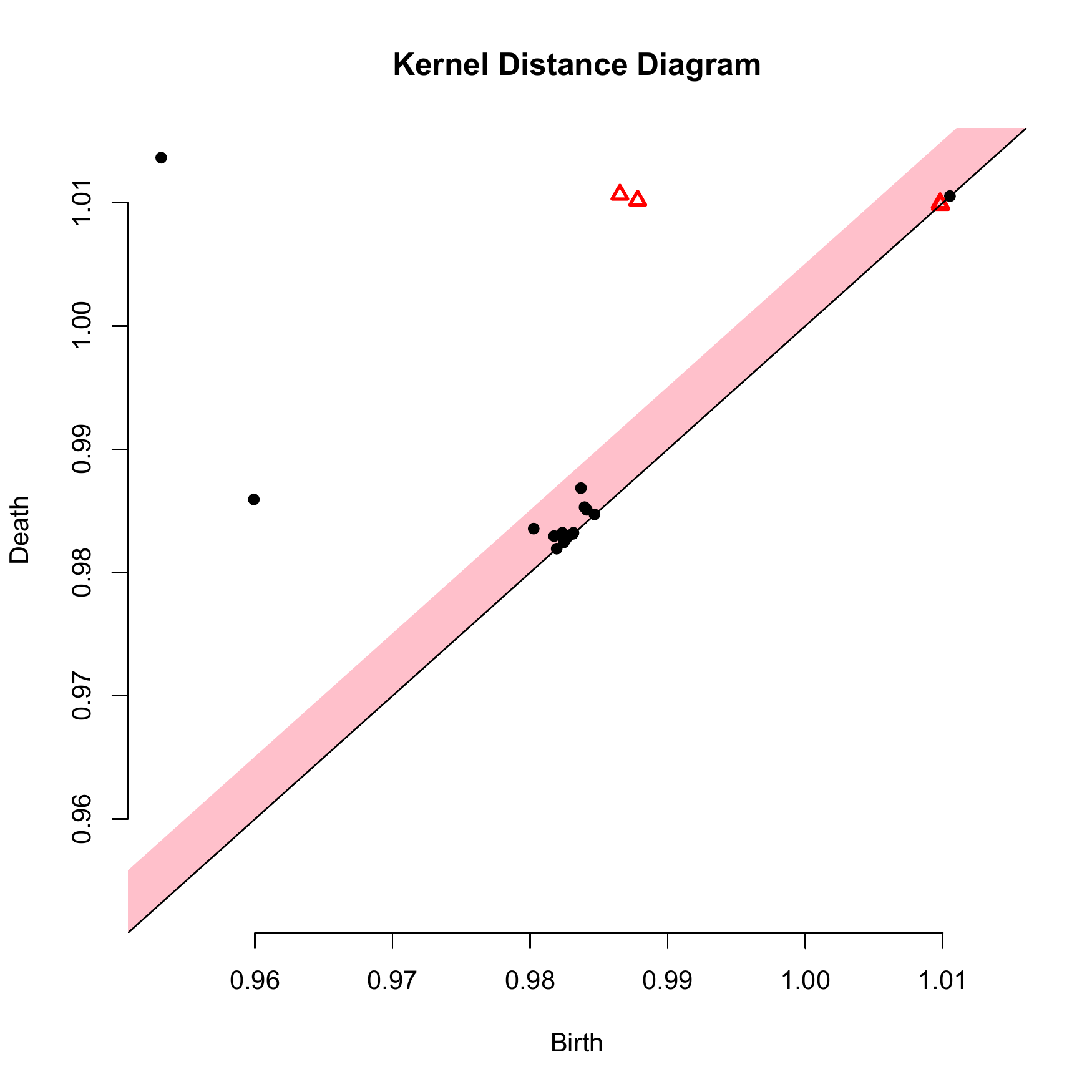}
\vspace{-.11in}
\caption{\label{fig:circleline}
\small \sffamily 
Example with $10{,}000$ points in $[0,1]^2$ generated on a circle or line with $N(0,0.005)$ noise; $25\%$ of points are uniform background noise.  The generating function is reconstructed with \kde\ with $\sigma = 0.05$ (upper left), and its persistence diagram based on the superlevel set filtration is shown (upper middle).  
A coreset~\cite{big-kde} of the same dataset with only $1{,}384$ points (lower left) and persistence diagram (lower middle) are shown, again using \kde.  This associated confidence interval contains the dimension $1$ homology features (red triangles) suggesting they are noise; this is because it models data as iid -- but the coreset data is not iid, it subsamples more intelligently.   
We also show persistence diagrams of the original data based on the sublevel set filtration of the standard distance function (upper right, with no useful features due to noise) and the kernel distance (lower right).}
\end{figure}

A very recent addition to this progression is the new TDA package for R~\cite{FKLM14}; it includes built in functions to analyze point sets using Hausdorff distance, distance-to-a-measure, $k$-nearest neighbor density estimators, kernel density estimates, and kernel distance.  The example in Figure \ref{fig:circleline} used this package to generate persistence diagrams.  
While, the stability of the Hausdorff distance is classic~\cite{CC12,Edelsbrunner1993}, and the distance-to-a-measure~\cite{ChazalCohen-SteinerMerigot2011} and $k$-nearest neighbor distances have been shown robust to various degrees~\cite{BCCSDR11}, this paper is the first to analyze the stability of kernel density estimates and the kernel distance in the context of geometric inference.
Some recent manuscripts show related results.  
Bobrowski \etal~\cite{BMT14} consider kernels with finite support, and describe approximate confidence intervals on the superlevel sets, which recover approximate persistence diagrams.  
Chazal \etal~\cite{CFLMRW14} explore the robustness of the kernel distance in bootstrapping-based analysis.

In particular, we show that the kernel distance and kernel density estimates, using the Gaussian kernel, inherit some reconstruction properties of distance-to-a-measure, that these functions can also be approximately reconstructed using weighted (Vietoris-)Rips complexes~\cite{BuchetChazalOudot2013}, and that under certain regimes can infer homotopy of compact sets.  
Moreover, we show further robustness advantages of the kernel distance and kernel density estimates, including that they possess small coresets~\cite{Phillips2013,big-kde} for persistence diagrams and inference.  

\vspace{-.1in}
\subsection{
Kernels, Kernel Density Estimates, and Kernel Distance}
\label{sec:kernel}
A \emph{kernel} is a non-negative similarity measure $K : \Rspace^d \times \Rspace^d \to \Rspace^+$; more similar points have higher value.  For any fixed $p \in \Rspace^d$, a kernel $K(p,\cdot)$ can be normalized to be a probability distribution; that is $\int_{x \in \Rspace^d} K(p,x) \dir{x} = 1$.  
For the purposes of this article we focus on the Gaussian kernel defined as 
$K(p,x) =  \sigma^2 \exp(-\|p - x\|^2 / 2\sigma^2)$. \footnote{
$K(p,x)$ is normalized so that $K(x,x) = 1$ for $\sigma = 1$.  
The choice of coefficient $\sigma^2$ is not the standard normalization, but it is perfectly valid as it scales everything by a constant.  It has the property that $\sigma^2 - K(p,x) \approx \|p-x\|^2/2$ for $\|p-x\|$ small.}

A \emph{kernel density estimate}~\cite{Sil86,Sco92,DG84,DL01} is a way to estimate a continuous distribution function over $\Rspace^d$ for a finite point set $P \subset \Rspace^d$; they have been studied and applied in a variety of contexts, for instance, under subsampling~\cite{Phillips2013,big-kde,BFLRSW13}, motion planning~\cite{PEKK12}, multimodality~\cite{Sil81,EFR12}, and surveillance~\cite{EDHD02}, road reconstruction~\cite{BE12}.
Specifically, 
\[
\kde_P(x) = \frac{1}{|P|} \sum_{p \in P} K(p,x). 
\]

The \emph{kernel distance}~\cite{HB05,glaunesthesis,
JoshiKommarajuPhillips2011,PhillipsVenkatasubramanian2011} (also called \emph{current distance} or \emph{maximum mean discrepancy}) is a metric~\cite{Muller1997,SGFSL10} between two point sets $P$, $Q$ (as long as the kernel used is characteristic~\cite{SGFSL10}, a slight restriction of being positive definite~\cite{Aronszajn1950,Wah99}, this includes the Gaussian and Laplace kernels).
Define a similarity between the two point sets as 
\[
\kappa(P,Q) = \frac{1}{|P|}\frac{1}{|Q|} \sum_{p \in P} \sum_{q \in Q} K(p,q).
\] 
Then the kernel distance between two point sets is defined as \vspace{-1mm}
\[
D_K(P,Q) = \sqrt{\kappa(P,P) + \kappa(Q,Q) - 2 \kappa(P,Q)}.
\]
When we let point set $Q$ be a single point $x$, then $\kappa(P,x) = \kde_P(x)$.  

Kernel density estimates can apply to any measure $\mu$ (on $\Rspace^d$) as 
$
\kde_\mu(x) = \int_{p \in \Rspace^d} K(p,x) \dir{\mu(p)}.  
$ 
The similarity between two measures is 
$
\kappa(\mu,\nu) = \int_{(p,q) \in \Rspace^d \times \Rspace^d} K(p,q) \dir{\textsf{m}_{\mu,\nu}(p,q)}, 
$ 
where $\textsf{m}_{\mu,\nu}$ is the product measure of $\mu$ and $\nu$ ($\textsf{m}_{\mu,\nu}:= \mu \otimes \nu$), and then the kernel distance between two measures $\mu$ and $\nu$ is still a metric, defined as  
$
D_K(\mu,\nu) = \sqrt{\kappa(\mu,\mu) + \kappa(\nu,\nu) - 2 \kappa(\mu,\nu)}.
$ 
When the measure $\nu$ is a Dirac measure at $x$ ($\nu(q) = 0$ for $x \neq q$, but integrates to $1$), then $\kappa(\mu,x) = \kde_{\mu}(x)$.  
Given a finite point set $P \subset \Rspace^d$, we can work with the empirical measure $\mu_P$ defined as $\mu_P = \frac{1}{|P|} \sum_{p \in P} \delta_p$, where $\delta_p$ is the Dirac measure on $p$, and $D_K(\mu_P, \mu_Q) = D_K(P,Q)$.  

If $K$ is positive definite, it is said to have the reproducing property~\cite{Aronszajn1950,Wah99}.
This implies that $K(p,x)$ is an inner product in some reproducing kernel Hilbert space (RKHS) $\Eu{H}_K$.  Specifically, there is a lifting map $\phi : \Rspace^d \to \Eu{H}_K$ so that $K(p,x) = \langle \phi(p), \phi(x) \rangle_{\Eu{H}_K}$, and moreover the entire set $P$ can be represented as $\Phi(P) = \sum_{p \in P} \phi(p)$, which is a single element of $\Eu{H}_K$ and has a norm $\|\Phi(P)\|_{\Eu{H}_K} = \sqrt{\kappa(P,P)}$.  A single point $x \in \Rspace^d$ also has a norm $\|\phi(x)\|_{\Eu{H}_K} = \sqrt{K(x,x)}$ in this space.


\subsection{Geometric Inference and Distance to a Measure: A Review}
\label{sec:ball}

Given an unknown compact set $S \subset \Rspace^d$ and a finite point cloud $P \subset \Rspace^d$ that comes from $S$ under some process, geometric inference aims to recover topological and geometric properties of $S$ from $P$. The offset-based (and more generally, the distance function-based) approach for geometric inference reconstructs a geometric and topological approximation of $S$ by offsets from $P$ (e.g. \cite{ChazalCohen-SteinerLieutier2009b,ChazalCohen-SteinerLieutier2009,ChazalCohen-SteinerMerigot2011,ChazalLieutier2005,ChazalLieutier2006}).

Given a compact set $S \subset \Rspace^d$, we can define a \emph{distance function} $f_S$ to  $S$; a common example is $f_S(x) = \inf_{y \in S} \|x - y\|$ (i.e. $\alpha$-shapes).
The offsets of $S$ are the sublevel sets of $f_S$, denoted $(S)^r = f_S^{-1}([0,r])$.
Now an approximation of $S$ by another compact set $P \subset \Rspace^d$ (e.g. a finite point cloud) can be quantified by the Hausdorff distance $d_H(S, P):=\|f_S - f_P\|_{\infty} = \inf_{x\in \Rspace^d}|f_S(x) - f_{P}(x)|$ of their distance functions. 
The intuition behind the inference of topology is that if $d_H(S, P)$ is small, thus $f_S$ and $f_{P}$ are close, and subsequently, $S$, ${(S)}^r$ and ${(P)}^r$ carry the same topology for an appropriate scale $r$. 
In other words, to compare the topology of offsets ${(S)}^r$ and ${(P)}^r$, 
we require Hausdorff stability with respect to their distance functions $f_S$ and $f_P$.  
 An example of an offset-based topological inference result is formally stated as follows (as a particular version of the reconstruction Theorem 4.6 in \cite{ChazalCohen-SteinerLieutier2009}), 
 where the \emph{reach} of a compact set $S$, $\reach(S)$, is defined as the minimum distance between $S$ and its medial axis \cite{Merigot2010}. 

\begin{theorem}[Reconstruction from $f_P$ \cite{ChazalCohen-SteinerLieutier2009}]
\label{thm:recon-fP}
Let $S, P \subset \Rspace^d$ be compact sets such that $\reach(S) > R$ and $\eps:= d_H(S, P) < R/17$. Then $(S)^\eta$ and ${(P)}^{r}$ are homotopy equivalent for sufficiently small $\eta$ (e.g., $0 < \eta < R$) if $4\eps \leq r <  R-3\eps$.
\end{theorem}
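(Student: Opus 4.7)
The plan is to combine Hausdorff stability of distance functions with a non-smooth critical-point deformation argument in the sense of Grove--Lieutier--Chazal. Two basic ingredients drive the proof. First, the stability identity $\|f_S - f_P\|_\infty = d_H(S,P) = \eps$ gives the interleaving of sublevel sets
\[
(S)^{r-\eps} \;\subseteq\; (P)^r \;\subseteq\; (S)^{r+\eps}.
\]
Second, the deformation lemma for distance functions states that if a (non-smooth) distance function $f$ has no critical values in $[a,b]$, then $f^{-1}([0,a])$ and $f^{-1}([0,b])$ are homotopy equivalent via flow along the generalized gradient.

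\textbf{Main steps.} I would first apply the deformation lemma to $f_S$. Since $\reach(S) > R$, the function $f_S$ has no critical values in $(0,R)$, so $(S)^\eta \simeq (S)^{\eta'}$ whenever $0 < \eta, \eta' < R$. Under the hypothesis $4\eps \le r < R - 3\eps$, both $r-\eps$ and $r+\eps$ lie in $(0,R)$, so $(S)^{r-\eps} \simeq (S)^\eta \simeq (S)^{r+\eps}$ for any small $\eta$. In particular, the composite inclusion $(S)^{r-\eps} \hookrightarrow (S)^{r+\eps}$ from the sandwich is a homotopy equivalence. It then remains to show that at least one of the two sandwich inclusions $(S)^{r-\eps} \hookrightarrow (P)^r$ or $(P)^r \hookrightarrow (S)^{r+\eps}$ is already a homotopy equivalence, because if a composition $A \hookrightarrow B \hookrightarrow C$ is a homotopy equivalence and one factor is, then so is the other, forcing $B \simeq A \simeq C$.

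\textbf{Main obstacle.} The hard step is establishing that sandwich inclusion, and it proceeds through a quantitative lower bound on the generalized gradient $\|\nabla f_P(x)\|$ on the annulus $\{x : r \le f_P(x) \le r + \eps\}$ (with a symmetric estimate on the lower annulus). The intuition is that for any $x$ in this annulus there exists $x' \in S$ close to $x$ with $f_S(x')$ of order $r$, where the reach assumption guarantees $\|\nabla f_S(x')\|$ is bounded below by an explicit function of $r/R$ (via control of the distance to the medial axis of $S$). A careful triangle-inequality argument then transfers this gradient bound to $\nabla f_P(x)$ with an additive loss proportional to $\eps$. The numerology $4\eps \le r < R - 3\eps$ is exactly what is needed for the transferred bound to remain strictly positive: the lower bound on $r$ keeps $f_P$ above the noise floor introduced by the $\eps$-perturbation, and the upper bound keeps us strictly inside the reach-controlled regime. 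Balancing these constraints against the loss incurred in the gradient transfer produces the sharp coefficient $R/17$.

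\textbf{Assembly.} Once the gradient lower bound is in place, one application of the deformation lemma to $f_P$ between levels $r-\eps$ and $r+\eps$ (say) identifies $(P)^{r+\eps}$ with $(P)^{r-\eps}$ up to homotopy, and a short diagram chase using the $f_S$-retractions from Step 1 shows that the sandwich inclusions are all homotopy equivalences. Chaining through $(S)^\eta \simeq (S)^{r-\eps} \hookrightarrow (P)^r \hookrightarrow (S)^{r+\eps} \simeq (S)^\eta$ concludes $(S)^\eta \simeq (P)^r$. The main technical content I expect to spend time on is the gradient transfer estimate and the final constant optimization; everything else is a formal consequence of standard properties of distance functions.
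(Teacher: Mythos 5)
The paper does not actually prove this theorem: it is imported verbatim as ``a particular version of the reconstruction Theorem 4.6 in \cite{ChazalCohen-SteinerLieutier2009}'' and used only as background, so there is no in-paper proof to compare against. Your architecture --- Hausdorff interleaving of sublevel sets, the isotopy lemma for the flow of the generalized gradient of $f_S$, and a transfer of critical-point information from $f_S$ to $f_P$ on an annulus around level $r$, followed by a diagram chase on the nested inclusions --- is exactly the architecture of the cited proof, and the way you propose to close the sandwich (show $f_P$ has no critical values on $[r-\eps,r+\eps]$, then use the $f_S$-equivalences) is sound in outline.

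The gap is in the step you yourself label the main obstacle. Generalized gradients of distance functions are \emph{not} stable under $L^\infty$-perturbation of the underlying compact sets: $\|f_S-f_P\|_\infty\le\eps$ gives no pointwise control of $\nabla f_P(x)$ in terms of $\nabla f_S$ at a nearby point, and no triangle inequality yields an ``additive loss proportional to $\eps$'' (one can perturb a set within Hausdorff distance $\eps$ and create new critical points with gradient exactly $0$ at prescribed locations). What the cited proof actually uses is the \emph{critical point stability theorem}: if $x$ is an $\alpha$-critical point of $f_P$, then $f_S$ has an $\alpha'$-critical point within distance $2\sqrt{\eps\, f_P(x)}$ of $x$, with $\alpha'\le \alpha+2\sqrt{\eps/f_P(x)}$. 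The loss is of order $\sqrt{\eps/r}$, not $\eps$, and this is precisely where the numerology comes from: a critical point of $f_P$ at level $r\ge 4\eps$ would force an $\alpha'$-critical point of $f_S$ with $\alpha'\le 2\sqrt{\eps/r}<1$ at level below roughly $r+\eps+2\sqrt{\eps r}$, contradicting $\reach(S)>R$ provided this stays below $R$; combining these constraints (in the general $\mu$-reach form, $\eps<\mu^2 r_\mu/(5\mu^2+12)$, specialized to $\mu=1$) is what produces the constant $R/17$ rather than anything a direct optimization of an additive-in-$\eps$ bound would give. Establishing that stability theorem is itself a genuinely geometric argument in \cite{ChazalCohen-SteinerLieutier2009}, not a formal consequence of $\|f_S-f_P\|_\infty=\eps$. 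So your skeleton is the right one, but the load-bearing estimate is mischaracterized, and as written the proof would not close without importing that theorem.
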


Here $\eta < R$ ensures that the topological properties of $(S)^\eta$ and $(S)^r$ are the same, and the $\eps$ parameter ensures $(S)^r$ and $(P)^r$ are close.  
Typically $\eps$ is tied to the density with which a point cloud $P$ is sampled from $S$.  

For function $\phi : \mathbb{R}^d \to \mathbb{R}^+$ to be \emph{distance-like} it should satisfy the following properties:
\begin{itemize} \denselist
\item (D1)  $\phi$ is $1$-Lipschitz:  For all $x,y \in \Rspace^d$, $|\phi(x) - \phi(y)| \leq \|x-y\|$.  
\item (D2)  $\phi^2$ is $1$-semiconcave:  The map $x \in \Rspace^d \mapsto (\phi(x))^2 - \|x\|^2$ is concave.  
\item (D3)   $\phi$ is proper: $\phi(x)$ tends to the infimum of its domain (e.g., $\infty$) as $x$ tends to infinity.  
\end{itemize}
In addition to the Hausdorff stability property stated above, as explained in \cite{ChazalCohen-SteinerMerigot2011}, $f_S$ is distance-like.  These three properties are paramount for geometric inference (e.g. \cite{ChazalCohen-SteinerLieutier2009,Lieutier2004}).  
(D1) ensures that $f_S$ is differentiable almost everywhere and the medial axis of $S$ has zero $d$-volume~\cite{ChazalCohen-SteinerMerigot2011}; and (D2) is a crucial technical tool, e.g., in proving the existence of the flow of the gradient of the distance function for topological inference \cite{ChazalCohen-SteinerLieutier2009}.

\paragraph{Distance to a measure.}
Given a probability measure $\mu$ on $\Rspace^d$ and a parameter $m_0 > 0$ smaller than the total mass of $\mu$,
the \emph{distance to a measure} $\dC_{\mu,m_0}: \Rspace^n \to \Rspace^+$ \cite{ChazalCohen-SteinerMerigot2011} is defined for any point $x \in \Rspace^d$ as 
\[
\dC_{\mu, m_0}(x) = \left(\frac{1}{m_0} \int_{m = 0}^{m_0} (\delta_{\mu,m}(x))^2 \dir{m}\right)^{1/2},
\;\; \text{ where } \;\;
\delta_{\mu,m}(x) = \inf \left\{r > 0 :  \mu(\bar{B}_r(x)) \geq m \right\},
\]
It has been shown in \cite{ChazalCohen-SteinerMerigot2011} that $\dC_{\mu, m_0}$ is a distance-like function (satisfying (D1), (D2), and (D3)), and:
\begin{itemize} 
\item (M4)   [Stability] For probability measures $\mu$ and $\nu$ on $\Rspace^d$ and $m_0>0$, then $\| \dC_{\mu, m_0} - \dC_{\nu,m_0}\|_{\infty} \leq \frac{1}{\sqrt{m_0}} W_2 (\mu, \nu)$.
\end{itemize}
Here $W_2$ is the \emph{Wasserstein distance}~\cite{Villani2003}:  
$
W_2(\mu, \nu) = \inf_{\pi \in \Pi(\mu,\nu)} \left( \int_{\Rspace^d \times \Rspace^d}||x-y||^2 \dir{\pi}(x,y) \right)^{1/2}
$
between two measures, where $\dir{\pi}(x,y)$ measures the amount of mass transferred from location $x$ to location $y$ and $\pi \in \Pi(\mu, \nu)$ is a transference plan \cite{Villani2003}.  


Given a point set $P$, the sublevel sets of $\dC_{\mu_P,m_0}$ can be described as the union of balls \cite{GuibasMerigotMorozov2011}, and then one can algorithmically estimate the topology (e.g., persistence diagram) with weighted alpha-shapes \cite{GuibasMerigotMorozov2011} and weighted Rips complexes \cite{BuchetChazalOudot2013}. 

\subsection{Our Results}
\label{sec:results}

We show how to estimate the topology (e.g., approximate persistence diagrams, infer homotopy of compact sets) using superlevel sets of the kernel density estimate of a point set $P$.  
We accomplish this by showing that a similar set of properties hold for the kernel distance with respect to a measure $\mu$,  
(in place of distance to a measure $\dC_{\mu, m_0}$),  defined as 
\[
\dK_\mu(x) = D_K(\mu,x) = \sqrt{\kappa(\mu,\mu) + \kappa(x,x) - 2\kappa(\mu,x)}.
\]
This treats $x$ as a probability measure represented by a Dirac mass at $x$.  
Specifically, we show $\dK_\mu$ is distance-like (it satisfies (D1), (D2), and (D3)), so it inherits reconstruction properties of $\dC_{\mu,m_0}$.  Moreover, it is stable with respect to the kernel distance:
\begin{itemize}
\item (K4) [Stability] If $\mu$ and $\nu$ are two measures on $\Rspace^d$, then 
$\| \dK_\mu - \dK_\nu \|_\infty \leq D_K(\mu,\nu)$. 
\end{itemize}

In addition, we show how to construct these topological estimates for $\dK_\mu$ using weighted Rips complexes, following power distance machinery introduced in \cite{BuchetChazalOudot2013}.  That is, a particular form of power distance permits a multiplicative approximation with the kernel distance.  

We also describe further advantages of the kernel distance.  
(i) Its sublevel sets conveniently map to the superlevel sets of a kernel density estimate.  
(ii) It is Lipschitz with respect to the smoothing parameter $\sigma$ when the input $x$ is fixed.  
(iii) As $\sigma$ tends to $\infty$ for any two probability measures $\mu,\nu$, the kernel distance is bounded by the Wasserstein distance:  $\lim_{\sigma \to \infty} D_K(\mu,\nu) \leq W_2(\mu,\nu)$.  
(iv) It has a small coreset representation, which allows for sparse representation and efficient, scalable computation.  In particular, an \emph{$\eps$-kernel sample}~\cite{JoshiKommarajuPhillips2011,Phillips2013,big-kde} $Q$ of $\mu$ is a finite point set whose size only depends on $\eps>0$ and such that $\max_{x \in \Rspace^d} |\kde_\mu(x) - \kde_{\mu_Q}(x) | = \max_{x \in \Rspace^d} |\kappa(\mu,x) - \kappa(\mu_Q,x) | \leq \eps$.  These coresets preserve inference results and persistence diagrams.

\section{Kernel Distance is Distance-Like} 
\label{sec:prop}

In this section we prove $\dK_\mu$ satisfies (D1), (D2), and (D3); hence it is distance-like.  
Recall we use the $\sigma^2$-normalized Gaussian kernel $K_\sigma(p,x) = \sigma^2 \exp(-\|p - x\|^2 / 2 \sigma^2)$.  For ease of exposition, unless otherwise noted, we will assume $\sigma$ is fixed and write $K$ instead of $K_\sigma$.  

\subsection{Semiconcave Property for $\dK_\mu$}

\begin{lemma}[D2]
$(\dK_\mu)^2$ is $1$-semiconcave: the map $x \mapsto (\dK_\mu(x))^2 - \|x\|^2$ is concave. 
\label{lem:semi}

\end{lemma}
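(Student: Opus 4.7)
The plan is to reduce the claim to a direct Hessian calculation on the Gaussian kernel. First I would expand the square using the definition of $\dK_\mu$ together with $\kappa(x,x) = K(x,x) = \sigma^2$:
\[
(\dK_\mu(x))^2 \;=\; \kappa(\mu,\mu) + \sigma^2 - 2\,\kde_\mu(x).
\]
Since $\kappa(\mu,\mu) + \sigma^2$ is constant in $x$, the map $x \mapsto (\dK_\mu(x))^2 - \|x\|^2$ is concave if and only if
\[
F_\mu(x) \;:=\; 2\,\kde_\mu(x) + \|x\|^2 \;=\; \int_{\Rspace^d} \bigl(2K(p,x) + \|x\|^2\bigr)\, d\mu(p)
\]
is convex, where I have used that $\mu$ is a probability measure to absorb $\|x\|^2$ into the integrand.

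Next, since an integral of convex functions against a probability measure is convex, it suffices to prove convexity of $g_p(x) := 2K(p,x) + \|x\|^2$ for each fixed $p$. The only real computation of the whole argument is the Hessian of the Gaussian. A direct differentiation gives
\[
\nabla^2_x K(p,x) \;=\; \tilde K(p,x)\left[\frac{(p-x)(p-x)^{\!\top}}{\sigma^2} - I\right], \qquad \tilde K(p,x) := e^{-\|p-x\|^2/(2\sigma^2)} \in (0,1],
\]
and then adding $\nabla^2 \|x\|^2 = 2I$ yields
\[
\nabla^2 g_p(x) \;=\; \frac{2\tilde K(p,x)}{\sigma^2}\,(p-x)(p-x)^{\!\top} \;+\; 2\bigl(1-\tilde K(p,x)\bigr)\,I.
\]

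Both summands are positive semidefinite: the first is a nonnegative scalar times a rank-one PSD matrix, and the second is a nonnegative multiple of $I$ because $\tilde K(p,x) \le 1$. Hence $\nabla^2 g_p \succeq 0$, $g_p$ is convex, and integrating against $\mu$ shows $F_\mu$ is convex, which gives the claim.

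The main subtlety, and the place I would be careful to flag, is that the $1$-semiconcavity constant comes out to exactly $1$ only because of the $\sigma^2$-normalization of $K$. That normalization does two things at once: it ensures $\tilde K(p,x) \in (0,1]$ so that the $-\tilde K\,I$ piece of $\nabla^2 K$ is dominated by the $+2I$ coming from $\nabla^2\|x\|^2$, and it makes $\kappa(x,x)$ constant, eliminating the self-similarity term from the $x$-dependence entirely. A different coefficient on $K$ would shift the semiconcavity constant away from $1$ and (D2) would need to be restated accordingly.
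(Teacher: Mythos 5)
Your proof is correct, and it follows the same overall reduction as the paper: the terms $\kappa(\mu,\mu)$ and $\kappa(x,x)=\sigma^2$ are constants, so everything comes down to the pointwise convexity of $x \mapsto 2K(p,x)+\|x\|^2$, integrated against the probability measure $\mu$. Where you differ is in how that pointwise fact is established. The paper computes a scalar second derivative only along the radial direction $u = x-p$, invoking symmetry of $K$ about $p$, and then maximizes the resulting one-variable function $s(y)=2(y-1)e^{-y/2}-2$ over $y=\|u\|^2/\sigma^2$ to conclude it is at most $4e^{-3/2}-2<0$. (As written, the paper's displayed formula actually has a sign slip — the radial second derivative of $-2K$ is $-2(\|u\|^2/\sigma^2-1)e^{-\|u\|^2/2\sigma^2}$, not $+2(\cdots)$ — though both signed versions are nonpositive, so the conclusion survives.) Your route computes the full Hessian and exhibits $\nabla^2 g_p$ as a sum of two manifestly positive semidefinite matrices, $\frac{2\tilde K}{\sigma^2}(p-x)(p-x)^{\top}$ and $2(1-\tilde K)I$. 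This buys two things: it handles all directions at once, closing the gap left by the paper's appeal to symmetry (the tangential eigenvalue $-\tilde K$ of $\nabla^2 K$ must also be checked against the $+2I$, which your decomposition does automatically), and it isolates exactly where the $\sigma^2$-normalization is used, namely in $\tilde K \le 1$ — a point your closing remark correctly identifies. The paper's one-dimensional computation, by contrast, yields the slightly stronger quantitative fact that the radial second derivative of $T$ is bounded away from zero by about $1.1$, which your PSD argument does not give (and does not need).
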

\begin{proof}
Let $T(x) =  (\dK_\mu(x))^2 - \|x\|^2$.  The proof will show that the second derivative of $T$ along any direction is nonpositive.  
We can rewrite 
\begin{align*}
T(x) 
&= 
\kappa(\mu,\mu) + \kappa(x,x) - 2\kappa(\mu,x) - \|x\|^2 
\\ &= 
\kappa(\mu,\mu) + \kappa(x,x) - \int_{p \in \Rspace^d} (2K(p,x) +\|x\|^2) \dir{\mu(p)}.
\end{align*}
Note that both $\kappa(\mu,\mu)$ and $\kappa(x,x)$ are absolute constants, so we can ignore them in the second derivative.  
Furthermore, by setting $t(p,x) = -2K(p,x) - \|x\|^2$, the second derivative of $T(x)$ is nonpositive if the second derivative of $t(p,x)$ is nonpositive for all $p,x \in \Rspace^d$.  
First note that the second derivative of $-\|x\|^2$ is a constant $-2$ in every direction.  
The second derivative of $K(p,x)$ is symmetric about $p$, so we can consider the second derivative along any vector $u = x-p$,
\[
\frac{\dir{}^2}{\dir u^2} t(p,x) 
= 
2\left(\frac{ \|u\|^2}{\sigma^2} -1\right) \exp\left(-\frac{\| u \|^2}{2\sigma^2}\right) - 2.
\]
This reaches its maximum value at $\| u \| = \|x-p\| = \sqrt{3}\sigma$ where it is $4\exp(-3/2)-2 \approx -1.1$; this follows setting the derivative of $s(y) = 2(y-1) \exp(-y/2)-2$ to $0$, ($\frac{\dir{}}{\dir y} s(y) = (1/2)(3-y)\exp(-y/2)$), substituting $y = \|u\|^2/\sigma^2$.  
\end{proof}

We also note in Appendix \ref{app:dist-like} that semiconcavity follows trivially in the RKHS $\Eu{H}_K$.  

\subsection{Lipschitz Property for $\dK_\mu$}
\label{sec:lipschitz}

We generalize a (folklore, see \cite{ChazalCohen-SteinerMerigot2011}) relation between semiconcave and Lipschitz functions and prove it for completeness.  
A function $f$ is $\ell$-semiconcave if the function $T(x) = (f(x))^2 - \ell \|x\|^2$ is concave.  

\begin{lemma}
Consider a twice-differentiable function $g$ and a parameter $\ell \geq 1$.  
If $(g(x))^2$ is $\ell$-semiconcave, then $g(x)$ is $\ell$-Lipschitz.  
\label{lem:semi-Lip}
\end{lemma}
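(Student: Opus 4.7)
The plan is to restrict everything to an arbitrary line $\{x+tu : t\in\Rspace\}$ with $\|u\|=1$ and play the nonnegativity of $g^2$ against a global one-variable quadratic upper bound on $F(t) := g(x+tu)^2$ in order to isolate a pointwise bound on $\nabla g(x)$.

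The first step is straightforward. Since $T(y)=g(y)^2-\ell\|y\|^2$ is concave and $\|x+tu\|^2$ has second derivative $2$ in $t$, restricting the hypothesis to the line gives $F''(t)\leq 2\ell$ for every $t$. Equivalently, $t\mapsto F(t)-\ell t^2$ is concave and hence sits below its tangent at $t=0$, yielding the global bound
\[
F(t) \;\leq\; F(0) + F'(0)\,t + \ell\,t^2, \qquad t\in\Rspace.
\]
Substituting $F(0)=g(x)^2$ and $F'(0)=2g(x)\langle \nabla g(x),u\rangle$ and completing the square in $t$ rewrites this as
\[
g(x+tu)^2 \;\leq\; \bigl(g(x)+\langle \nabla g(x),u\rangle\, t\bigr)^2 \;+\; \bigl(\ell-\langle \nabla g(x),u\rangle^2\bigr)\,t^2.
\]

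The crucial step is a sign check. If $g(x)\ne 0$ and $\langle \nabla g(x),u\rangle^2>\ell$ at some $x$, choose $t^\ast=-g(x)/\langle \nabla g(x),u\rangle$; this zeros out the first summand on the right, leaving the bound strictly negative while the left side is a square. That contradiction forces $|\langle \nabla g(x),u\rangle|\leq\sqrt{\ell}$ at every $x$ with $g(x)\ne 0$. The remaining case $g(x)=0$ is handled by Taylor expansion: the same inequality collapses to $F(t)\leq \ell t^2$, while $F(t)=t^2\langle \nabla g(x),u\rangle^2+O(t^3)$ as $t\to 0$, so once again $\langle \nabla g(x),u\rangle^2\leq\ell$. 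Since $u$ was arbitrary, $\|\nabla g\|\leq\sqrt{\ell}$ pointwise, so $g$ is $\sqrt{\ell}$-Lipschitz and in particular $\ell$-Lipschitz once $\ell\geq 1$.

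The part I expect to be the main obstacle is that the naive infinitesimal consequence of $F''(0)\leq 2\ell$ is $\langle \nabla g,u\rangle^2 + g\,u^\top\nabla^2 g\,u \leq \ell$, which entangles Hessian and gradient information and does not on its own bound $\|\nabla g\|$. The essential move is to promote the pointwise estimate $F''(0)\leq 2\ell$ to the \emph{global} quadratic majorant $F(t)\leq F(0)+F'(0)t+\ell t^2$ and then test it at the carefully chosen $t^\ast$: it is exactly at that point that the Hessian contribution cancels out, leaving the clean gradient bound that yields the Lipschitz property.
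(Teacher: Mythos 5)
Your proof is correct, and it takes a genuinely different --- and in fact more careful --- route than the paper's. The paper argues by contrapositive: it assumes a directional derivative $c = (d/du)\,g(x') > \ell$ at some point $x'$ and computes the second derivative of $f = g^2 - \ell\|x\|^2$ there, claiming it equals $2c^2 - 2\ell > 0$. That computation silently drops the term $2\,g(x')\,(d^2/du^2)\,g(x')$ from the product rule, which is exactly the Hessian contribution you flag as the main obstacle; if $g(x')\,g''(x')$ is sufficiently negative, the claimed positivity of $f''$ at $x'$ fails, so the paper's local argument as written has a gap. Your proof sidesteps this by exploiting concavity \emph{globally}: along the line you get the majorant $F(t)\le F(0)+F'(0)\,t+\ell t^2$ for all $t$, and evaluating at $t^\ast=-g(x)/\langle\nabla g(x),u\rangle$ kills the term carrying the Hessian information, leaving the clean bound $\langle\nabla g(x),u\rangle^2\le\ell$; the degenerate case $g(x)=0$ is correctly handled by the Taylor expansion, and the case $\langle\nabla g(x),u\rangle=0$ is trivial. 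What your approach buys is both rigor and a stronger conclusion: $g$ is $\sqrt{\ell}$-Lipschitz, which implies $\ell$-Lipschitz since $\ell\ge 1$ and coincides with the folklore statement at $\ell=1$, the only value actually needed for the application to $d^K_{\mu}$ in Lemma \ref{lem:Lipschitz}.
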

\begin{proof}
The proof is by contrapositive; we assume that $g(x)$ is not $\ell$-Lipschitz and then show $(g(x))^2$ cannot be $\ell$-semiconcave.  
By this assumption, then in some direction $u$, there is a point $x'$ such that $(d/du) g(x') = c > \ell \geq 1$.  

Now we examine $f(x) = (g(x))^2 - \ell \|x\|^2$ at $x = x'$, and specifically its second derivative in direction $u$. 
\begin{align*}
\frac{d}{du} f(x) \big|_{x=x'} &= 2 \left(\frac{d}{du} g(x')\right) g(x') - 2 \ell \|x'\| = 2 c \cdot g(x') - 2 \ell \|x'\|
\\
\frac{d^2}{du^2} f(x) \big|_{x=x'} &= 2 c \left(\frac{d}{du} g(x')\right) - 2 \ell = 2c^2 - 2 \ell = 2 (c^2 - \ell)
\end{align*}
Since $c^2 > c > \ell \geq 1$, then $2 (c^2 - \ell) > 0$ and $f(x)$ is not $\ell$-semiconcave at $x'$.  
\end{proof}

We can now state the following lemma as a corollary of Lemma \ref{lem:semi-Lip} and Lemma \ref{lem:semi}.  

\begin{lemma}[D1]
$\dK_\mu$ is $1$-Lipschitz on its input. 
\label{lem:Lipschitz}
\end{lemma}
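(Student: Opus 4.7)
The plan is to derive Lemma \ref{lem:Lipschitz} essentially for free, by combining the two results immediately preceding it. Lemma \ref{lem:semi} establishes that $(\dK_\mu)^2$ is $1$-semiconcave, and Lemma \ref{lem:semi-Lip} is exactly the generic machinery that converts $\ell$-semiconcavity of a squared function into $\ell$-Lipschitz continuity of the function itself. Instantiating that lemma with $g = \dK_\mu$ and $\ell = 1$ should immediately yield the claim. I would write the proof as a one-line citation of these two results.

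The one technical point to double-check is the twice-differentiability hypothesis in Lemma \ref{lem:semi-Lip}. I would address it by observing that
\[
(\dK_\mu(x))^2 = \kappa(\mu,\mu) + \sigma^2 - 2 \int_{\Rspace^d} K(p,x) \dir{\mu(p)}
\]
is a $C^\infty$ function of $x$: the Gaussian $K(p,x)$ and all of its spatial derivatives decay uniformly in $p$, justifying differentiation under the integral. Thus $\dK_\mu$ is smooth at every $x$ with $\dK_\mu(x) > 0$. Because $\dK_\mu$ is continuous on all of $\Rspace^d$, the Lipschitz bound extends by continuity to the (closed) zero set as well, so no separate case analysis is needed.

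I expect no genuine obstacle: the proof is essentially a one-sentence corollary. The only mildly awkward point—the smoothness of $\dK_\mu$ at its zeros—is a minor technicality, since the substantive work (the explicit second-derivative bound on $-2K(p,x) - \|x\|^2$, and the general semiconcave-to-Lipschitz bridge) has already been completed in Lemmas \ref{lem:semi} and \ref{lem:semi-Lip}. If one wanted a self-contained argument bypassing Lemma \ref{lem:semi-Lip}, an alternative would be to compute $\nabla \dK_\mu(x)$ directly and bound its norm using the Gaussian tails of $K$, but this would duplicate the semiconcavity computation.
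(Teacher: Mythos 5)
Your proposal matches the paper exactly: the paper states Lemma \ref{lem:Lipschitz} precisely as a corollary of Lemma \ref{lem:semi} and Lemma \ref{lem:semi-Lip}, with no further proof given. Your extra remark on the twice-differentiability hypothesis is a sensible technical addendum (and is handled correctly, since $(\dK_\mu)^2$ is smooth and, by Cauchy--Schwarz in the RKHS, $\dK_\mu$ is bounded away from zero unless $\mu$ is a Dirac mass), but it is not needed beyond what the paper already asserts.
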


\subsection{Properness of $d^K_{\mu}$}

Finally, for $\dK_\mu$ to be distance-like, we need to show it is proper when its range is restricted to be less than $c_\mu := \sqrt{\kappa(\mu, \mu) + \kappa(x,x)}$. Here, the value of $c_\mu$ depends on $\mu$ not on $x$ since $\kappa(x,x) = K(x,x) = \sigma^2$.
This is required for a distance-like version \cite{ChazalCohen-SteinerMerigot2011}, Proposition 4.2) of the Isotopy Lemma (\cite{Grove1993}, Proposition 1.8).  

\begin{lemma}[D3] 
$d^K_{\mu}$ is proper. 
\label{lem:proper}
\end{lemma}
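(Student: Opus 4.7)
The plan is to reduce properness to a single limit computation: that $\kappa(\mu,x)\to 0$ as $\|x\|\to\infty$. Unpacking the definition and using $\kappa(x,x)=K(x,x)=\sigma^2$ together with the fact that $\kappa(\mu,\mu)$ does not depend on $x$, we have
\[
(\dK_\mu(x))^2 \;=\; \kappa(\mu,\mu) + \sigma^2 - 2\kappa(\mu,x),
\]
so that $c_\mu^2=\kappa(\mu,\mu)+\sigma^2$. Therefore $\dK_\mu(x)\to c_\mu$ as $\|x\|\to\infty$ iff $\kappa(\mu,x)\to 0$, and the properness requirement that sublevel sets $(\dK_\mu)^{-1}([0,c])$ be compact for every $c<c_\mu$ is exactly what this limit provides.

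The key step is a dominated convergence argument applied to $\kappa(\mu,x)=\int_{\Rspace^d} K(p,x)\,d\mu(p)$. For each fixed $p\in\Rspace^d$, Gaussian decay gives $K(p,x)=\sigma^2\exp(-\|p-x\|^2/2\sigma^2)\to 0$ as $\|x\|\to\infty$. The integrand is uniformly dominated by the constant $\sigma^2$, which is integrable against $\mu$ since $\mu$ is a (finite) probability measure. Dominated convergence then yields $\kappa(\mu,x)\to 0$, hence $\dK_\mu(x)\to c_\mu$.

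To conclude properness, combine this limit with continuity of $\dK_\mu$ (already given by the Lipschitz bound of Lemma \ref{lem:Lipschitz}): for any $c<c_\mu$ the sublevel set $(\dK_\mu)^{-1}([0,c])$ is closed by continuity and bounded by the above limit, hence compact. This is precisely the form of properness needed for the distance-like version of the Isotopy Lemma invoked in \cite{ChazalCohen-SteinerMerigot2011}; the only cosmetic difference from the usual statement (as for $f_S$) is that the ``point at infinity'' in the codomain is $c_\mu$ rather than $+\infty$, which is why the lemma restricts the range to values less than $c_\mu$.

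There is no real obstacle here: the proof is a one-line application of dominated convergence once one strips away the constants. The only bit that requires a moment of care is bookkeeping the constant contribution $\kappa(x,x)=\sigma^2$ from the chosen normalization of $K$, which is what makes the supremum of $\dK_\mu$ a finite $c_\mu$ rather than $+\infty$.
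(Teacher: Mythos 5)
Your proof is correct and follows essentially the same route as the paper's: both reduce properness to showing $\kappa(\mu,x)\to 0$ as $\|x\|\to\infty$ and then conclude that sublevel sets below $c_\mu$ are closed and bounded, hence compact. The only difference is cosmetic --- you invoke dominated convergence (with the constant dominating function $\sigma^2$) where the paper carries out an explicit $\eps/2$-splitting of $\kappa(\mu,x_i)$ into the contribution from a large ball around the origin and its complement, which is just a hand-rolled version of the same limit.
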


We delay this technical proof to Appendix \ref{app:dist-like}.  
The main technical difficulty comes in mapping standard definitions and approaches for distance functions to our function $\dK_\mu$ with a restricted range.  

Also by properness (see discussion in Appendix \ref{app:dist-like}), Lemma \ref{lem:proper} also implies that $\dK_\mu$ is a closed map and its levelset at any value $a \in [0, c_\mu)$ is compact. 
This also means that the sublevel set of $d^K_{\mu}$ (for ranges $[0, a) \subset [0, c_\mu)$) is compact. Since the levelset (sublevel set) of $\dK_{\mu}$ corresponds to the levelset (superlevel set) of $\kde_{\mu}$, we have the following corollary.  
\begin{corollary}
The superlevel sets of $\kde_{\mu}$ for all ranges with threshold $a > 0$, are compact. 
\end{corollary}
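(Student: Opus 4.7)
The plan is to derive the corollary as a direct translation of the just-established compactness of sublevel sets of $\dK_\mu$ via the algebraic identity linking $\dK_\mu$ and $\kde_\mu$. Since the author has already stated that sublevel sets of $\dK_\mu$ on ranges $[0,a) \subset [0,c_\mu)$ are compact, the only real content is (i) writing down the precise level-set correspondence and (ii) verifying that the thresholds $a > 0$ for $\kde_\mu$ land strictly inside the admissible range $[0, c_\mu)$ for $\dK_\mu$.

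First I would expand
\[
(\dK_\mu(x))^2 = \kappa(\mu,\mu) + \kappa(x,x) - 2 \kappa(\mu,x) = \kappa(\mu,\mu) + \sigma^2 - 2 \kde_\mu(x),
\]
using that $\kappa(x,x) = K(x,x) = \sigma^2$ and $\kappa(\mu,x) = \kde_\mu(x)$. Solving for $\kde_\mu$ gives
\[
\kde_\mu(x) = \tfrac{1}{2}\bigl(\kappa(\mu,\mu) + \sigma^2 - (\dK_\mu(x))^2\bigr),
\]
which is a strictly decreasing function of $(\dK_\mu(x))^2$. Hence, for any threshold $a \in \mathbb{R}$,
\[
\{x \in \Rspace^d : \kde_\mu(x) \geq a\} \;=\; \bigl\{x \in \Rspace^d : \dK_\mu(x) \leq b_a\bigr\},
\qquad b_a := \sqrt{\kappa(\mu,\mu) + \sigma^2 - 2a},
\]
whenever the radicand is nonnegative (otherwise the superlevel set is empty and trivially compact).

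Next I would check the range constraint. Recall $c_\mu = \sqrt{\kappa(\mu,\mu) + \sigma^2}$. For any $a > 0$ we get $b_a < c_\mu$ strictly, so the sublevel set $\{x : \dK_\mu(x) \leq b_a\}$ lies in the admissible range $[0, c_\mu)$ of Lemma \ref{lem:proper}. By the consequence of properness noted right after Lemma \ref{lem:proper} (sublevel sets of $\dK_\mu$ on $[0,b_a] \subset [0,c_\mu)$ are compact), the set on the right is compact, and hence so is the superlevel set on the left.

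I do not expect any real obstacle here: the identity relating $\dK_\mu$ and $\kde_\mu$ is immediate from the definitions, and the whole compactness content has been imported from Lemma \ref{lem:proper}. The one subtlety worth stating explicitly is why $a > 0$ (rather than $a \geq 0$) is the correct hypothesis: at $a = 0$ the threshold $b_a$ equals $c_\mu$, which is exactly the asymptotic value excluded in the statement of properness, and indeed $\{x : \kde_\mu(x) \geq 0\} = \Rspace^d$ is never compact in $\Rspace^d$.
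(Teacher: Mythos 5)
Your proof is correct and follows essentially the same route as the paper: the compactness is imported from the properness of $\dK_\mu$ (Lemma \ref{lem:proper}), and the superlevel set $\{\kde_\mu \geq a\}$ is identified with the sublevel set $\{\dK_\mu \leq b_a\}$ with $b_a < c_\mu$ precisely when $a > 0$. The only difference is that you make the correspondence explicit via $b_a = \sqrt{\kappa(\mu,\mu) + \sigma^2 - 2a}$, where the paper merely asserts that sublevel sets of $\dK_\mu$ correspond to superlevel sets of $\kde_\mu$; your added remark on why $a>0$ (rather than $a\geq 0$) is the right hypothesis is a welcome clarification.
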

The result in \cite{EFR12} shows that given a measure $\muP$ defined by a point set $P$ of size $n$, the $\kde_{\muP}$ has polynomial in $n$ modes; hence the superlevel sets of $\kde_{\muP}$ are compact in this setting.  The above corollary is a more general statement as it holds for any measure.

\section{Power Distance using Kernel Distance}
\label{sec:powerdistance}
A \emph{power distance} using $d^K_{\mu}$ is defined with a point set $P \subset \Rspace^d$ and a metric $d(\cdot, \cdot)$ on $\Rspace^d$,  
\[
\Dpow{P}(\mu, x) = \sqrt{\min_{p \in P}  \left( d(p,x)^2 + \dK_\mu(p)^2 \right)}.
\]
A point $x \in \mathbb{R}^d$ takes the distance under $d(p,x)$ to the closest $p \in P$, plus a weight from $\dK_\mu(p)$; thus a sublevel set of $\Dpow{P}(\mu, \cdot)$ is defined by a union of balls.  
We consider a particular choice of the distance $d(p,x) := D_K(p,x)$ which leads to a kernel version of power distance
\[
\Kpow{P}(\mu, x) = \sqrt{\min_{p \in P} \left( {D_K(p,x)}^2 + {\dK_\mu(p)}^2 \right) }.  
\]

In Section \ref{subsec:topo-est} we use $\Kpow{P}(\mu,x)$ to adapt the construction introduced in \cite{BuchetChazalOudot2013} to approximate the persistence diagram of the sublevel sets of $\dK_\mu$, using a weighted Rips filtration of $\Kpow{P}(\mu,x)$. 

Given a measure $\mu$, let $p_+ = \arg \max_{q \in \Rspace^d} \kappa(\mu,q)$, and let $P_+ \subset \Rspace^d$ be a point set that contains $p_+$.  We show below, in Theorem \ref{thm:powK-up} and Theorem \ref{thm:powK-low}, that $\frac{1}{\sqrt{2}} \dK_\mu(x) \leq \Kpow{P_+}(\mu,x) \leq \sqrt{14} \dK_\mu(x)$.  
However, constructing $p_+$ exactly seems quite difficult.  
We also attempt to use $p^\star = \arg\min_{p \in P} \|p-x\|$ in place of $p_+$ (see Section \ref{app:pow-justP}), but are not able to obtain useful bounds.  

Now consider an empirical measure $\mu_P$ defined by a point set $P$.  
We show (in Theorem \ref{thm:phat+} in Appendix \ref{app:phat+}) how to construct a point $\hat p_+$  (that approximates $p_+$) such that $D_K(P,\hat p_+) \leq (1+\delta) D_K(P, p_+)$ 
for any $\delta >0$.  For a point set $P$, the \emph{median concentration} $\Lambda_P$ is a radius such that no point $p \in P$ has more than half of the points of $P$ within $\Lambda_P$, 
and the \emph{spread} $\beta_P$ is the ratio between the longest and shortest pairwise distances.  
The runtime is polynomial in $n$ and $1/\delta$ assuming $\beta_P$ is bounded, and  that $\sigma/\Lambda_P$ and $d$ are constants.  

Then we consider $\hat P_+ = P \cup \{\hat p_+\}$, where $\hat p_+$ is found with $\delta = 1/2$ in the above construction.   Then we can provide the following multiplicative bound, proven in Theorem \ref{thm:powK-up-hat}.  The lower bound holds independent of the choice of $P$ as shown in Theorem \ref{thm:powK-low}.  

\begin{theorem}
\label{thm:powK-bnd}
For any point set $P \subset \Rspace^d$ and point $x \in \Rspace^d$, with empirical measure $\mu_P$ defined by $P$,  then 
\[
\frac{1}{\sqrt{2}} \dK_{\mu_P}(x) \leq \Kpow{\hat P_+}(\mu_P,x) \leq \sqrt{71} \dK_{\mu_P}(x).
\]
\end{theorem}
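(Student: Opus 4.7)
The plan is to derive Theorem \ref{thm:powK-bnd} as a direct combination of the two auxiliary estimates already advertised in the excerpt: the lower bound $\tfrac{1}{\sqrt{2}} \dK_\mu(x) \leq \Kpow{P}(\mu,x)$ holds for any $P$ (Theorem \ref{thm:powK-low}), and the upper bound $\Kpow{\hat P_+}(\mu_P,x) \leq \sqrt{71}\, \dK_{\mu_P}(x)$ is supplied by Theorem \ref{thm:powK-up-hat}. So the substance of the proof lies in sketching how those two auxiliary results would be argued.

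For the lower bound I would use that $D_K$ is a metric on measures (since the Gaussian kernel is characteristic). The triangle inequality $\dK_{\mu_P}(x) \leq \dK_{\mu_P}(p) + D_K(p,x)$, applied for every $p \in \hat P_+$ and squared via $(a+b)^2 \leq 2(a^2+b^2)$, gives
\[
\tfrac{1}{2}\, \dK_{\mu_P}(x)^2 \;\leq\; \dK_{\mu_P}(p)^2 + D_K(p,x)^2.
\]
Minimizing the right side over $p \in \hat P_+$ and taking square roots produces the desired bound, and visibly uses nothing about $\hat P_+$, so the lower estimate is point-set-agnostic as claimed.

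For the upper bound, I would take $\hat p_+$ itself as a witness in the minimum defining $\Kpow{\hat P_+}$, so
\[
\Kpow{\hat P_+}(\mu_P, x)^2 \;\leq\; D_K(\hat p_+, x)^2 + \dK_{\mu_P}(\hat p_+)^2,
\]
and bound each summand in terms of $\dK_{\mu_P}(x)$. The weight term $\dK_{\mu_P}(\hat p_+) = D_K(P, \hat p_+)$ is controlled by the approximate maximality from Theorem \ref{thm:phat+} at $\delta = 1/2$, yielding $\dK_{\mu_P}(\hat p_+) \leq \tfrac{3}{2}\, \dK_{\mu_P}(p_+)$, and the exact maximality $\kappa(\mu_P, p_+) \geq \kappa(\mu_P, x)$ combined with $\kappa(p_+, p_+) = \kappa(x,x) = \sigma^2$ further gives $\dK_{\mu_P}(p_+) \leq \dK_{\mu_P}(x)$. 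The distance term $D_K(\hat p_+, x)$ is controlled by the triangle inequality routed through $\mu_P$: $D_K(\hat p_+, x) \leq \dK_{\mu_P}(\hat p_+) + \dK_{\mu_P}(x)$. Substituting and tracking constants produces the upper multiplier.

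The main obstacle is arriving at exactly $\sqrt{71}$ while absorbing the $(1+\delta) = \tfrac{3}{2}$ slack from the approximate construction of $\hat p_+$. The exact-maximizer estimate (Theorem \ref{thm:powK-up}) already costs $\sqrt{14}$, and every use of the $\tfrac{3}{2}$ factor contributes additional slack inside the squared sum $D_K(\hat p_+, x)^2 + \dK_{\mu_P}(\hat p_+)^2$. So the strategy is to first establish the $\sqrt{14}$ bound for the idealized point $p_+$, and then carefully re-run each inequality in that proof with $\hat p_+$ in place of $p_+$, propagating the $(1+\delta)$ factor inequality-by-inequality until the constant settles at the stated $\sqrt{71}$.
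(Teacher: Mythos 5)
Your proposal is correct, and the lower bound is argued exactly as in the paper (Theorem \ref{thm:powK-low}: triangle inequality plus $(a+b)^2 \leq 2(a^2+b^2)$, using nothing about the point set). For the upper bound, however, your decomposition differs from the paper's in a way worth noting. The paper (Theorem \ref{thm:powK-up-hat}) bounds the weight term $\dK_{\mu_P}(\hat p_+)$ by the triangle inequality \emph{through $x$} (Lemma \ref{lem:pK-up}, giving $\Kpow{\hat P_+}^2 \leq 2\dK_{\mu_P}(x)^2 + 3 D_K(\hat p_+,x)^2$) and then controls $D_K(\hat p_+,x)$ by a chain through $p_+$ combined with $D_K(p_+,x) \leq 2\dK_{\mu_P}(x)$ (Lemma \ref{lem:p+}), arriving at $(281/4)\dK_{\mu_P}(x)^2 < 71\,\dK_{\mu_P}(x)^2$. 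You instead bound the weight term \emph{directly} by the approximate minimality of $\hat p_+$, namely $\dK_{\mu_P}(\hat p_+) \leq \tfrac{3}{2}\dK_{\mu_P}(p_+) \leq \tfrac{3}{2}\dK_{\mu_P}(x)$, and route the distance term through $\mu_P$, namely $D_K(\hat p_+,x) \leq \dK_{\mu_P}(\hat p_+) + \dK_{\mu_P}(x) \leq \tfrac{5}{2}\dK_{\mu_P}(x)$. Carried to completion this gives $\Kpow{\hat P_+}(\mu_P,x)^2 \leq \bigl(\tfrac{25}{4}+\tfrac{9}{4}\bigr)\dK_{\mu_P}(x)^2 = \tfrac{17}{2}\dK_{\mu_P}(x)^2$, i.e.\ a constant of $\sqrt{8.5} \approx 2.92$ rather than $\sqrt{71} \approx 8.43$, so the stated bound follows a fortiori. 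Your closing paragraph is therefore unnecessarily pessimistic: there is no need to ``re-run'' the $\sqrt{14}$ argument and chase the slack inequality-by-inequality (that is the paper's route); your direct decomposition already closes the proof with a strictly better constant, which is the main thing your approach buys.
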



\subsection{Kernel Power Distance for a Measure $\mu$}
First consider the case for a kernel power distance $\Kpow{P}(\mu,x)$ where $\mu$ is an arbitrary measure.  

\begin{theorem}
\label{thm:powK-low}
For measure $\mu$, point set $P \subset \mathbb{R}^d$, and $x \in \Rspace^d$, 
$
D_K(\mu,x) \leq \sqrt{2} \Kpow{P}(\mu,x).
$
\end{theorem}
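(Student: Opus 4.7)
The plan is to reduce the inequality to the triangle inequality for the kernel distance metric plus the elementary bound $(a+b)^2 \le 2(a^2+b^2)$. Recall that $\dK_\mu(p) = D_K(\mu,p)$, and that for the Gaussian kernel (which is characteristic) $D_K$ is a metric on measures, with Dirac masses at $p$ and $x$ playing the role of the points themselves. So the statement to prove is equivalent to
\[
D_K(\mu,x)^2 \;\le\; 2 \min_{p \in P}\bigl(D_K(p,x)^2 + D_K(\mu,p)^2\bigr).
\]

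First, I would fix an arbitrary $p \in P$ and apply the metric triangle inequality for $D_K$ to the three ``points'' $\mu$, $p$, and $x$ (where $p$ and $x$ are viewed as Dirac measures), giving
\[
D_K(\mu,x) \;\le\; D_K(\mu,p) + D_K(p,x).
\]
Squaring both sides and applying the arithmetic inequality $(a+b)^2 \le 2(a^2+b^2)$ yields
\[
D_K(\mu,x)^2 \;\le\; 2\bigl(D_K(\mu,p)^2 + D_K(p,x)^2\bigr) \;=\; 2\bigl(\dK_\mu(p)^2 + D_K(p,x)^2\bigr).
\]

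Since this holds for every $p \in P$, I would take the minimum over $p$ on the right-hand side, obtaining $D_K(\mu,x)^2 \le 2\Kpow{P}(\mu,x)^2$, and then take square roots to conclude. There is no real obstacle here: the only subtlety is confirming that the triangle inequality applies when one of the arguments is a measure and the others are points, which follows immediately from the fact that we treat a point $x$ as the Dirac measure $\delta_x$ (so that $\kappa(\mu,x) = \kde_\mu(x)$ as noted in Section~\ref{sec:kernel}) together with the fact that $D_K$ is a metric on measures for characteristic kernels such as the Gaussian. Note also that the bound is independent of the choice of $P$, which matches the claim made just before the theorem that the lower bound ``holds independent of the choice of $P$.''
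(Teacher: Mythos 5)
Your proof is correct and follows essentially the same route as the paper: apply the metric triangle inequality for $D_K$ to $\mu$, $\delta_p$, $\delta_x$, then use $(a+b)^2 \le 2(a^2+b^2)$, and pass to the minimizing $p \in P$ (the paper simply chooses the minimizer at the outset rather than quantifying over all $p$ and minimizing at the end, which is the same argument). No gaps.
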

\begin{proof}
Let $p = \arg \min_{q \in P}  \left( D_K(q,x)^2 + D_K(\mu,q)^2 \right)$.  Then we can use the triangle inequality and $(D_K(\mu,p) - D_K(p,x))^2 \geq 0$ to show
\[
D_K(\mu,x)^2 
\leq 
(D_K(\mu,p) + D_K(p,x))^2 
\leq 
2 (D_K(\mu,p)^2 + D_K(p,x)^2)
=
2 \Kpow{P}(\mu,x)^2. \qedhere
\]
\end{proof}

\begin{lemma}
\label{lem:pK-up}
For measure $\mu$, point set $P \subset \mathbb{R}^d$, point $p \in P$, and point $x \in \Rspace^d$  then
$
\Kpow{P}(\mu,x)^2 \leq 2 D_K(\mu,x)^2 + 3 D_K(p,x)^2.
$
\end{lemma}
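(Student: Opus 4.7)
The plan is to unwrap the definition of $\Kpow{P}(\mu,x)$ at the given point $p$, then use the fact that $D_K$ is a metric (so the triangle inequality holds) combined with the elementary inequality $(a+b)^2 \leq 2a^2 + 2b^2$ to bound the cross term $D_K(\mu,p)^2$ in terms of $D_K(\mu,x)^2$ and $D_K(p,x)^2$.

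First I would observe that since $\Kpow{P}(\mu,x)^2 = \min_{q \in P}(D_K(q,x)^2 + D_K(\mu,q)^2)$ is defined as a minimum over all points of $P$, instantiating at the specific point $p \in P$ gives the immediate bound
\[
\Kpow{P}(\mu,x)^2 \;\leq\; D_K(p,x)^2 + D_K(\mu,p)^2.
\]
This is the only place where we use that $p$ lies in $P$; the rest of the argument is purely metric.

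Next I would bound the second summand. Since $D_K$ is a metric on measures (recall a Dirac mass represents a point), the triangle inequality yields $D_K(\mu,p) \leq D_K(\mu,x) + D_K(x,p)$. Squaring and applying $(a+b)^2 \leq 2a^2 + 2b^2$ gives
\[
D_K(\mu,p)^2 \;\leq\; 2\,D_K(\mu,x)^2 + 2\,D_K(p,x)^2.
\]
Plugging this into the previous display produces
\[
\Kpow{P}(\mu,x)^2 \;\leq\; D_K(p,x)^2 + 2\,D_K(\mu,x)^2 + 2\,D_K(p,x)^2 \;=\; 2\,D_K(\mu,x)^2 + 3\,D_K(p,x)^2,
\]
as required.

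There is no real obstacle here: the argument is a two-line metric computation parallel in spirit to the lower bound proved in Theorem~\ref{thm:powK-low}, just with $p$ chosen by the user rather than by the minimum. The only thing worth flagging is that one must remember that $D_K$ is genuinely a metric (which requires the Gaussian kernel to be characteristic, as noted in Section~\ref{sec:kernel}) so that the triangle inequality is available between a measure $\mu$ and two points $p,x$ treated as Dirac masses.
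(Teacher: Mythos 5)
Your proof is correct and matches the paper's argument exactly: instantiate the minimum in $\Kpow{P}(\mu,x)^2$ at the given $p \in P$, apply the triangle inequality for the metric $D_K$ to bound $D_K(\mu,p)$, and finish with $(a+b)^2 \leq 2a^2+2b^2$. No differences worth noting.
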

\begin{proof}
Again, we can reach this result with the triangle inequality. 
\begin{align*}
\Kpow{P}(\mu,x)^2 
&\leq 
D_K(\mu,p)^2 + D_K(p,x)^2 
\\ &\leq 
(D_K(\mu,x) + D_K(p,x))^2 + D_K(p,x)^2 
\\ &\leq
2D_K(\mu,x)^2 + 3 D_K(p,x)^2. \qedhere
\end{align*}
\end{proof}

Recall the definition of a point $p_+ = \arg \max_{q \in \Rspace^d} \kappa(\mu,q)$.  
\begin{lemma}
\label{lem:p+}
For any measure $\mu$ and point $x, p_+ \in \Rspace^d$ we have $D_K(p_+,x) \leq 2 D_K(\mu,x)$.  
\end{lemma}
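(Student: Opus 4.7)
The plan is to exploit the defining property of $p_+$ (it maximizes $\kappa(\mu,\cdot)$) together with the fact that $D_K$ is a genuine metric, so the triangle inequality applies. The final bound will drop out by chaining $x \to \mu \to p_+$ through $\mu$.

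First, I would unpack the squared kernel distance using $K(y,y) = \sigma^2$ (so $\kappa(p_+, p_+) = \kappa(x,x) = \sigma^2$) to write
\[
D_K(\mu, p_+)^2 \;=\; \kappa(\mu,\mu) + \sigma^2 - 2\kappa(\mu, p_+),
\qquad
D_K(\mu, x)^2 \;=\; \kappa(\mu,\mu) + \sigma^2 - 2\kappa(\mu, x).
\]
By the definition $p_+ = \arg\max_{q\in\Rspace^d} \kappa(\mu,q)$, we have $\kappa(\mu,p_+) \geq \kappa(\mu,x)$, so subtracting these two expressions yields $D_K(\mu,p_+)^2 \leq D_K(\mu, x)^2$, i.e.\ $D_K(\mu,p_+) \leq D_K(\mu,x)$. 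In words: the optimizer $p_+$ is at least as close to $\mu$ in kernel distance as any other single-point measure, in particular $x$.

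Second, I would invoke that $D_K$ is a metric on measures (stated in Section~\ref{sec:kernel}, valid since the Gaussian kernel is characteristic). Viewing both $p_+$ and $x$ as Dirac masses, the triangle inequality gives
\[
D_K(p_+, x) \;\leq\; D_K(p_+, \mu) + D_K(\mu, x) \;\leq\; 2\, D_K(\mu, x),
\]
which is exactly the claim. So the proof is essentially two lines once the normalization $K(y,y)=\sigma^2$ is written out.

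The only mildly delicate point is existence of the arg max defining $p_+$. For the $\sigma^2$-normalized Gaussian and any finite (or more generally, finite-mass) measure $\mu$, the map $q \mapsto \kappa(\mu,q) = \int K(p,q)\,d\mu(p)$ is continuous and tends to $0$ as $\|q\|\to\infty$, so a maximizer exists on $\Rspace^d$; if one prefers to avoid this, replacing $p_+$ with an $\varepsilon$-approximate maximizer and letting $\varepsilon \to 0$ gives the same bound. No other subtlety arises — the rest is just the triangle inequality.
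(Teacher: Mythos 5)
Your proof is correct and follows the same route as the paper: the optimality of $p_+$ gives $\kappa(\mu,p_+) \geq \kappa(\mu,x)$, hence $D_K(\mu,p_+) \leq D_K(\mu,x)$, and the triangle inequality for the metric $D_K$ finishes it. The extra remark on existence of the arg max is a reasonable bit of care but not needed for the bound itself.
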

\begin{proof}
Since $x$ is a point in $\Rspace^d$,  $\kappa(\mu,x) \leq \kappa(\mu,p_+)$ and thus $D_K(\mu,x) \geq D_K(\mu,p_+)$.  Then by triangle inequality of $D_K$ to see that 
$D_K(p_+,x) \leq D_K(\mu,x) + D_K(\mu,p_+) \leq 2 D_K(\mu,x)$. 
\end{proof}
 
\begin{theorem}
\label{thm:powK-up}
For any measure $\mu$ in $\Rspace^d$ and any point $x \in \Rspace^d$, using the point $p_+ = \arg \max_{q \in \Rspace^d} \kappa(\mu,q)$ then 
$\Kpow{\{p_+\}}(\mu,x) \leq \sqrt{14} D_K(\mu,x)$.  
\end{theorem}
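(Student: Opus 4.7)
The plan is to combine Lemma \ref{lem:pK-up} and Lemma \ref{lem:p+} directly; no new machinery should be needed. Both lemmas were stated precisely to set up this theorem, so the proof should be a one-line algebraic chain of substitutions.

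First I would instantiate Lemma \ref{lem:pK-up} with the singleton point set $P = \{p_+\}$ and the only available choice $p = p_+$. This gives
\[
\Kpow{\{p_+\}}(\mu,x)^2 \;\leq\; 2 D_K(\mu,x)^2 + 3 D_K(p_+,x)^2.
\]
Next I would feed in the bound from Lemma \ref{lem:p+}, namely $D_K(p_+,x) \leq 2 D_K(\mu,x)$, to turn the second term into $3 \cdot 4 D_K(\mu,x)^2 = 12 D_K(\mu,x)^2$. Adding gives $14 D_K(\mu,x)^2$, and taking square roots finishes the proof.

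There is essentially no obstacle: the only thing to double-check is that Lemma \ref{lem:pK-up} really does apply to a singleton $P$ (it does, since its proof just picks an arbitrary $p \in P$ and uses the triangle inequality — the minimum in $\Kpow{P}$ is trivial in that case) and that $p_+$ is guaranteed to exist as a maximizer (which is ensured because $\kappa(\mu,q) = \kde_\mu(q)$ is continuous and, by properness of $\dK_\mu$ established in Lemma \ref{lem:proper}, tends to zero at infinity, so its maximum is attained). The constant $\sqrt{14}$ is tight for this particular argument but not necessarily for the inequality itself; improving it would require a sharper bound than Lemma \ref{lem:p+}, and is not needed here.
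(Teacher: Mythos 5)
Your proof is correct and is exactly the argument the paper gives: instantiate Lemma \ref{lem:pK-up} with $P=\{p_+\}$, substitute the bound $D_K(p_+,x)\leq 2D_K(\mu,x)$ from Lemma \ref{lem:p+}, and take square roots to get $\sqrt{14}$. Your added sanity checks (singleton $P$, existence of the maximizer $p_+$) are fine but not part of the paper's proof.
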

\begin{proof}
Combine Lemma \ref{lem:pK-up} and Lemma \ref{lem:p+} as
\[
\Kpow{\{p_+\}}(\mu,x)^2 
\leq 
2 D_K(\mu,x)^2 + 3 D_K(p_+,x)^2
\leq
2 D_K(\mu,x)^2 + 3 (4 D_K(\mu,x)^2)
=
14 D_K(\mu,x)^2. \qedhere
\]
\end{proof}

We now need two properties of the point set $P$ to reach our bound, namely, the spread $\beta_P$ and the median concentration $\Lambda_P$. Typically $\log(\beta_P)$ is not too large, and it makes sense to choose $\sigma$ so $\sigma/\Lambda_P \leq 1$, or at least $\sigma/\Lambda_P = O(1)$.


\begin{theorem}
\label{thm:powK-up-hat}
Consider  any point set $P \subset \Rspace^d$ of size $n$, with measure $\mu_P$, spread $\beta_P$, and median concentration $\Lambda_P$.  We can construct a point set $\hat P_+ = P \cup \hat p_+$ in $O(n^2((\sigma/\Lambda_P \delta)^d + \log(\beta))$ time such that for any point $x$, 
$
\Kpow{\hat P_+}(\muP,x) \leq \sqrt{71} D_K(\muP,x).
$  
\end{theorem}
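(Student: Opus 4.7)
The strategy is to lift the exact-$p_+$ bound of Theorem~\ref{thm:powK-up} to an approximate version by combining the approximation guarantee of Theorem~\ref{thm:phat+}, Lemma~\ref{lem:p+}, and the generic power-distance bound of Lemma~\ref{lem:pK-up}. The multiplicative $\sqrt{71}$ is then just the numerical price paid for using $\hat p_+$ instead of $p_+$.

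First I would invoke Theorem~\ref{thm:phat+} with $\delta = 1/2$ to produce a point $\hat p_+$ satisfying $D_K(\mu_P,\hat p_+) \le \tfrac{3}{2}\, D_K(\mu_P,p_+)$. The claimed runtime $O(n^2((\sigma/(\Lambda_P\delta))^d + \log\beta_P))$ is inherited directly from that construction, so the algorithmic side of the statement is immediate.

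For the metric estimate, since $\hat p_+ \in \hat P_+$, selecting $p = \hat p_+$ in the outer minimum of $\Kpow{\hat P_+}$ and applying Lemma~\ref{lem:pK-up} gives
\begin{equation*}
\Kpow{\hat P_+}(\mu_P, x)^{2} \;\le\; 2\, D_K(\mu_P, x)^{2} \;+\; 3\, D_K(\hat p_+, x)^{2}.
\end{equation*}
It then suffices to bound $D_K(\hat p_+, x)$ by a constant multiple of $D_K(\mu_P, x)$. Two observations do this cleanly. (i) Because $p_+$ maximizes $\kappa(\mu_P,\cdot)$ over all of $\Rspace^d$, $D_K(\mu_P,p_+) \le D_K(\mu_P,x)$ for any $x$, and combined with the $\tfrac{3}{2}$-approximation we get $D_K(\mu_P,\hat p_+) \le \tfrac{3}{2}\, D_K(\mu_P,x)$. (ii) By triangle inequality routed through $p_+$, together with Lemma~\ref{lem:p+}'s bound $D_K(p_+,x) \le 2\, D_K(\mu_P,x)$, one obtains $D_K(\hat p_+,x) \le D_K(\hat p_+,p_+) + D_K(p_+,x) \le (2+\delta)\, D_K(\mu_P,p_+) + 2\, D_K(\mu_P,x) = O(D_K(\mu_P,x))$. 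Plugging this back into the Lemma~\ref{lem:pK-up} estimate gives an absolute constant in front of $D_K(\mu_P,x)^{2}$, and tallying the numerical pieces at $\delta=1/2$ yields the stated $\sqrt{71}$ (with some slack---a tighter accounting gives a smaller constant, but $\sqrt{71}$ suffices and falls out of the most direct chain).

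The main obstacle is not in this theorem itself, which is a short composition of Lemma~\ref{lem:pK-up}, Lemma~\ref{lem:p+}, and the approximation bound. The genuine difficulty is the construction of $\hat p_+$ in Theorem~\ref{thm:phat+}: locating a $(1+\delta)$-approximate maximizer of $\kappa(\mu_P,\cdot)$ in polynomial time, which is where the dependence on the spread $\beta_P$, the median concentration $\Lambda_P$, the dimension $d$, and the parameter $\delta$ enters---typically via some net-based search over a region that provably contains $p_+$. Here that work is encapsulated and used as a black box.
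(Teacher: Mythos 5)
Your proposal is correct and follows essentially the same route as the paper: invoke Theorem~\ref{thm:phat+} with $\delta=1/2$, bound $D_K(\hat p_+,x)$ by the triangle inequality routed through $p_+$ together with Lemma~\ref{lem:p+}, and plug into Lemma~\ref{lem:pK-up}. Your direct accounting ($D_K(\hat p_+,x)\le \tfrac{9}{2}D_K(\mu_P,x)$, giving $\tfrac{251}{4}<71$) is in fact slightly tighter than the paper's expansion, but the argument is the same.
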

\begin{proof}
We use Theorem \ref{thm:phat+} to find a point $\hpoint_+$ such that $D_K(P,\hat p_+) \leq (3/2) D_K(P,p_+)$. Thus for any $x \in \Rspace^d$, using the triangle inequality
\begin{align*}
D_K(\hat p_+,x) 
&\leq 
D_K(\hat p_+,p_+) + D_K(p_+,x)
\leq
D_K(\muP, \hat p_+) + D_K(\muP,p_+) + D_K(p_+,x)
\\ & \leq 
(5/2) D_K(\muP, p_+) + D_K(p_+,x).
\end{align*}
Now combine this with Lemma \ref{lem:pK-up} and Lemma \ref{lem:p+} as
\begin{align*}
\Kpow{\hat P_+}(\muP,x)^2 
&\leq 
2 D_K(\muP,x)^2 + 3 D_K(\hat p_+,x)^2
\\ &\leq
2 D_K(\muP,x)^2 + 3 ((5/2) D_K(\muP,x) + D_K(p_+,x))^2
\\ &\leq
2 D_K(\muP,x)^2 + 3 ((25/4) + (5/2)) D_K(\muP,x)^2 + (1+5/2) D_K(p_+,x)^2)
\\ & = 
(113/4) D_K(\muP,x)^2 + (21/2) D_K(p_+,x)^2
\\ &\leq
(113/4) D_K(\muP,x)^2 + (21/2) (4 D_K(\muP,x)^2)
<  
71 D_K(\muP,x)^2.  \qedhere
\end{align*}
\end{proof}

\section{Reconstruction and Topological Estimation using Kernel Distance}
\label{sec:recon}

Now applying distance-like properties from Section \ref{sec:prop} and the power distance properties of Section \ref{sec:powerdistance} we can apply known reconstruction results to the kernel distance.  

\subsection{Homotopy Equivalent Reconstruction using $d^K_{\mu}$}

We have shown that the kernel distance function $d^K_{\mu}$ is a distance-like function.
Therefore the reconstruction theory for a distance-like function \cite{ChazalCohen-SteinerMerigot2011} (which is an extension of results for compact sets \cite{ChazalCohen-SteinerLieutier2009}) 
holds in the setting of $d^K_{\mu}$. 
We state the following two corollaries for completeness, whose proofs follow from the proofs of Proposition 4.2 and Theorem 4.6 in \cite{ChazalCohen-SteinerMerigot2011}.  
Before their formal statement, we need some notation adapted from \cite{ChazalCohen-SteinerMerigot2011} to make these statements precise.
Let $\phi: \Rspace^d \to \Rspace^+$ be a distance-like function. A point $x \in \Rspace^d$ is an $\alpha$-critical point if $\phi^2(x+h) \leq \phi^2(x) + 2\alpha \| h\| \phi(x) + \| h\|^2$ with $\alpha \in [0,1]$, $\forall h \in \Rspace^d$. 
Let $(\phi)^r = \{x \in \Rspace^d \mid \phi(x) \leq r\}$ denote the sublevel set of $\phi$, and let $(\phi)^{[r_1, r_2]} = \{x \in \Rspace^d \mid r_1 \leq \phi(x) \leq r_2\}$ denote all points at levels in the range $[r_1, r_2]$.  
For $\alpha \in [0,1]$, the \emph{$\alpha$-reach} of $\phi$ is the maximum $r$ such that $(\phi)^r$ has no $\alpha$-critical point, denoted as $\reach_{\alpha}(\phi)$.   
When $\alpha=1$, $\reach_1$ coincides with \emph{reach} introduced in \cite{Federer1959}.

\begin{theorem}[Isotopy lemma on $d^K_{\mu}$]
Let $r_1 < r_2$ be two positive numbers such that $d^K_{\mu}$ has no critical points in $(\dK_\mu)^{[r_1,r_2]}$.  
Then all the sublevel sets $(\dK_\mu)^r$ are isotopic for $r \in [r_1, r_2]$. 
\end{theorem}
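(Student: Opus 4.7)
The plan is to recognize the statement as an immediate specialization of the Isotopy Lemma for distance-like functions (Proposition 4.2 of \cite{ChazalCohen-SteinerMerigot2011}, itself an adaptation of Proposition 1.8 of \cite{Grove1993}). All of the hard work has already been done in Section \ref{sec:prop}: Lemmas \ref{lem:Lipschitz}, \ref{lem:semi}, and \ref{lem:proper} together certify that $\dK_\mu$ is distance-like. Nothing new about $\dK_\mu$ needs to be proved; the remaining task is to show that the hypotheses of the general lemma line up with what we have, and to handle one technicality caused by the fact that $\dK_\mu$ has bounded range.

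I would then briefly reconstruct the general argument so the reader sees which of (D1)--(D3) is used where. Semiconcavity (D2) supplies a generalized gradient $\nabla \dK_\mu$ defined at every non-critical point of $\dK_\mu$, with $\|\nabla \dK_\mu\| \leq 1$ from Lipschitz-ness (D1). On the compact strip $(\dK_\mu)^{[r_1,r_2]}$, the no-critical-points hypothesis combined with the lower semicontinuity of $x \mapsto \|\nabla \dK_\mu(x)\|$ yields a uniform positive lower bound $c > 0$. Consequently the rescaled field $v(x) = \nabla \dK_\mu(x) / \|\nabla \dK_\mu(x)\|^2$ is bounded and admissible for the flow construction of \cite{Lieutier2004,ChazalCohen-SteinerLieutier2009}, and its integral flow $\Phi_t$ satisfies $\dK_\mu(\Phi_t(x)) = \dK_\mu(x) + t$. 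Pushing forward $(\dK_\mu)^{r_1}$ by $\Phi_{r-r_1}$ then gives a homeomorphism, and in fact an ambient isotopy, onto $(\dK_\mu)^r$ for every $r \in [r_1, r_2]$.

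The main obstacle is the boundedness of the range of $\dK_\mu$: unlike the prototypical distance function, whose range is $[0,\infty)$, our $\dK_\mu$ is bounded above by $c_\mu := \sqrt{\kappa(\mu,\mu) + \sigma^2}$, so Lemma \ref{lem:proper} only gives compactness of $(\dK_\mu)^r$ for $r < c_\mu$. To apply the gradient-flow machinery we therefore need $r_2 < c_\mu$; otherwise $(\dK_\mu)^{[r_1,r_2]}$ can fail to be compact and the integral curves of $v$ may escape to infinity, so that even the notion of a ``no critical points'' band has to be revisited. Under the implicit assumption $r_2 < c_\mu$ (which is precisely the regime in which the corollary following Lemma \ref{lem:proper} guarantees compact sublevel sets), the argument sketched above goes through verbatim, and the isotopy claimed in the theorem follows.
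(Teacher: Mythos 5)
Your proposal matches the paper's approach exactly: the paper gives no independent proof, stating only that the result ``follows from the proofs of Proposition 4.2 and Theorem 4.6 in \cite{ChazalCohen-SteinerMerigot2011}'' once Lemmas \ref{lem:semi}, \ref{lem:Lipschitz}, and \ref{lem:proper} certify that $\dK_\mu$ is distance-like, which is precisely your reduction. Your additional observation that the bounded range of $\dK_\mu$ forces the implicit hypothesis $r_2 < c_\mu$ (so that the strip $(\dK_\mu)^{[r_1,r_2]}$ is compact and the flow argument applies) is a correct and worthwhile refinement of the paper's statement, consistent with the range restriction already discussed before Lemma \ref{lem:proper}.
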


\begin{theorem}[Reconstruction on $\dK_\mu$]
Let $\dK_\mu$ and $\dK_\nu$ be two kernel distance functions such that
$\|\dK_\mu - \dK_\nu\|_\infty \leq \eps$.  
Suppose $\reach_{\alpha}(\dK_\mu) \geq R$ for some $\alpha > 0$.
Then $\forall r \in [4\eps/\alpha^2, R - 3\eps]$, and $\forall \eta \in (0, R)$, 
the sublevel sets $(\dK_\mu)^\eta$ and $(\dK_\nu)^r$ are homotopy equivalent for $\eps \leq R/(5+4/\alpha^2)$. 
\label{thm:reconstruct}
\end{theorem}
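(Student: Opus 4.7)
The plan is to treat Theorem \ref{thm:reconstruct} as a direct corollary of the reconstruction theorem (Theorem 4.6) of \cite{ChazalCohen-SteinerMerigot2011} by verifying that $\dK_\mu$ and $\dK_\nu$ fit into their abstract framework of distance-like functions. Section \ref{sec:prop} has already done the heavy lifting: Lemma \ref{lem:Lipschitz} gives (D1), Lemma \ref{lem:semi} gives (D2), and Lemma \ref{lem:proper} gives (D3). Since these properties hold for any measure, both $\dK_\mu$ and $\dK_\nu$ are distance-like, and thus all machinery built in \cite{ChazalCohen-SteinerMerigot2011} --- generalized gradient, Hausdorff-like flow, $\alpha$-critical value theory, and $L_\infty$-stable interpolation between sublevel sets --- is available off the shelf.

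Given this, the argument decomposes into two pieces. First, the Isotopy Lemma (the preceding theorem in this section) shows that the homotopy type of $(\dK_\mu)^\eta$ is constant for $\eta \in (0, R)$, because $\reach_\alpha(\dK_\mu) \geq R$ precludes $\alpha$-critical values in $(0, R)$. So it suffices to prove $(\dK_\mu)^{\eta_0}$ is homotopy equivalent to $(\dK_\nu)^r$ for any single $\eta_0 \in (0, R)$ and all $r \in [4\eps/\alpha^2, R-3\eps]$. Second, using the stability $\|\dK_\mu - \dK_\nu\|_\infty \leq \eps$, one invokes the standard critical-value-free interpolation argument from \cite{ChazalCohen-SteinerMerigot2011}: construct a homotopy of sublevel sets along the linear path $(1-t)\dK_\mu + t\dK_\nu$ and show that none of these interpolating functions has a critical value in a small window around $r$. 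The $4\eps/\alpha^2$ lower bound on $r$ is exactly what is needed to convert the $L_\infty$-perturbation into a uniform lower bound on the generalized gradient of the interpolation (so that the gradient flow gives a deformation retract), while the $R - 3\eps$ upper bound keeps the window inside the critical-point-free zone of $\dK_\mu$. The hypothesis $\eps \leq R/(5 + 4/\alpha^2)$ is precisely what makes the interval $[4\eps/\alpha^2, R-3\eps]$ non-empty.

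The main obstacle is not conceptual but a careful audit: one must check that every constant in the CCM proof depends only on (D1)--(D3) and the $L_\infty$ stability hypothesis, with no hidden assumption peculiar to $\dC_{\mu,m_0}$ (for example, no appeal to the explicit ball-union formula for its sublevel sets). Since \cite{ChazalCohen-SteinerMerigot2011} phrases Proposition 4.2 and Theorem 4.6 entirely in terms of the abstract distance-like axioms, and we have verified these axioms for $\dK_\mu$, the constants $4/\alpha^2$, $3$, and $5+4/\alpha^2$ transfer without change. The payoff is exactly the point emphasized throughout Section \ref{sec:prop}: once distance-likeness is established, the full reconstruction pipeline unlocks for the kernel distance for free.
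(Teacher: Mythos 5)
Your proposal matches the paper's treatment: the paper states this theorem as a corollary whose proof "follows from the proofs of Proposition 4.2 and Theorem 4.6 in \cite{ChazalCohen-SteinerMerigot2011}," relying exactly on the distance-like properties (D1)--(D3) established in Section \ref{sec:prop} via Lemmas \ref{lem:semi}, \ref{lem:Lipschitz}, and \ref{lem:proper}. Your additional sketch of the internal interpolation/gradient-flow argument and the audit that no step of the cited proof uses structure peculiar to $\dC_{\mu,m_0}$ is a faithful elaboration of the same route, not a different one.
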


\subsection{Constructing Topological Estimates using $d^K_{\mu}$}
\label{subsec:topo-est}
In order to actually construct a topological estimate using the kernel distance $d^K_{\mu}$, one needs to be able to compute quantities related to its sublevel sets, in particular, to compute the persistence diagram of the sub-level sets filtration of $d^K_{\mu}$.  
Now we describe such tools needed for the kernel distance based on machinery recently developed by Buchet et al.  \cite{BuchetChazalOudot2013}, which shows how to approximate the persistent homology of distance-to-a-measure for any metric space via a power distance construction. Then using similar constructions, we can use the weighted Rips filtration to approximate the persistence diagram of the kernel distance. 

To state our results, first we require some technical notions and assume basic knowledge on persistent homology (see \cite{EdelsbrunnerHarer2008,EdelsbrunnerHarer2010} for a readable background). 
Given a metric space $\Xspace$ with the distance $d_{\Xspace}(\cdot, \cdot)$, a set $P \subseteq  \Xspace$ and a function $w: P \to \Rspace$, the (general) \emph{power distance} $f$ associated with $(P, w)$ is defined as 
$
f(x) = \sqrt{\min_{p \in P} \left( d_{\Xspace}(p,x)^2 + w(p)^2  \right)}.
$ 
Now given the set $(P,w)$ and its corresponding power distance $f$, one could use the weighted Rips filtration to approximate the persistence diagram of $w$, under certain restrictive conditions proven in Appendix \ref{subsec:approximation}.  
Consider the sublevel set of $f$, $f^{-1}((-\infty, \alpha])$. It is the union of balls centered at points $p \in P$ with radius $r_p(\alpha) = \sqrt{\alpha^2 - w(p)^2}$ for each $p$.
The weighted \v{C}ech complex $C_{\alpha}(P, w)$ for parameter $\alpha$ is the union of simplices $s$ such that $\bigcap_{p \in s} B(p, r_p(\alpha)) \neq 0$. 
The weighted Rips complex $R_{\alpha}(P, w)$ for parameter $\alpha$ is the maximal complex whose $1$-skeleton is the same as $C_{\alpha}(P, w)$.  
The corresponding weighted Rips filtration is denoted as $\{R_{\alpha}(P, w)\}$. 

Setting $w: = \dK_{\mu_P}$ and given point set $\hat{P}_+$ described in Section \ref{sec:powerdistance},  consider the weighted Rips filtration $\{R_{\alpha}(\hat{P}_+,\dK_\mu)\}$ based on the kernel power distance, $\Kpow{\hat{P}_+}$. 
We view the persistence diagrams on a logarithmic scale, that is, we change coordinates of points following the mapping $(x, y) \mapsto (\ln x, \ln y)$. $d_B^{\ln}$ denotes the corresponding bottleneck distance between persistence diagrams.  We now state a corollary of Theorem \ref{thm:powK-bnd}.
\begin{corollary}
\label{cor:wRip}
The weighted Rips filtration $\{R_{\alpha}(\hP_+,\dK_\muP)\}$ can be used to  approximate the persistence diagram of $\dK_{\mu_P}$ such that 
$d_B^{\rm \ln}(\dgmD{\dK_{\mu_P}}, \dgmD{\{R_{\alpha}(\hP_+,\dK_{\mu_P})\})} \leq \ln(2\sqrt{71})$.
\end{corollary}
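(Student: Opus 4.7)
The plan is to invoke the general weighted-Rips approximation framework of Buchet, Chazal and Oudot \cite{BuchetChazalOudot2013} twice: first to pass from the sublevel sets of $\dK_{\mu_P}$ to those of the kernel power distance $\Kpow{\hat P_+}(\mu_P, \cdot)$ (via the multiplicative sandwich of Theorem \ref{thm:powK-bnd}), and then to pass from sublevel sets of the power distance to the weighted Rips filtration (via the standard Rips-vs-\v Cech factor of $2$). The bound $\ln(2 \sqrt{71})$ appearing in the statement is then the sum $\ln 2 + \ln\sqrt{71}$ of these two contributions.

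Concretely, first I would observe that $\Kpow{\hat P_+}(\mu_P, \cdot)$ is a power distance in the sense of Section \ref{subsec:topo-est} with ambient metric $d(p,x) = D_K(p,x)$, point set $\hat P_+$, and weight $w(p) = \dK_{\mu_P}(p)$, so that the weighted Rips filtration $R_\alpha(\hat P_+, \dK_{\mu_P})$ is exactly the object produced by the Buchet--Chazal--Oudot construction. Translating Theorem \ref{thm:powK-bnd} to the log scale, the inequality $\tfrac{1}{\sqrt{2}}\, \dK_{\mu_P}(x) \leq \Kpow{\hat P_+}(\mu_P,x) \leq \sqrt{71}\, \dK_{\mu_P}(x)$ gives $\bigl| \ln \Kpow{\hat P_+}(\mu_P,x) - \ln \dK_{\mu_P}(x)\bigr| \leq \max\{\ln\sqrt{2}, \ln\sqrt{71}\} = \ln\sqrt{71}$ for every $x$. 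The standard $L^\infty$-stability of persistence diagrams, applied in the logarithmic coordinates defining $d_B^{\ln}$, then yields $d_B^{\ln}\bigl(\dgmD{\dK_{\mu_P}}, \dgmD{\Kpow{\hat P_+}(\mu_P,\cdot)}\bigr) \leq \ln\sqrt{71}$.

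Next I would invoke the approximation theorem of \cite{BuchetChazalOudot2013} (recalled in Appendix \ref{subsec:approximation}): the weighted Rips filtration and the sublevel-set filtration of a power distance are interleaved by a multiplicative factor of $2$, so their persistence diagrams are within $\ln 2$ in $d_B^{\ln}$. Applied to $(\hat P_+, \dK_{\mu_P})$ this yields $d_B^{\ln}\bigl(\dgmD{\Kpow{\hat P_+}(\mu_P,\cdot)}, \dgmD{\{R_\alpha(\hat P_+, \dK_{\mu_P})\}}\bigr) \leq \ln 2$, and a single triangle inequality combines the two bounds into $\ln\sqrt{71} + \ln 2 = \ln(2\sqrt{71})$, as claimed.

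The main obstacle I anticipate is not mathematical but one of bookkeeping: verifying that the hypotheses of the abstract Buchet--Chazal--Oudot framework (properness of the target function, well-definedness of the persistence diagrams on the log scale, and compatibility of the ambient metric $D_K$ with their setup) hold in our setting. Properness and compactness of sublevel sets follow from Lemma \ref{lem:proper} and its corollary; the fact that $D_K$ is a genuine metric on $\Rspace^d$ follows since the Gaussian kernel is characteristic; the remaining conditions are routine, which is why the detailed verification is deferred to Appendix \ref{subsec:approximation}.
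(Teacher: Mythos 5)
Your proposal is correct and follows essentially the same route as the paper: a multiplicative $\sqrt{71}$-interleaving between $\dK_{\mu_P}$ and $\Kpow{\hat P_+}$ from Theorem \ref{thm:powK-bnd} converted to an additive $\ln\sqrt{71}$ bound on the log scale, a factor-$2$ (i.e.\ $\ln 2$) step from the power-distance sublevel sets to the weighted Rips filtration (which the paper spells out as the Persistent Nerve Lemma plus the \v{C}ech--Rips interleaving), and a triangle inequality, with the $q$-tameness checks deferred to the appendix exactly as in the paper.
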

\begin{proof}
To prove that two persistence diagrams are close, one could prove that their filtration are interleaved \cite{ChazalCohen-SteinerGlisse2009}, 
that is, two filtrations $\{U_{\alpha}\}$ and $\{V_{\alpha}\}$ are \emph{$\eps$-interleaved} if for any $\alpha$, $U_{\alpha} \subseteq V_{\alpha+\eps} \subseteq U_{\alpha+2\eps}$.

First, Lemmas \ref{lemma:q-tame} and \ref{lemma:q-tame-rips} prove that the persistence diagrams $\dgmD{\dK_{\mu_P}}$ and $\dgmD{\{R_{\alpha}(\hP_+,\dK_{\mu_P})\}})$ are well-defined. 
Second, the results of Theorem \ref{thm:powK-bnd} implies an $\sqrt{71}$ multiplicative interleaving. 
Therefore for any $\alpha \in \Rspace$, 
\[
{(\dK_\muP)}^{-1}((-\infty, \alpha]) \subset {(\Kpow{\hat P_+})}^{-1}((-\infty, \sqrt{2}\alpha) \subset {(\dK_{\muP})}^{-1}((-\infty, \sqrt{71}\sqrt{2}\alpha]).
\]
On a logarithmic scale (by taking the natural log of both sides), such interleaving becomes addictive, 
\[
\ln \dK_\muP - \sqrt{2} \leq \ln \Kpow{\hat P_+} \leq \ln \dK_\muP + \sqrt{71}.  
\]
Theorem 4 of \cite{ChazalSilvaGlisse2013} implies 
\[
d_B^{\rm \ln}(\dgmD{\dK_\muP}, \dgmD{\Kpow{\hat P_+}}) \leq \sqrt{71}.
\]
In addition, by the Persistent Nerve Lemma (\cite{ChazalOudot2008}, Theorem 6 of \cite{Sheehy2012}, an extension of the Nerve Theorem \cite{Hatcher2002}), 
the sublevel sets filtration of $\dK_\mu$, which correspond to unions of balls of increasing radius, has the same persistent homology as the nerve filtration of these balls (which, by definition, is the \v{C}ech filtration).
Finally, there exists a multiplicative interleaving between weighted Rips and \v{C}ech complexes (Proposition 31 of \cite{ChazalSilvaGlisse2013}), 
$
C_{\alpha} \subseteq R_{\alpha} \subseteq C_{2\alpha}. 
$
We then obtain the following bounds on persistence diagrams,  
\[
d_B^{\rm \ln}(\dgmD{\Kpow{P_+}}, \dgmD{\{R_{\alpha}(P_+,\dK_\muP)\}}) \leq \ln(2).
\]
We use triangle inequality to obtain the final result: 
\[
d_B^{\rm \ln}(\dgmD{\dK_\muP}, \dgmD{\{R_{\alpha}(P_+,\dK_\muP)\}}) \leq \ln(2\sqrt{71}). \qedhere
\]
\end{proof}

Based on Corollary \ref{cor:wRip}, we have an algorithm that approximates the persistent homology of the sublevel set filtration of $\dK_\mu$ by constructing the weighted Rips filtration corresponding to the kernel-based power distance and computing its persistent homology. 
For memory efficient computation, sparse (weighted) Rips filtrations could be adapted by considering simplices on subsamples at each scale \cite{Sheehy2013,ChazalSilvaGlisse2013}, although some restrictions on the space apply.

\subsection{Distance to the Support of a Measure vs. Kernel Distance}
\label{sec:infer}
Suppose $\mu$ is a uniform measure on a compact set $S$ in $\Rspace^d$. 
We now compare the kernel distance $\dK_\mu$ with the distance function $f_S$ to the support $S$ of $\mu$. We show how $\dK_\mu$ approximates $f_S$, and thus allows one to infer geometric properties of $S$ from samples from $\mu$. 

A generalized gradient and its corresponding flow associated with a distance function are described in \cite{ChazalCohen-SteinerLieutier2009} and later adapted for distance-like functions in \cite{ChazalCohen-SteinerMerigot2011}. 
Let $f_S: \Rspace^d \to \Rspace$ be a distance function associated with a compact set $S$ of $\Rspace^d$. 
It is not differentiable on the medial axis of $S$. 
A \emph{generalized gradient function} $\grad{S}{}: \Rspace^d \to \Rspace^d$ coincides with the usual gradient of $f_S$ where $f_S$ is differentiable, and is defined everywhere and can be integrated into a continuous flow $\Phi^t: \Rspace^d \to \Rspace^d$ that points away from $S$.   
Let $\gamma$ be an integral (flow) line.  
The following lemma shows that when close enough to $S$, that $\dK_\mu$ is strictly increasing along any $\gamma$.  The proof is quite technical and is thus deferred to Appendix \ref{app:infer}.  

\begin{lemma}
Given any flow line $\gamma$ associated with the generalized gradient function $\grad{S}{}$, $\dK_\mu(x)$ is strictly monotonically increasing along $\gamma$ for $x$ sufficiently far away from the medial axis of $S$, for $\sigma \leq \frac{R}{6\Delta_G}$ and $ f_S(x) \in (0.014 R, 2\sigma)$. 
Here $B(\sigma/2)$ denotes a ball of radius $\sigma/2$,  $G := \frac{\vol(B(\sigma/2))}{\vol(S)}$, $\Delta_G : = \sqrt{12 + 3\ln(4/G)}$ and suppose $R := \min(\reach(S), \reach(\Rspace^d \setminus S)) > 0$.  
\label{lem:monotonicity}
\end{lemma}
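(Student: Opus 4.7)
Plan: Since $(\dK_\mu(x))^2 = \kappa(\mu,\mu) + \sigma^2 - 2\kappa(\mu,x)$ and only the last term depends on $x$, monotone increase of $\dK_\mu$ along $\gamma$ is equivalent to monotone decrease of $\kde_\mu(x) = \kappa(\mu,x)$ along $\gamma$. At points where $f_S$ is smooth the flow tangent is $\grad{S}{}(x)$, so differentiating under the integral sign for the Gaussian kernel reduces the claim to showing
\[
\int_{S} e^{-\|x-p\|^2/2\sigma^2}\, \langle x - p,\, \grad{S}{}(x) \rangle\, d\mu(p) \;>\; 0.
\]
At non-differentiable points of $f_S$, the generalized gradient is a convex combination of unit vectors $(x - p_i)/f_S(x)$ for nearest $p_i \in S$, so positivity will pass through the convex combination once it is shown for each such direction.

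Next I would decompose $S$ into a near region $A = S \cap B(x, r)$ and a far region $B = S \setminus B(x, r)$ with $r := \sigma\Delta_G$. The hypothesis $\sigma \le R/(6\Delta_G)$ gives $r \le R/6$, so the two reach hypotheses can be used to control the geometry of $A$: $\reach(S) \ge R$ forces $S$ inside $B(x, r)$ to lie in a thin cap contained in the half-space $\{y : \langle y - x, \grad{S}{}(x)\rangle \le 0\}$ up to quadratic error $O(r^2/R)$, while $\reach(\Rspace^d \setminus S) \ge R$ rules out $S$ reappearing on the opposite side of $x$ inside $B(x, r)$. Combined with $f_S(x) > 0.014R$ (which dominates the $O(r^2/R)$ error), this yields $\langle x - p,\, \grad{S}{}(x)\rangle \ge c\, f_S(x)$ for every $p \in A$, and the near contribution is bounded below by $c\, f_S(x)\, \mu(A)$.

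Step 3 quantifies both sides. For the lower bound on the near term, I use that $f_S(x) < 2\sigma$ places a ball $B(\pi_S(x), \sigma/2) \cap S \subset A$ of $\mu$-mass at least a constant multiple of $G$, together with the uniform Gaussian weight at distance $\le \tfrac{5}{2}\sigma$ from $x$; this gives a lower bound of order $G\, f_S(x)\, e^{-(5/2)^2/2}$. For the upper bound on the far term, every $p \in B$ receives exponential weight at most $e^{-r^2/2\sigma^2} = e^{-\Delta_G^2/2}$; the choice $\Delta_G = \sqrt{12 + 3\ln(4/G)}$ is engineered precisely so that $e^{-\Delta_G^2/2} = (G/4)^{3/2} e^{-6}$, which when combined with a crude bound $|\langle x - p,\grad{S}{}(x)\rangle| \le \|x-p\|$ and a Gaussian moment estimate makes the far contribution small enough to be absorbed by the near contribution.

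\textbf{Main obstacle.} The geometric localization in Step 2 is the crux: converting the two reach hypotheses into a quantitative one-sided cap estimate, with constants that survive the comparison in Step 3, is delicate. The constants $0.014$ and $6$ in the hypotheses are tuned exactly to close the gap between the linear-in-$f_S(x)$ lower bound on the near integral and the exponentially decaying upper bound on the far integral; tracking these constants through the reach geometry, the Gaussian tail, and the $G$-dependent choice of $\Delta_G$ simultaneously is where the bulk of the technical work will lie, and is the reason the lemma is deferred to the appendix.
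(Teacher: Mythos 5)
Your plan follows essentially the same route as the paper's proof: reduce to monotonicity of $\kde_\mu$ along the reversed flow direction $u$, split $S$ into a ball of radius $\sigma/2$ at the projection $x_S$ (mass $\approx G$, with Gaussian weight bounded below because $f_S(x)<2\sigma$) plus a far region beyond $\Delta_G\sigma$, use both reach hypotheses together with $f_S(x)>0.014R$ to confine the near part of $S$ to the correct half-space, and let the tuned value of $\Delta_G$ force the far (possibly negative) contribution to lose to the near one. One caveat: your intermediate claim that $\langle x-p,\grad{S}{}(x)\rangle\ge c\,f_S(x)$ for \emph{every} $p$ in the near region is stronger than the reach geometry delivers (the worst case sits exactly on the boundary of the half-space, giving only $\ge 0$ there), but since your quantitative lower bound ultimately uses only the $\sigma/2$-ball --- where the paper likewise obtains $\langle y-x,u\rangle\ge\sigma/2$ on half the ball --- this overstatement is not load-bearing.
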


The strict monotonicity of $\dK_\mu$ along the flow line under the conditions in Lemma \ref{lem:monotonicity} makes it possible to define a deformation retract of the sublevel sets of $\dK_\mu$ onto sublevel sets of $f_S$. Such a deformation retract defines a special case of homotopy equivalence between the sublevel sets of $\dK_\mu$ and sublevel sets of $f_S$. 
Consider a sufficiently large point set $P \in \mathbb{R}^d$ sampled from $\mu$, and its induced measure $\muP$.  
We can then also invoke Theorem \ref{thm:reconstruct} and a sampling bound (see Section \ref{sec:algo} and Lemma \ref{lem:KD-samp-nosq}) to show homotopy equivalence between the sublevel sets of $f_S$ and $\dK_\muP$.  


Note that Lemma \ref{lem:monotonicity} uses somewhat restrictive conditions related to the reach of a compact set, however we believe such conditions could be further relaxed to be associated with the concept of $\mu$-reach as described in \cite{ChazalCohen-SteinerLieutier2009}.

\section{Stability Properties for the Kernel Distance to a Measure}

\begin{lemma}[K4]
For two measures $\mu$ and $\nu$ on $\Rspace^d$ we have
$\| \dK_\mu - \dK_\nu \|_\infty \leq D_K(\mu,\nu)$. 
\end{lemma}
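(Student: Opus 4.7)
The plan is to observe that this is essentially the (reverse) triangle inequality for the kernel distance, once we recognize that $\dK_\mu(x)$ is itself a kernel distance. Specifically, treating the point $x \in \Rspace^d$ as the Dirac measure $\delta_x$, Section \ref{sec:kernel} established that $\dK_\mu(x) = D_K(\mu, \delta_x)$, and likewise $\dK_\nu(x) = D_K(\nu, \delta_x)$. Since the Gaussian kernel is characteristic, $D_K$ is a genuine metric on measures, so the triangle inequality applies.

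Concretely, I would proceed in three short steps. First, fix an arbitrary $x \in \Rspace^d$ and apply the triangle inequality for $D_K$ in both directions to obtain
\[
D_K(\mu,\delta_x) \leq D_K(\mu,\nu) + D_K(\nu,\delta_x)
\quad\text{and}\quad
D_K(\nu,\delta_x) \leq D_K(\mu,\nu) + D_K(\mu,\delta_x).
\]
Second, rearrange to conclude $|\dK_\mu(x) - \dK_\nu(x)| \leq D_K(\mu,\nu)$. Third, take the supremum over $x$ to get $\|\dK_\mu - \dK_\nu\|_\infty \leq D_K(\mu,\nu)$, which is the desired bound.

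As a sanity check, the same inequality admits a clean RKHS derivation: writing $D_K(\mu,\nu) = \|\Phi(\mu) - \Phi(\nu)\|_{\Eu{H}_K}$ and $\dK_\mu(x) = \|\Phi(\mu) - \phi(x)\|_{\Eu{H}_K}$, the conclusion is just the reverse triangle inequality in the Hilbert space $\Eu{H}_K$, which is automatic. Either presentation is essentially a one-line argument.

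There is no real obstacle here — the only thing one needs to be careful about is justifying that the triangle inequality can be applied when one of the ``measures'' is the Dirac $\delta_x$ (which falls within the general measure framework already set up in Section \ref{sec:kernel}) and that $D_K$ is indeed a metric (guaranteed since the Gaussian kernel is characteristic). Thus the proof is immediate and requires no delicate estimates.
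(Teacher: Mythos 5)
Your proposal is correct and is essentially identical to the paper's proof: both apply the triangle inequality for the metric $D_K$ in each direction with $x$ viewed as a Dirac mass, deduce $|\dK_\mu(x) - \dK_\nu(x)| \leq D_K(\mu,\nu)$ pointwise, and take the supremum. The RKHS remark is a fine optional aside but adds nothing the paper's argument needs.
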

\begin{proof}
Since $D_K(\cdot, \cdot)$ is a metric, then by triangle inequality,  for any $x \in \Rspace^d$ we have 
$D_K(\mu,x) \leq D_K(\mu,\nu) + D_K(\nu,x)$ and 
$D_K(\nu,x) \leq D_K(\nu,\mu) + D_K(\mu, x)$. 
Therefore for any $x \in \Rspace^d$ we have $| D_K(\mu,x) - D_K(\nu,x) | \leq D_K(\mu,\nu)$, proving the claim.  
\end{proof}

Both the Wasserstein and kernel distance are \emph{integral probability metrics}~\cite{SGFSL10}, so (M4) and (K4) are both interesting, but not easily comparable.  We now attempt to reconcile this.  

\subsection{Comparing $D_K$ to $W_2$}
\begin{lemma}
There is no Lipschitz constant $\gamma$ such that for any two probability measures $\mu$ and $\nu$ we have $W_2(\mu,\nu) \leq \gamma D_K(\mu,\nu)$.  
\end{lemma}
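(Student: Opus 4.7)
The plan is to exhibit an explicit family of pairs $(\mu,\nu)$ for which $W_2(\mu,\nu) \to \infty$ while $D_K(\mu,\nu)$ stays bounded; this immediately rules out any finite Lipschitz constant $\gamma$. The key observation is that the kernel distance with a fixed-bandwidth Gaussian kernel is uniformly bounded: since $K(p,p) = \sigma^2$, the norm $\|\phi(p)\|_{\Eu{H}_K} = \sigma$ in the RKHS, and more generally for any probability measure $\mu$ we have $\sqrt{\kappa(\mu,\mu)} \leq \sigma$, so $D_K(\mu,\nu) \leq 2\sigma$ always. By contrast, $W_2$ sees the geometric separation of mass and can be made arbitrarily large.

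For the concrete construction I would simply take Dirac measures $\mu = \delta_p$ and $\nu = \delta_q$ in $\Rspace^d$. Then a direct calculation using the definition of $\kappa$ gives
\[
D_K(\mu,\nu)^2 = 2\sigma^2\left(1 - \exp\!\left(-\frac{\|p-q\|^2}{2\sigma^2}\right)\right),
\]
so $D_K(\mu,\nu) \to \sigma\sqrt{2}$ as $\|p-q\| \to \infty$. On the other hand, the only transference plan between two Dirac masses is the product measure, so $W_2(\mu,\nu) = \|p-q\|$. Letting $\|p-q\| \to \infty$, the ratio $W_2(\mu,\nu)/D_K(\mu,\nu)$ diverges, contradicting the existence of any finite $\gamma$ with $W_2 \leq \gamma D_K$.

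I would present the argument by contradiction: assume such a $\gamma$ exists, pick $p,q$ with $\|p-q\| > \gamma \sigma\sqrt{2}$ (or any value exceeding the uniform bound $\gamma \cdot 2\sigma$), and derive the contradiction $\|p-q\| = W_2(\delta_p,\delta_q) \leq \gamma D_K(\delta_p,\delta_q) \leq 2\gamma\sigma$. There is essentially no obstacle here — the only subtle point is remembering that the $\sigma^2$-normalization used in this paper caps $K$ at $\sigma^2$ rather than $1$, so all bounds pick up factors of $\sigma$; the qualitative conclusion (boundedness of $D_K$ versus unboundedness of $W_2$) is unaffected. A closing remark could highlight that the failure is fundamental rather than quantitative: $D_K$ metrizes a topology coarser than $W_2$ because the Gaussian kernel of fixed bandwidth cannot resolve separations much larger than $\sigma$.
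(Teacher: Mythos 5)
Your proof is correct and takes essentially the same approach as the paper: both arguments exhibit a pair of measures whose mass is separated by an arbitrarily large distance, so that $W_2$ diverges while $D_K$ remains uniformly bounded by $\sqrt{2}\sigma$. The only difference is that you specialize to two Dirac masses (the paper moves a fraction $\tau$ of the mass a distance $n$), which if anything makes the computation more explicit.
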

\begin{proof}
Consider two measures $\mu$ and $\nu$ which are almost identical: the only difference is some mass of measure $\tau$ is moved from its location in $\mu$ a distance $n$ in $\nu$. 
The Wasserstein distance requires a transportation plan that moves this $\tau$ mass in $\nu$ back to where it was in $\mu$ with cost $\tau \cdot \Omega(n)$ in $W_2(\mu,\nu)$.  
On the other hand,  $D_K(\mu,\nu) = \sqrt{\kappa(\mu,\mu) + \kappa(\nu,\nu) - 2 \kappa(\mu,\nu)} \leq \sqrt{\sigma^2 + \sigma^2 - 2\cdot0} = \sqrt{2} \sigma$ is bounded.  
\end{proof}

We conjecture for any two probability measures $\mu$ and $\nu$ that $D_K(\mu,\nu) \leq W_2(\mu,\nu)$.  This would show that $\dK_\mu$ is at least as stable as $\dC_{\mu,m_0}$ since a bound on $W_2(\mu,\nu)$ would also bound $D_K(\mu,\nu)$, but not vice versa.  
Alternatively, this can be viewed as $\dK_\mu$ is less discriminative than $\dC_{\mu,m_0}$; we view this as a positive in this setting, as it is mainly less discriminative towards outliers (far away points).  
Here we only show that this property for a special case and as $\sigma \to \infty$.  
To simplify notation, all integrals are assumed to be over the full domain $\Rspace^d$.  

\paragraph{Two Dirac masses.}
We first consider a special case when $\mu$ is a Dirac mass at a point $p$ and $\nu$ is a Dirac mass at a point $q$.  That is they are both single points.  We can then write $D_K(\mu,\nu) = D_K(p,q)$.  Figure \ref{fig:Euc-bound} illustrates the result of this lemma.  

\begin{lemma}
\label{lem:E2DK}
For any points $p,q \in \Rspace^d$ it always holds that $\|p-q\| \geq D_K(p,q)$.  
When $\|p-q\| \leq \sqrt{3} \sigma$ then $D_K(p,q) \geq \|p-q\|/2$.  
\end{lemma}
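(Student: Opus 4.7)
The plan is to reduce the two inequalities to a single one-variable statement about the exponential by explicitly computing the kernel distance for two Dirac masses. Specifically, using $K(p,q) = \sigma^2 \exp(-\|p-q\|^2/2\sigma^2)$ and $K(p,p) = K(q,q) = \sigma^2$, a direct computation gives
\[
D_K(p,q)^2 \;=\; 2\sigma^2\Bigl(1 - \exp\!\bigl(-\|p-q\|^2/(2\sigma^2)\bigr)\Bigr).
\]
I would then set $t = \|p-q\|^2/(2\sigma^2)$ so that the statement $\|p-q\| \geq D_K(p,q)$ becomes $2\sigma^2 t \geq 2\sigma^2(1-e^{-t})$, i.e.\ $1 - e^{-t} \leq t$, which is the standard inequality $e^{-t} \geq 1-t$ valid for all $t \geq 0$ (convexity of the exponential, or equivalently $e^x \geq 1+x$). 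This handles the first bound for all $p,q$ without any range restriction.

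For the second bound, the target $D_K(p,q) \geq \|p-q\|/2$ is equivalent (after squaring and using the same substitution) to
\[
1 - e^{-t} \;\geq\; t/4, \qquad t \in [0, 3/2],
\]
since $\|p-q\| \leq \sqrt{3}\sigma$ translates to $t \leq 3/2$. I would prove this by a one-variable calculus argument: set $g(t) = 1 - t/4 - e^{-t}$, note $g(0) = 0$, and compute $g'(t) = -1/4 + e^{-t}$, which is positive on $[0, \ln 4)$ and negative on $(\ln 4, 3/2]$. Thus $g$ is unimodal on $[0, 3/2]$ with $g(0) = 0$ and I just need to check the right endpoint: $g(3/2) = 5/8 - e^{-3/2} > 5/8 - 1/4 > 0$. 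Hence $g \geq 0$ on the whole interval, which is exactly the desired inequality.

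The main obstacle, if any, is cosmetic: verifying that the endpoint $t = 3/2$ (corresponding exactly to $\|p-q\| = \sqrt{3}\sigma$) is still inside the region where $1 - e^{-t} \geq t/4$, since the constant $\sqrt{3}\sigma$ in the lemma statement is tight-ish. All other steps are elementary manipulations of a one-variable real function. Once the substitution $t = \|p-q\|^2/(2\sigma^2)$ is in place, both halves of the lemma collapse to short exponential inequalities, so the proof is essentially a two-line calculation plus the $g(t)$ check.
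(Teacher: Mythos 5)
Your proposal is correct and follows essentially the same route as the paper: both expand $D_K(p,q)^2 = 2\sigma^2\bigl(1-\exp(-\|p-q\|^2/2\sigma^2)\bigr)$ and reduce each half of the lemma to a one-variable exponential inequality, with the first half being exactly $e^{-t}\geq 1-t$. The only (cosmetic) difference is in the second half, where the paper invokes the Taylor bound $e^{-t}\leq 1-t+t^2/2$ while you verify $1-e^{-t}\geq t/4$ on $[0,3/2]$ directly by a monotonicity argument on $g(t)=1-t/4-e^{-t}$; both are valid.
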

\begin{proof}
First expand $D_K(p,q)^2$ as
\[
D_K(p,q)^2 
= 
2 \sigma^2 - 2 K(p,q)
= 
2 \sigma^2 \left(1 - \exp\left(\frac{-\|p-q\|^2}{2\sigma^2}\right)\right).
\]
Now using that $1 - t \leq e^{-t}  \leq 1 - t + (1/2) t^2$ for $t\geq0$
\[
D_K(p,q)^2 = 2 \sigma^2 \left(1 - \exp\left(\frac{-\|p-q\|^2}{2\sigma^2}\right)\right)
\leq
2 \sigma^2 \left(\frac{\|p-q\|^2}{2\sigma^2}\right)
= 
\|p-q\|^2
\]
and 
\begin{align*}
D_K(p,q)^2 &= 2 \sigma^2 \left(1 - \exp\left(\frac{-\|p-q\|^2}{2\sigma^2}\right)\right)
\\ &\geq
2 \sigma^2 \left(\frac{\|p-q\|^2}{2\sigma^2} - \frac{1}{2} \frac{\|p-q\|^4}{4 \sigma^4}\right)
\\ &=
\frac{\|p-q\|^2}{4} \left(4 - \frac{\|p-q\|^2}{\sigma^2}\right)
\\ &\geq
\|p-q\|^2/4,
\end{align*}
where the last inequality holds when $\|p-q\|^2 \leq  \sqrt{3}\sigma$.  
\end{proof}

\begin{figure}
\begin{center}\includegraphics[width=.9\linewidth]{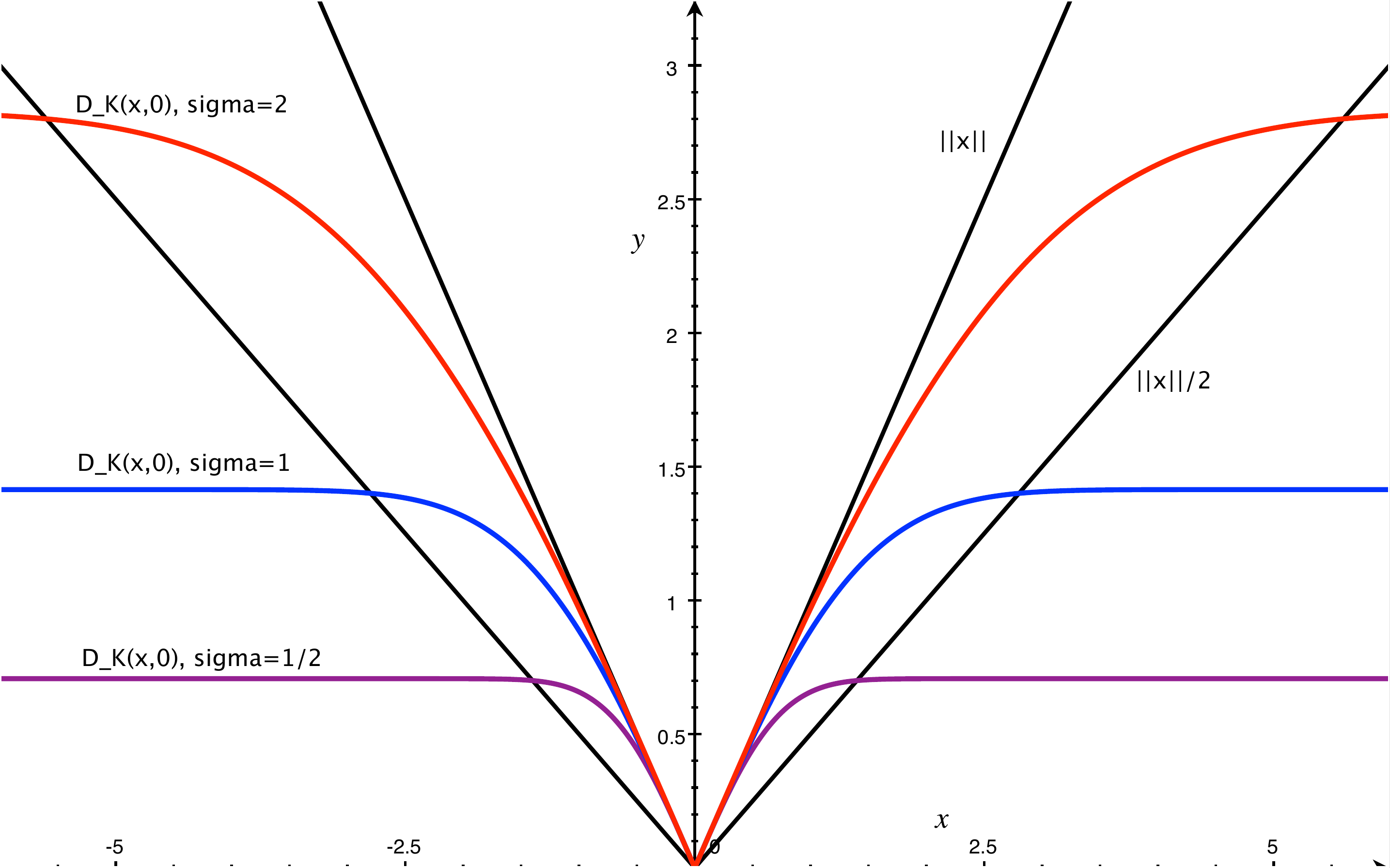}\end{center}
\caption{\label{fig:Euc-bound}
\small \sffamily
Showing that $\|x-0\|/2 \leq D_K(x,0) \leq \|x-0\|$, where the second inequality holds for $\|x\| \leq \sqrt{3} \sigma$.  The kernel distance $D_K(x,0)$ is shown for $\sigma = \{1/2, 1, 2\}$ in purple, blue, and red, respectively.  
}
\end{figure}

\paragraph{One Dirac mass.}
Consider the case where one measure $\nu$ is a Dirac mass at point $x \in \Rspace^d$.    

\begin{lemma}
Consider two probability measures $\mu$ and $\nu$ on $\Rspace^d$ where $\nu$ is represented by a Dirac mass at a point $x \in \Rspace^d$.  
Then $\dK_\mu(x) = D_K(\mu,\nu) \leq W_2(\mu,\nu)$ for any $\sigma >0$, where the equality only holds when $\mu$ is also a Dirac mass at $x$.
\label{lem:2pts}
\end{lemma}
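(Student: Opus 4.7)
The plan is to lift everything into the reproducing kernel Hilbert space $\Eu{H}_K$ from Section \ref{sec:kernel} and then reduce the claim to the pointwise bound $D_K(p,x) \leq \|p-x\|$ from Lemma \ref{lem:E2DK}. Since $\nu = \delta_x$, note first that $W_2(\mu,\delta_x)^2 = \int \|p-x\|^2 \dir \mu(p)$, because the only coupling with second marginal $\delta_x$ sends every $p$ to $x$. On the kernel side, I would use the lifting $\phi : \Rspace^d \to \Eu{H}_K$ with $\Phi(\mu) = \int \phi(p) \dir\mu(p)$ to rewrite
\[
D_K(\mu,\delta_x) = \bigl\| \Phi(\mu) - \phi(x) \bigr\|_{\Eu{H}_K} = \Bigl\| \int \bigl(\phi(p) - \phi(x)\bigr) \dir\mu(p)\Bigr\|_{\Eu{H}_K}.
\]

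Next, I would apply the integral form of the triangle inequality (Jensen's inequality for the convex norm $\|\cdot\|_{\Eu{H}_K}$) to pull the norm inside the integral, obtaining
\[
D_K(\mu,\delta_x) \;\leq\; \int \|\phi(p) - \phi(x)\|_{\Eu{H}_K}\, \dir\mu(p) \;=\; \int D_K(p,x)\, \dir\mu(p).
\]
Then Lemma \ref{lem:E2DK} gives $D_K(p,x) \leq \|p-x\|$ pointwise, so this is upper bounded by $\int \|p-x\| \dir\mu(p)$. Finally, the Cauchy--Schwarz inequality against the probability measure $\mu$ (equivalently, Jensen's inequality applied to the concave square root) gives
\[
\int \|p-x\| \dir\mu(p) \;\leq\; \Bigl( \int \|p-x\|^2 \dir\mu(p) \Bigr)^{1/2} \;=\; W_2(\mu,\delta_x),
\]
which chains together to yield the claimed bound.

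For the equality case, the key observation is that $D_K(p,x) \leq \|p-x\|$ is strict whenever $p \neq x$: the inequality $1 - e^{-t} \leq t$ used in Lemma \ref{lem:E2DK} is an equality only at $t = 0$. So if equality $D_K(\mu,\delta_x) = W_2(\mu,\delta_x)$ holds, then the composite bound forces $D_K(p,x) = \|p-x\|$ for $\mu$-almost every $p$, which in turn forces $p = x$ $\mu$-a.s., i.e.\ $\mu = \delta_x$; the converse is immediate since both sides vanish. I do not expect any real obstacle here: the RKHS triangle inequality and Cauchy--Schwarz are both textbook, and the only nontrivial ingredient is Lemma \ref{lem:E2DK}, which has already been established. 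The one small thing to be careful about is that the first Jensen step requires $\int \|\phi(p)\|_{\Eu{H}_K} \dir\mu(p) = \int \sigma\, \dir\mu(p) = \sigma < \infty$ so that $\Phi(\mu)$ is a well-defined Bochner integral in $\Eu{H}_K$ for any probability measure $\mu$, which holds automatically since $K(x,x) = \sigma^2$ is constant.
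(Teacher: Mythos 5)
Your proof is correct, but it takes a genuinely different route from the paper's. The paper never leaves $\Rspace^d$: it squares both sides, expands $(D_K(\mu,\delta_x))^2 = \sigma^2 + \kappa(\mu,\mu) - 2\kappa(\mu,x)$, and applies the elementary bounds $1-t \leq e^{-t} \leq 1$ directly to the Gaussian integrands so that the $\sigma^2$ terms cancel and exactly $\int \|p-x\|^2 \dir{\mu(p)} = (W_2(\mu,\delta_x))^2$ remains; Lemma \ref{lem:E2DK} is not invoked at all. You instead lift to $\Eu{H}_K$, use the Bochner-integral triangle inequality to get $D_K(\mu,\delta_x) \leq \int D_K(p,x)\, \dir{\mu(p)}$, and then chain the pointwise bound of Lemma \ref{lem:E2DK} with Cauchy--Schwarz; your handling of the equality case via strictness of $1-e^{-t} \leq t$ for $t>0$ is, if anything, cleaner than the paper's one-line remark, and your observation about integrability of $\|\phi(p)\|_{\Eu{H}_K} = \sigma$ correctly disposes of the only technical point. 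What your route buys is generality: replacing the product measure $\mu \otimes \delta_x$ by an arbitrary coupling $\pi \in \Pi(\mu,\nu)$, the identity $\Phi(\mu)-\Phi(\nu) = \int (\phi(p)-\phi(q))\,\dir{\pi(p,q)}$ followed by the same three inequalities yields $D_K(\mu,\nu) \leq W_2(\mu,\nu)$ for \emph{arbitrary} $\nu$ after taking the infimum over $\pi$ --- precisely the unconstrained conjecture the paper states in Section 5 and lists as an open question --- whereas the paper's expansion relies on $\kappa(\delta_x,\delta_x) = \sigma^2$ exactly and on the product coupling being the only coupling, so it does not extend beyond the Dirac case. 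What the paper's route buys is that it is entirely elementary and self-contained: no RKHS embedding, no Bochner integrals, only a two-term Taylor bound on the exponential.
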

\begin{proof}
Since both $W_2(\mu,\nu)$ and $D_K(\mu,\nu)$ are metrics and hence non-negative, we can instead consider their squared versions: $(W_2(\mu,\nu))^2 = \int_p \|p-x\|^2 \mu(p) \dir p$ and 
\vspace{-2mm}
\begin{align*}
(D_K(\mu,\nu))^2 
&= 
K(x,x) + \int_{(p,q)} K(p,q) \dir{\textsf{m}_{\mu,\mu}(p,q)} - 2 \int_{p} K(p,x) \dir{\mu(p)}
\\ &= 
\sigma^2 \left(1 + \int_{(p,q)} \exp \left(-\frac{\|p-q\|^2}{2 \sigma^2} \right) \dir{\textsf{m}_{\mu,\mu}(p,q)} - 2 \int_p \exp \left( - \frac{\|p-x\|^2}{2 \sigma^2} \right) \dir{\mu(p)} \right).
\end{align*}
Now use the bound $1-t \leq e^{-t} \leq 1$ for $t \geq 0$ to approximate 
\begin{align*}
(D_K(\mu,\nu))^2 
& \leq
\sigma^2 \left(1 + \int_{(p,q)} (1) \dir{(\textsf{m}_{\mu,\mu}(p,q)} - 2 \int_p \left(1-\frac{\|p-x\|^2}{2 \sigma^2} \right) \dir{\mu(p)} \right)
\\ & =
\int_p \|p-x\|^2 \dir{\mu(p)} = (W_2(\mu,\nu))^2. 
\end{align*}
The inequality becomes an equality only when $\|p-x\| = 0$ for all $p \in P$, and since they are both metrics, this is the only location where they are both $0$.  
\end{proof}

\paragraph{General case.}
Next we show that if $\nu$ is not a unit Dirac, then this inequality holds in the limit as $\sigma$ goes to infinity.  The technical work is making precise how $\sigma^2 - K(p,x) \leq \|x-p\|^2/2$ and how this compares to bounds on $D_K(\mu,\nu)$ and $W_2(\mu,\nu)$.  

For simpler exposition, we assume $\mu$ is a probability measure, that is $\int_p \mu(p) \dir p = 1$; otherwise we can normalize $\mu$ at the appropriate locations, and all of the results go through.  

\begin{lemma}
For any $p,q \in \Rspace^d$ we have
$\begin{displaystyle}
K(p,q) 
=
\sigma^2 - \frac{\|p-q\|^2}{2} + \sum_{i=2}^\infty \frac{(-\|p-q\|^2)^i}{2^{i+1} \sigma^{2i-2} i!}. 
\end{displaystyle}$
\label{lem:Kflip}
\end{lemma}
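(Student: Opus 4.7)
The plan is to recognize this identity as nothing more than the Maclaurin expansion of the exponential factor in the Gaussian kernel, together with a bit of bookkeeping to match the claimed form. Starting from the definition $K(p,q) = \sigma^2 \exp(-\|p-q\|^2/(2\sigma^2))$, I would substitute $t := \|p-q\|^2/(2\sigma^2) \geq 0$ so that $K(p,q) = \sigma^2 e^{-t}$. Then I invoke the standard power series
\[
e^{-t} = \sum_{i=0}^\infty \frac{(-t)^i}{i!},
\]
which converges absolutely for every real $t$ because $\exp$ is entire, so there is no issue of convergence or of reordering terms.

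From there, I expand $(-t)^i = (-\|p-q\|^2)^i / (2^i \sigma^{2i})$ and multiply through by the outer $\sigma^2$, which lowers the exponent on $\sigma$ in the denominator by $2$. The remaining step is to split off the $i=0$ contribution (giving $\sigma^2$) and the $i=1$ contribution (giving $-\|p-q\|^2/2$), then collect the rest as a tail indexed by $i \geq 2$ in the form stated in the lemma.

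There is no substantive obstacle: the only thing to be careful about is the power-of-$2$ and power-of-$\sigma$ bookkeeping in the denominator so that it agrees with the form $2^{i+1}\sigma^{2i-2} i!$ claimed in the statement (absorbing the factor of $\sigma^2$ from the kernel's prefactor into shifting $2i$ down to $2i-2$, and keeping track of how the denominator's $2^i$ combines with the normalization). Since this is a term-by-term comparison of an absolutely convergent series, no analytic subtlety is involved.
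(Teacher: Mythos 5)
Your approach is exactly the paper's: write $K(p,q)=\sigma^2 e^{-t}$ with $t=\|p-q\|^2/(2\sigma^2)$, substitute the Maclaurin series for $e^{-t}$, multiply through by $\sigma^2$, and peel off the $i=0$ and $i=1$ terms. One caution, though: carrying out the bookkeeping you describe yields a denominator of $2^{i}\sigma^{2i-2}\,i!$, not $2^{i+1}\sigma^{2i-2}\,i!$ --- the $2^{i+1}$ in the lemma statement is a typo (the paper's own proof displays $2^i$, consistent with the definition of $\Delta_{\mu,\nu}$ used afterwards). So your derivation is sound, but your closing claim that the terms can be arranged to agree with the $2^{i+1}$ form cannot be carried out literally; the honest conclusion of your computation is that the stated exponent should be $i$.
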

\begin{proof}
We use the Taylor expansion of $e^x = \sum_{i=0}^\infty x^i/i! = 1 + x + \sum_{i=2}^\infty x^i /i!$.
Then it is easy to see
\[
K(p,q) 
= 
\sigma^2 \exp \left(-\frac{\|p-q\|^2}{2 \sigma^2} \right) 
=
\sigma^2 - \frac{\|p-q\|^2}{2} + \sum_{i=2}^\infty \frac{(-\|p-q\|^2)^i}{2^i \sigma^{2i-2} i!}. \qedhere
\]
\end{proof}

This lemma illustrates why the choice of coefficient of $\sigma^2$ is convenient.  Since then $\sigma^2 - K(p,q)$ acts like $\frac{1}{2}\|p-q\|^2$, and becomes closer as $\sigma$ increases.  
Define $\bar \mu = \int_p p \cdot  \dir \mu(p)$ to represent the mean point of measure $\mu$;
$\Var(\mu) = (\int_p \|p\|^2 \dir{\mu(p)}) - \|\bar \mu\|^2$  to represent the variance of the measure $\mu$; and  
$\Delta_{\mu,\nu} = \int_{(p,q)} \sum_{i=2}^\infty \frac{(-\|p-q\|^2)^i}{2^i \sigma^{2i-2} i!} \dir{\textsf{m}_{\mu,\nu}(p,q)}$.

\begin{lemma}
For any $x \in \Rspace^d$ we have
$\begin{displaystyle}
\int_p \|p - x\|^2 \dir{\mu(p)} = \|\bar \mu - x\|^2 + \Var(\mu).
\end{displaystyle}$
\label{lem:mean+Var}
\end{lemma}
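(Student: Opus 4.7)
The plan is to recognize this as the standard parallel-axis (bias--variance) decomposition for measures and obtain it by a direct expansion of the squared norm, using nothing more than linearity of the integral and the fact (stated just above the lemma) that $\mu$ is a probability measure, i.e.\ $\int_p \dir\mu(p) = 1$.

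First I would write $\|p-x\|^2 = \|p\|^2 - 2\langle p,x\rangle + \|x\|^2$ and integrate term by term against $\mu$. The last term contributes $\|x\|^2$ since $\mu$ has unit mass; the middle term yields $-2\langle \bar\mu,x\rangle$ by pulling $x$ out of the integral and invoking the definition of $\bar\mu$; the first term is left as $\int_p \|p\|^2 \dir\mu(p)$. Hence
\[
\int_p \|p-x\|^2 \dir\mu(p) \;=\; \int_p \|p\|^2 \dir\mu(p) \;-\; 2\langle \bar\mu, x\rangle \;+\; \|x\|^2.
\]

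Next I would expand the right-hand side of the claimed identity: $\|\bar\mu - x\|^2 = \|\bar\mu\|^2 - 2\langle \bar\mu, x\rangle + \|x\|^2$, and substitute the definition $\Var(\mu) = \int_p \|p\|^2 \dir\mu(p) - \|\bar\mu\|^2$ to see that $\|\bar\mu - x\|^2 + \Var(\mu)$ equals exactly the expression obtained above. Matching the two finishes the proof.

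There is no real obstacle here; the identity is essentially bookkeeping. The only subtlety worth flagging is that the cancellation of $\|\bar\mu\|^2$ between $\|\bar\mu - x\|^2$ and $\Var(\mu)$ relies on $\mu$ being a probability measure, which is the assumption made in the paragraph preceding the lemma, so no additional normalization step is needed.
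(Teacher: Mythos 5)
Your proof is correct and follows essentially the same route as the paper: expand $\|p-x\|^2$, integrate term by term using that $\mu$ is a probability measure, and regroup via the definitions of $\bar\mu$ and $\Var(\mu)$. No gaps.
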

\begin{proof} 
\begin{align*}
\int_p \|p - x\|^2 \dir{\mu(p)}
&= 
\int_p \left( \|p\|^2 + \|x\|^2 - 2 \langle p , x \rangle \right) \dir{\mu(p)}
\\ &=
\int_p \|p\|^2 \dir{\mu(p)} + \|x\|^2 - 2 \int_p \langle p , x \rangle \dir{\mu(p)}
\\ &=
\left(\int_p \|p\|^2 \dir{\mu(p)} - \|\bar \mu\|^2\right) + \|\bar \mu\|^2 + \|x\|^2 - 2 \langle \bar \mu , x \rangle
\\ &=
\Var(\mu) + \|\bar \mu - x\|^2.  \qedhere
\end{align*}
\end{proof}

\begin{lemma} 
For probability measures $\mu$ and $\nu$ on $\Rspace^d$,  
$
\kappa(\mu,\nu) 
=
\sigma^2 - \frac{1}{2} \left(\|\bar \mu - \bar \nu\|^2 + \Var(\mu) + \Var(\nu) \right) + \Delta_{\mu,\nu}.
$
\label{lem:decomp-kap}
\end{lemma}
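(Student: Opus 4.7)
The plan is to substitute the series expansion from Lemma \ref{lem:Kflip} into the definition of $\kappa(\mu,\nu)$ and then integrate term by term, reducing everything to second moments that Lemma \ref{lem:mean+Var} handles.

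First I would write $\kappa(\mu,\nu) = \int_{(p,q)} K(p,q) \dir{\textsf{m}_{\mu,\nu}(p,q)}$ and use Lemma \ref{lem:Kflip} to replace $K(p,q)$ by $\sigma^2 - \tfrac{1}{2}\|p-q\|^2 + \sum_{i\geq 2} \frac{(-\|p-q\|^2)^i}{2^i\sigma^{2i-2} i!}$. Since $\mu$ and $\nu$ are probability measures, the product measure $\textsf{m}_{\mu,\nu}$ has total mass one, so the constant term integrates to $\sigma^2$. The tail $i \geq 2$ integrates to $\Delta_{\mu,\nu}$ by definition. The only piece that requires work is the middle term, reducing the problem to evaluating $\int_{(p,q)} \|p-q\|^2 \dir{\textsf{m}_{\mu,\nu}(p,q)}$.

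Next I would compute this second moment by Fubini, first integrating over $p$ against $\mu$ and then over $q$ against $\nu$. Applying Lemma \ref{lem:mean+Var} with $x = q$ gives $\int_p \|p-q\|^2 \dir{\mu(p)} = \|\bar\mu - q\|^2 + \Var(\mu)$. Integrating this against $\dir{\nu(q)}$, and invoking Lemma \ref{lem:mean+Var} a second time (now with $\nu$ and $x = \bar\mu$), yields $\|\bar\nu - \bar\mu\|^2 + \Var(\nu) + \Var(\mu)$. Multiplying by $-\tfrac{1}{2}$ and combining with the two pieces identified above produces exactly the claimed identity.

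There is essentially no obstacle beyond justifying the termwise integration of the infinite series defining $\Delta_{\mu,\nu}$. Because the Taylor coefficients of $e^{-t}$ are absolutely summable and $K$ is uniformly bounded by $\sigma^2$, dominated convergence (or Fubini--Tonelli applied to the absolute-value series) legitimizes swapping the sum and integral against the probability measure $\textsf{m}_{\mu,\nu}$. In the writeup I would note this justification briefly and otherwise present the calculation as three lines of algebra.
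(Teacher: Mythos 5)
Your proposal is correct and follows essentially the same route as the paper: expand $K(p,q)$ via Lemma \ref{lem:Kflip}, separate off the $\Delta_{\mu,\nu}$ tail, and evaluate the remaining second moment by applying Lemma \ref{lem:mean+Var} twice under Fubini. The brief justification you add for interchanging the sum and integral is a detail the paper omits but is harmless and sound.
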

\begin{proof}
We use Lemma \ref{lem:Kflip} to expand
\begin{align*}
\kappa(\mu,\nu) 
&= 
\int_{(p,q)} K(p,q) \dir{\textsf{m}_{\mu,\nu}(p,q)}
\\ &=
\sigma^2 - \int_{(p,q)}\left( \frac{\|p-q\|^2}{2} - \sum_{i=2}^\infty \frac{(-\|p-q\|^2)^i}{2^{i+1} \sigma^{2i-2} i!}\right)  \dir{\textsf{m}_{\mu,\nu}(p,q)}.
\end{align*}
After shifting the $\Delta_{\mu,\nu}$ term outside, we can use Lemma \ref{lem:mean+Var} (twice) to rewrite
\begin{align*}
\int_p \left( \int_q \|p-q\|^2 \dir{\nu(q)} \right) \dir{\mu(p)}
&=
\int_p \left(\|p - \bar \nu\|^2 + \Var(\nu)\right) \dir{\mu(p)}
\\ &=
\|\bar \mu - \bar \nu\|^2 + \Var(\mu) + \Var(\nu). \qedhere
\end{align*}
\end{proof}

\begin{theorem}
For any two probability measures $\mu$ and $\nu$ defined on $\Rspace^d$ 
$\begin{displaystyle}
\lim_{\sigma \to \infty} D_K(\mu,\nu) = \|\bar \mu - \bar \nu\|.
\end{displaystyle}$  
\label{thm:DK-means}
\end{theorem}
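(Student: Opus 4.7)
The plan is to substitute the decomposition from Lemma \ref{lem:decomp-kap} into the definition $D_K(\mu,\nu)^2 = \kappa(\mu,\mu) + \kappa(\nu,\nu) - 2\kappa(\mu,\nu)$ and watch the leading terms collapse. Applying the lemma to each of the three terms, the $\sigma^2$ contributions yield $\sigma^2 + \sigma^2 - 2\sigma^2 = 0$, the variance contributions yield $(-\Var(\mu)) + (-\Var(\nu)) - (-\Var(\mu) - \Var(\nu)) = 0$, and the mean term from $-2\kappa(\mu,\nu)$ contributes $+\|\bar\mu - \bar\nu\|^2$ (the self-terms $\|\bar\mu - \bar\mu\|^2$ and $\|\bar\nu - \bar\nu\|^2$ vanish). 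What remains is
\[
D_K(\mu,\nu)^2 \;=\; \|\bar\mu - \bar\nu\|^2 \;+\; \Delta_{\mu,\mu} + \Delta_{\nu,\nu} - 2\Delta_{\mu,\nu}.
\]

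The second step is to show that each $\Delta$ term tends to $0$ as $\sigma \to \infty$. Recall
\[
\Delta_{\mu,\nu} \;=\; \int_{(p,q)} \sum_{i=2}^\infty \frac{(-\|p-q\|^2)^i}{2^i \sigma^{2i-2} i!}\, \dir{\mathsf{m}_{\mu,\nu}(p,q)}.
\]
The $i$-th summand carries a factor $\sigma^{-(2i-2)}$, so once we may exchange sum and integral, the whole series is dominated by $O(1/\sigma^2)$. To justify the exchange, I will use the bound $|e^{-t} - (1-t)| \le t^2/2$ for $t \ge 0$ applied to $t = \|p-q\|^2/(2\sigma^2)$, which gives directly
\[
\bigl|\Delta_{\mu,\nu}\bigr| \;\le\; \int_{(p,q)} \frac{\|p-q\|^4}{8\sigma^2}\, \dir{\mathsf{m}_{\mu,\nu}(p,q)} \;=\; \frac{1}{8\sigma^2}\int_{(p,q)} \|p-q\|^4 \, \dir{\mathsf{m}_{\mu,\nu}(p,q)},
\]
so the argument really only needs a single elementary inequality rather than term-by-term convergence.

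Taking $\sigma \to \infty$ then gives $\Delta_{\mu,\mu}, \Delta_{\nu,\nu}, \Delta_{\mu,\nu} \to 0$, so $D_K(\mu,\nu)^2 \to \|\bar\mu - \bar\nu\|^2$, and taking square roots (which is continuous on $[0,\infty)$) yields the claim.

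The main obstacle is the integrability needed to get the $\Delta$ bound: the argument above implicitly requires the product measure $\mathsf{m}_{\mu,\nu}$ to have finite fourth moment in the pairwise distance, i.e., $\int \|p-q\|^4 \dir{\mathsf{m}_{\mu,\nu}} < \infty$. For probability measures of compact support (the setting of interest for geometric inference), this is automatic; I will state this assumption explicitly. If one wants the theorem without any moment hypothesis, the hard part becomes justifying the limit when the measures have heavy tails, for which a truncation argument splitting $\mu = \mu|_{B_R} + \mu|_{B_R^c}$ and letting $R$ grow slowly compared to $\sigma$ would be required; I would note this as a remark rather than attempt it in the main proof.
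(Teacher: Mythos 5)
Your proof follows essentially the same route as the paper's: substitute Lemma \ref{lem:decomp-kap} into the three $\kappa$ terms, observe that the $\sigma^2$, mean, and variance contributions cancel to leave $\|\bar\mu-\bar\nu\|^2$ plus the $\Delta$ remainders, and send $\sigma\to\infty$. Your handling of the remainder is in fact more careful than the paper's, which only remarks that each term of $\Delta_{\mu,\nu}$ carries a factor of $\sigma^{-2}$ or higher; your single inequality $|e^{-t}-(1-t)|\le t^2/2$ yields the clean bound $|\Delta_{\mu,\nu}|\le \frac{1}{8\sigma^2}\int\|p-q\|^4\,\dir{\textsf{m}_{\mu,\nu}(p,q)}$ and correctly flags the finite-moment hypothesis that the paper leaves implicit.
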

\begin{proof} 
First expand 
\begin{align*}
(D_K(\mu,\nu))^2 
= & \;
\kappa(\mu,\mu) + \kappa(\nu,\nu) - 2 \kappa(\mu,\nu)
\\ = & 
\left(\sigma^2 - \frac{1}{2} \|\bar \mu - \bar \mu\|^2 - \Var(\mu)  + \Delta_{\mu,\mu} \right)
 + 
\left(\sigma^2 - \frac{1}{2} \|\bar \nu - \bar \nu\|^2 - \Var(\nu)  + \Delta_{\nu,\nu} \right)
 \\ & - 
2\left(\sigma^2 - \frac{1}{2} \|\bar \mu - \bar \nu\|^2 - \frac{1}{2} \Var(\mu) - \frac{1}{2}\Var(\nu)  + \Delta_{\mu,\nu} \right)
\\ =& \;
\|\bar \mu - \bar \nu\|^2 + \Delta_{\mu,\mu} + \Delta_{\nu,\nu} - 2 \Delta_{\mu,\nu}.
\end{align*}
Finally we observe that since all terms of $\Delta_{\mu,\nu}$ are divided by $\sigma^2$ or larger powers of $\sigma$.  Thus as $\sigma$ increases $\Delta_{\mu,\nu}$ approaches $0$ and $(D_K(\mu,\nu))^2$ approaches $\|\bar \mu - \bar \nu\|^2$, completing the proof.  
\end{proof}

Now we can relate $D_K(\mu,\nu)$ to $W_2(\mu,\nu)$ through $\|\bar \mu - \bar \nu\|$.  
The next result is a known lower bounds for the Earth movers distance~\cite{Coh99}[Theorem 7].  We reprove it in Appendix \ref{app:stability} for completeness.

\begin{lemma} 
For any probability measures $\mu$ and $\nu$ defined on $\Rspace^d$ we have
$\|\bar \mu - \bar \nu\| \leq W_2(\mu,\nu).$
\label{lem:EMD}
\end{lemma}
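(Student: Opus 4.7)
The plan is to exploit the basic identity that for any transference plan $\pi \in \Pi(\mu, \nu)$, integrating the displacement $x-y$ against $\pi$ recovers the difference of means. Since the marginals of $\pi$ are $\mu$ and $\nu$, we have $\int x \dir{\pi(x,y)} = \bar\mu$ and $\int y \dir{\pi(x,y)} = \bar\nu$, so $\bar\mu - \bar\nu = \int (x-y) \dir{\pi(x,y)}$, where all integrals are taken over $\Rspace^d \times \Rspace^d$. This identity converts a statement about the first moments of $\mu$ and $\nu$ into one about a joint integral against $\pi$, which is precisely the object $W_2$ infimizes over.

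Given the identity, the main estimate is a vector-valued Jensen inequality applied to $\pi$ (a probability measure) with the convex function $\|\cdot\|^2$. The key bound is $\left\| \int (x-y) \dir{\pi(x,y)} \right\|^2 \leq \int \|x-y\|^2 \dir{\pi(x,y)}$, which I would prove coordinate-wise: apply the scalar Jensen inequality $(\int Z \dir{\pi})^2 \leq \int Z^2 \dir{\pi}$ to each component of the displacement vector and sum over the $d$ coordinates. Taking square roots then yields $\|\bar\mu - \bar\nu\| \leq \left(\int \|x-y\|^2 \dir{\pi(x,y)}\right)^{1/2}$.

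Finally, since the left-hand side is independent of $\pi$, I take the infimum over $\pi \in \Pi(\mu, \nu)$ on the right-hand side, which by definition equals $W_2(\mu,\nu)$. This produces the desired inequality $\|\bar\mu - \bar\nu\| \leq W_2(\mu, \nu)$. I do not foresee any real obstacle: the argument is essentially a one-line Jensen calculation wrapped by an infimum, with the only care being to handle the vector norm (rather than a scalar) under the integral sign.
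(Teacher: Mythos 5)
Your proposal is correct and follows essentially the same route as the paper: both reduce the claim to Jensen's inequality applied to the displacement $x-y$ integrated against a transference plan, using that the marginals of $\pi$ recover $\bar\mu$ and $\bar\nu$. The only cosmetic difference is that you apply vector-valued Jensen coordinate-wise, whereas the paper first projects the displacement onto the unit vector $u_{\mu,\nu}$ and then applies the scalar inequality.
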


We can now combine these results to achieve the following theorem.  

\begin{theorem}
\label{thm:DK-W2}
For any two probability measures $\mu$ and $\nu$ defined on $\Rspace^d$ 
$\begin{displaystyle}
\lim_{\sigma \to \infty} D_K(\mu,\nu) = \|\bar \mu - \bar \nu\|
\end{displaystyle}$  and $\|\bar \mu - \bar \nu\| \leq W_2(\mu,\nu).$  Thus
$\lim_{\sigma \to \infty} D_K(\mu,\nu) \leq W_2(\mu,\nu)$.  
\end{theorem}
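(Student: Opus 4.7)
The plan is to observe that the theorem is essentially a bookkeeping statement: both of its nontrivial ingredients have already been established as Theorem \ref{thm:DK-means} and Lemma \ref{lem:EMD}, so the work reduces to chaining them. First I would invoke Theorem \ref{thm:DK-means}, which asserts $\lim_{\sigma \to \infty} D_K(\mu,\nu) = \|\bar\mu - \bar\nu\|$; this is the content of the first displayed equality and needs no further computation since the Taylor-expansion bookkeeping (via Lemma \ref{lem:Kflip}, Lemma \ref{lem:mean+Var}, and Lemma \ref{lem:decomp-kap}) has already driven the $\Delta_{\mu,\nu}$-type error terms to zero as $\sigma \to \infty$.

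Next I would cite Lemma \ref{lem:EMD} directly to obtain $\|\bar\mu - \bar\nu\| \leq W_2(\mu,\nu)$. Note that the right-hand side is independent of $\sigma$, so no interchange of limit and inequality is required: the bound on $\|\bar\mu - \bar\nu\|$ is absolute, and the limit on the left of the first equality is a fixed real number. Chaining the equality with the inequality via transitivity yields
\[
\lim_{\sigma \to \infty} D_K(\mu,\nu) \;=\; \|\bar\mu - \bar\nu\| \;\leq\; W_2(\mu,\nu),
\]
which is precisely the three assertions of the theorem.

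There is no genuine obstacle in this final step; all the analytic difficulty lived in Theorem \ref{thm:DK-means} (controlling the Gaussian tail terms uniformly in the measures) and in the Jensen-inequality argument behind Lemma \ref{lem:EMD}. If I were worried about any subtlety, it would only be ensuring that $\mu$ and $\nu$ indeed have finite first and second moments so that $\bar\mu$, $\bar\nu$, and $W_2(\mu,\nu)$ are well-defined; but since the hypothesis of Lemma \ref{lem:EMD} already implicitly assumes $W_2(\mu,\nu) < \infty$ (otherwise the conclusion is vacuous), and since Theorem \ref{thm:DK-means} holds for arbitrary probability measures with well-defined means, the combination goes through without additional assumptions.
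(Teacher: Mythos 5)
Your proposal is correct and matches the paper exactly: the paper introduces Theorem \ref{thm:DK-W2} with the sentence ``We can now combine these results,'' meaning its proof is precisely the chaining of Theorem \ref{thm:DK-means} and Lemma \ref{lem:EMD} that you describe. Your added remark that $W_2(\mu,\nu)$ is independent of $\sigma$, so no limit/inequality interchange is needed, is a sensible (if unstated in the paper) sanity check.
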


\subsection{Kernel Distance Stability with Respect to $\sigma$}
\label{subsection:sigma}
We now explore the Lipschitz properties of $\dK_\mu$ with respect to the noise parameter $\sigma$.  
We argue any distance function that is robust to noise needs some parameter to address how many outliers to ignore or how far away a point is that is an  outlier.  For instance, this parameter in $\dC_{\mu, m_0}$ is $m_0$ which controls the amount of measure $\mu$ to be used in the distance.  

Here we show that $\dK_\mu$ has a particularly nice property, that it is Lipschitz with respect to the choice of $\sigma$ for any fixed $x$.  
The larger $\sigma$ the more effect outliers have, and the smaller $\sigma$ the less the data is smoothed and thus the closer the noise needs to be to the underlying object to effect the inference.  

\begin{lemma}
Let $h(\sigma,z) = \exp(-z^2/2\sigma^2)$.  We can bound
$h(\sigma,z) \leq 1$, 
$\frac{\dir{}}{\dir \sigma} h(\sigma,z) \leq (2/e)/\sigma$ and 
$\frac{\dir{}^2}{\dir \sigma^2} h(\sigma,z) \leq (18/e^{3})/ \sigma^2$ over any choice of $z>0$.  
\label{lem:h-max}
\end{lemma}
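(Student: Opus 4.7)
The plan is to treat each of the three bounds as an optimization in $z$ (for fixed $\sigma$), and to convert each into a clean one-variable problem by the substitution $u = z^2/(2\sigma^2)$, which ranges over $u \geq 0$ as $z$ ranges over $z > 0$. In this variable $h(\sigma,z) = e^{-u}$, and the powers of $z$ that appear in the derivatives are controlled by powers of $u$, while the dependence on $\sigma$ factors out cleanly. Once in that form, each bound reduces to maximizing a function of the form $u^k e^{-u}$ or a polynomial combination thereof, whose maxima can be found by elementary calculus.

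First, the bound $h(\sigma,z) \le 1$ is immediate because $z^2/(2\sigma^2) \ge 0$ forces $e^{-z^2/2\sigma^2} \le e^0 = 1$. For the first derivative I would compute
\[
\frac{\mathrm{d}}{\mathrm{d}\sigma} h(\sigma,z) = \frac{z^2}{\sigma^3}\,\exp\!\left(-\frac{z^2}{2\sigma^2}\right) = \frac{2u e^{-u}}{\sigma}.
\]
The function $u \mapsto u e^{-u}$ is maximized at $u=1$ with value $1/e$, giving the claimed bound $(2/e)/\sigma$.

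For the second derivative, I would differentiate once more to obtain
\[
\frac{\mathrm{d}^2}{\mathrm{d}\sigma^2} h(\sigma,z) = \exp\!\left(-\frac{z^2}{2\sigma^2}\right) \left(\frac{z^4}{\sigma^6} - \frac{3 z^2}{\sigma^4}\right) = \frac{e^{-u}}{\sigma^2}\bigl(4u^2 - 6u\bigr),
\]
so the task is to maximize $g(u) := (4u^2 - 6u)e^{-u}$ over $u \ge 0$. Setting $g'(u) = e^{-u}(-4u^2 + 14u - 6) = 0$ gives $2u^2 - 7u + 3 = 0$, with roots $u = 1/2$ and $u = 3$; the first is a local minimum (value $-2/\sqrt{e}$) and the second a local maximum with value $g(3) = 18/e^3$. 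Comparing against the boundary behavior ($g(0)=0$, $g(u) \to 0$ as $u \to \infty$) confirms $18/e^3$ is the global maximum, yielding the claimed $(18/e^3)/\sigma^2$ bound.

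There is no real obstacle here; the only thing to be careful about is the sign analysis in the second-derivative step, since $g(u)$ is negative on $(0,3/2)$ and one must verify that the positive local maximum at $u=3$ dominates the behavior at infinity and on the interior. The $u$-substitution makes both the $\sigma$-scaling and the maximization transparent, so the proof is essentially a short calculus verification.
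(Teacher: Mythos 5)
Your proposal is correct and follows essentially the same route as the paper: compute the first and second $\sigma$-derivatives and then maximize over the remaining variable by finding critical points, arriving at the same extrema (your $u=1$, $u=1/2$, $u=3$ correspond exactly to the paper's $z=\sqrt{2}\sigma$, $z=\sigma$, $z=\sqrt{6}\sigma$) and the same values $2/(e\sigma)$ and $18/(e^3\sigma^2)$. The substitution $u = z^2/(2\sigma^2)$ is a clean cosmetic simplification of the same calculus verification, not a different argument.
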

\begin{proof}
The first bound follows from $y = -z^2/2\sigma^2 \leq 0$ and $\exp(y) \leq 1$ for $y\leq 0$. 

Next we define
\begin{align*}
w_1(\sigma,z) &= \frac{\dir{}}{\dir \sigma} h(\sigma,z) = \frac{z^2}{\sigma^3} \exp\left(\frac{-z^2}{2 \sigma^2}\right), \text{ and}
\\
w_2(\sigma,z) &= \frac{\dir{}^2}{\dir \sigma^2} h(\sigma,z) = \left(\frac{z^4}{\sigma^6} - \frac{3 z^2}{\sigma^4} \right) \exp\left(\frac{-z^2}{2 \sigma^2}\right). 
\end{align*}

Now to solve the first part, we differentiate $w_1$ with respect to $z$ to find its maximum over all choices of $z$.  
\[
\frac{\dir{}}{\dir z} w_1(\sigma,z) = \left(\frac{2 z}{\sigma^3} - \frac{z^3}{\sigma^5}\right) \exp\left(\frac{-z^2}{2 \sigma^2}\right)
\]
Where $(\dir{} / \dir z) w_1(\sigma,z) = 0$ at $z=0$, $z=\sqrt{2}\sigma$ and as $z$ approaches $\infty$.  Thus the maximum must occur at one of these values.  Both $w_1(\sigma,0) = 0$ and $\lim_{z \to \infty} w_1(\sigma,z) = 0$, while $w_1(\sigma,\sqrt{2}\sigma) = (2/e)/\sigma$, proving the first part.  

To solve the second part, we perform the same approach on $w_2$.  
\begin{align*}
\frac{\dir{}}{\dir z} w_2(\sigma,z) 
&= 
\left(\frac{- z^5}{\sigma^8} + \frac{3 z^3}{\sigma^6} + \frac{4z^3}{\sigma^6} - \frac{6z}{\sigma^4}\right) \exp\left(\frac{-z^2}{2 \sigma^2}\right)
\\ &= 
\frac{z}{\sigma^4} \left(\frac{- z^4}{\sigma^4} + \frac{7 z^2}{\sigma^2} - 6\right) \exp\left(\frac{-z^2}{2 \sigma^2}\right) 
\end{align*}
Thus $(\dir{} / \dir z) w_2(\sigma,z) = 0$ at $z=\{0, \sigma, \sqrt{6}\sigma\}$ and as $z$ goes to $\infty$ for $z \in [0,\infty)$.  Both $w_2(\sigma,0) = 0$ and $\lim_{z \to \infty} w_2(\sigma,z) = 0$.  The minimum occurs at $w_2(\sigma,z=\sigma) = (-2/\sqrt{e})/\sigma^2$.  
The maximum occurs at $w_2(\sigma,z=\sqrt{6}\sigma) = (18/e^{3})/\sigma^2$.  
\end{proof}

\begin{theorem}
\label{thm:sigma-Lip}
For any measure $\mu$ defined on $\mathbb{R}^d$ and $x \in \mathbb{R}^d$, $\dK_\mu(x)$ is $\ell$-Lipschitz with respect to $\sigma$, for $\ell = 18/e^3 + 8/e + 2 < 6$.
\end{theorem}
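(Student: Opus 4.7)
The plan is to reduce the claim to a semiconcavity statement in the parameter $\sigma$ and then invoke Lemma \ref{lem:semi-Lip} with $\sigma$ in place of the spatial variable. Concretely, I would show that $\sigma \mapsto (\dK_\mu(x;\sigma))^2 - \ell\sigma^2$ is concave on $\sigma > 0$; since $\ell \ge 1$ (indeed $\ell \approx 5.84$), Lemma \ref{lem:semi-Lip} then gives that $\sigma \mapsto \dK_\mu(x;\sigma)$ is $\ell$-Lipschitz, which is exactly the desired conclusion. Twice-differentiability in $\sigma$ is automatic: $(\dK_\mu(x;\sigma))^2$ is smooth in $\sigma > 0$, and by characteristicness of the Gaussian kernel $\dK_\mu(x;\sigma)>0$ whenever $\mu \ne \delta_x$; the degenerate case $\mu = \delta_x$ makes $\dK_\mu \equiv 0$ and trivializes the claim.

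The key computation is the second derivative of $(\dK_\mu)^2$ in $\sigma$. Using $\kappa(x,x;\sigma) = K_\sigma(x,x) = \sigma^2$, write
\[
(\dK_\mu(x;\sigma))^2 \;=\; \sigma^2 + \kappa(\mu,\mu;\sigma) - 2\kappa(\mu,x;\sigma),
\]
and set $\phi(\sigma,z) := \sigma^2 h(\sigma,z) = K_\sigma(p,q)$ with $z = \|p-q\|$, so that each kernel similarity is an integral of $\phi$ against the appropriate measure. Differentiating twice under the integral sign,
\[
\frac{d^2}{d\sigma^2}(\dK_\mu(x;\sigma))^2 \;=\; 2 + \iint \partial_\sigma^2\phi(\sigma,\|p-q\|)\,d\mu(p)d\mu(q) - 2\int \partial_\sigma^2\phi(\sigma,\|p-x\|)\,d\mu(p).
\]

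Now I would bound $\partial_\sigma^2\phi$ pointwise in $z$ from both sides. From the product rule, $\partial_\sigma^2\phi = 2h + 4\sigma h' + \sigma^2 h''$, and Lemma \ref{lem:h-max} bounds each summand from above, yielding $\partial_\sigma^2\phi \le 2 + 8/e + 18/e^3 = \ell$. For the matching lower bound, I would perform the substitution $u = z^2/\sigma^2$ (so $\partial_\sigma u = -2u/\sigma$) and compute directly to obtain the factored form $\partial_\sigma^2\phi = (2+u+u^2)\exp(-u/2)$, which is manifestly nonnegative because $2+u+u^2$ is a positive-definite quadratic. Thus $0 \le \partial_\sigma^2\phi \le \ell$, so the $\mu\otimes\mu$ integral lies in $[0,\ell]$ and the $-2\int$ term is nonpositive; combining,
\[
\frac{d^2}{d\sigma^2}(\dK_\mu(x;\sigma))^2 \;\le\; 2 + \ell \;\le\; 2\ell,
\]
where the last inequality uses $\ell \ge 2$. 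Hence $(\dK_\mu)^2 - \ell\sigma^2$ has nonpositive second derivative in $\sigma$, is therefore concave, and Lemma \ref{lem:semi-Lip} finishes the proof.

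The main obstacle is that Lemma \ref{lem:h-max} supplies only an upper bound on $h''$, and $h''$ genuinely changes sign, so a naive term-by-term combination of its bounds would produce a spurious negative contribution when paired with the factor $-2$ in front of the second integral. The remedy is the clean factorisation $\partial_\sigma^2\phi = (2+u+u^2)e^{-u/2}$ from the $u$-substitution: nonnegativity becomes transparent, and the upper bound from Lemma \ref{lem:h-max} then closes the semiconcavity inequality at exactly $2 + \ell \le 2\ell$, matching the constant in the theorem statement.
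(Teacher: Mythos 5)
Your proof is correct, and while it shares the paper's high-level strategy --- reduce the claim to $\ell$-semiconcavity of $(\dK_\mu)^2$ in $\sigma$, then invoke Lemma \ref{lem:semi-Lip} together with the pointwise bounds of Lemma \ref{lem:h-max} --- your decomposition is genuinely different and, in fact, more careful. The paper writes $(\dK_\mu(x))^2 = \sigma^2 f_x(\sigma)$ with $f_x(\sigma) = \int h(\sigma,\|p-q\|)\,\dir{\textsf{M}_{\mu,\delta_x}}$ and bounds the three terms $\sigma^2 f_x''$, $4\sigma f_x'$, $2f_x$ separately, using the estimate $\int g\,\dir{\textsf{M}_{\mu,\delta_x}} \leq 2\max g$. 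That estimate is only valid when the integrand is nonnegative (so the $-2\,\textsf{m}_{\mu,\delta_x}$ part can be dropped); it is fine for $h$ and $\partial_\sigma h$, but $\partial_\sigma^2 h = w_2$ changes sign, so the paper's intermediate claim $\sigma^2 f_x'' \leq 36/e^3$ is not actually justified --- the cross term $-2\int w_2\,\dir{\textsf{m}_{\mu,\delta_x}}$ can contribute up to $+(4/\sqrt{e})/\sigma^2$, exceeding that bound (the paper's final conclusion survives only because the bound $16/e$ on the middle term has compensating slack). You sidestep this entirely by differentiating the whole kernel $\phi = \sigma^2 h$ at once: the identity $\partial_\sigma^2\phi = (2+u+u^2)e^{-u/2}$ with $u = z^2/\sigma^2$ (which I have verified, and which agrees with $2h + 4\sigma h' + \sigma^2 h''$) gives the two-sided pointwise bound $0 \leq \partial_\sigma^2\phi \leq \ell$, after which the signed-measure bookkeeping is trivial: the $\delta_x\otimes\delta_x$ part contributes exactly $2$, the $\mu\otimes\mu$ part at most $\ell$, and the $-2\,\textsf{m}_{\mu,\delta_x}$ part is nonpositive, yielding $2+\ell \leq 2\ell$. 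Your handling of the degenerate case $\mu = \delta_x$ (needed for differentiability of the square root) is also a point the paper glosses over. In short: same skeleton, but your grouping of terms closes a real gap in the published argument.
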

\begin{proof}
Recall that $\textsf{m}_{\mu,\nu}$ is the product measure of any $\mu$ and $\nu$, and
define $\textsf{M}_{\mu,\nu}$ as $\textsf{M}_{\mu,\nu}(p,q) = \textsf{m}_{\mu,\mu}(p,q) + \textsf{m}_{\nu,\nu}(p,q) - 2 \textsf{m}_{\mu,\nu}(p,q)$.  
It is now useful to define a function $f_x(\sigma)$ as  
\[
f_x(\sigma) = \int_{(p,q)} \exp\left(\frac{-\|p-q\|^2}{2\sigma^2}\right) \dir{\textsf{M}_{\mu,\delta_x}(p,q)}. 
\]
So $\dK_\mu(x) = \sigma \sqrt{f_x(\sigma)}$ and we can write another function as
\[
F(\sigma) = (\dK_\mu(x))^2 - \ell \|\sigma\|^2 = \sigma^2 f_x(\sigma) - \ell \sigma^2.  
\]
Now to prove  $\dK_\mu(x)$ is $\ell$-Lipschitz, we can show that $(\dK_\mu)^2$ is $\ell$-semiconcave with respect to $\sigma$, and apply Lemma \ref{lem:semi-Lip}.  This boils down to showing the second derivative of $F(\sigma)$ is always non-positive.  

\[
\frac{\dir{}}{\dir \sigma} F(\sigma) 
= 
2 \sigma f_x(\sigma)  + \sigma^2 \frac{\dir{}}{\dir \sigma} f_x(\sigma)  - 2 \sigma \ell.
\]

\[
\frac{\dir{}^2}{\dir \sigma^2} F(\sigma) 
= 
 \sigma^2 \frac{\dir{}^2}{\dir \sigma^2} f_x(\sigma)  +  4 \sigma \frac{\dir{}}{\dir \sigma} f_x(\sigma)  + 2 f_x(\sigma) - 2 \ell.
\]
First we note that since $\int_{(p,q)} c \cdot \dir{} \textsf{m}_{\mu,\nu}(p,q) = c$ for any product distribution $\textsf{m}_{\mu,\nu}$ of two distributions $\mu$ and $\nu$, including when $\mu$ or $\nu$ is a Dirac mass, then 
\[
\int_{(p,q)} c \cdot \dir{}\textsf{M}_{\mu,\delta_x}(p,q) = 
\int_{(p,q)} c \cdot \dir{}\Big[\textsf{m}_{\mu,\mu} + \textsf{m}_{\delta_x,\delta_x} - 2 \textsf{m}_{\mu,\delta_x}\Big] (p,q) \leq 2c.
\]
Thus since $\exp \left( \frac{-\|p-q\|^2}{2 \sigma^2} \right)$ is in $[0,1]$ for all choices of $p$, $q$, and $\sigma >0$, then $0 \leq f_x(\sigma) \leq 2$ and $2 f_x(\sigma) \leq 4$.  This bounds the third term in $\frac{\dir{}^2}{\dir \sigma^2} F(\sigma)$, we now need to use a similar approach to bound the first and second terms.  

Let $h(\sigma,z) = \exp\left(\frac{-z^2}{2 \sigma^2}\right)$, so we can apply Lemma \ref{lem:h-max}.
\[
4 \sigma \frac{\dir{}}{\dir \sigma} f_x(\sigma) 
=
 4 \sigma \int_{(p,q)} \left(\frac{\dir{}}{\dir \sigma} h(\sigma,\|p-q\|)\right) \dir{}\textsf{M}_{\mu,\delta_x}(p,q)
\leq  
4\sigma ((2/e)/\sigma) 2 = 16/e
\]

\[
\sigma^2 \frac{\dir{}^2}{\dir \sigma^2} f_x(\sigma) 
= 
\sigma^2    \int_{(p,q)} \left(\frac{\dir{}^2}{\dir \sigma^2} h(\sigma,\|p-q\|) \right) \dir{}\textsf{M}_{\mu,\delta_x}(p,q)
\leq
\sigma^2 \left((18/e^{3})/\sigma^2\right) 2
=36/e^3  
\]
 
Then we complete the proof using the upper bound of each item of $\frac{\dir{}^2}{\dir \sigma^2} F(\sigma)$
\begin{align*}
\frac{\dir{}^2}{\dir \sigma^2} F(\sigma) 
& = 
 \sigma^2 \frac{\dir{}^2}{\dir \sigma^2} f_x(\sigma)  +  4 \sigma \frac{\dir{}}{\dir \sigma} f_x(\sigma)  + 2 f_x(\sigma) - 2 \ell
 \\ & \leq 36/e^3 + 16/e + 4 - 2(18/e^3+8/e+2) = 0.  \qedhere
\end{align*}
\end{proof}

\paragraph{Lipschitz in $m_0$ for $\dC_{\mu,m_0}$.}
We show that there is no Lipschitz property for $\dC_{\mu,m_0}$, with respect to $m_0$ that is independent of the measure $\mu$.  
Consider a measure $\mu_P$ for point set $P \subset \mathbb{R}$ consisting of two points at $a = 0$ and at $b=\Delta$.
Now consider $\dC_{\muP,m_0}(a)$.  When $m_0 \leq 1/2$ then $\dC_{\muP,m_0}(a) =0$ is constant.  But for $m_0 = 1/2 + \alpha$ for $\alpha>0$, then $\dC_{\muP,m_0}(a) = \alpha\Delta/(1/2+\alpha)$ and 
$\frac{\dir{}}{\dir{m_0}} \dC_{\muP,m_0}(a) = \frac{\dir{}}{\dir{\alpha}} \dC_{\muP,\frac{1}{2}+\alpha}(a) = \frac{(1/2 + 2\alpha)\Delta}{(1/2 + \alpha)^2}$, which is maximized as $\alpha$ approaches $0$ with an infimum of $2\Delta$.  
If $n-1$ points are at $b$ and $1$ point at $a$, then the infimum of the first derivative of $m_0$ is $n\Delta$.  
Hence for a measure $\mu_P$ defined by a point set, the infimum of $\frac{\dir{}}{\dir{m_0}} \dC_{\muP,m_0}(a)$ and, hence a lower bound on the Lipschitz constant is $n \Delta_P$ where $\Delta_P = \max_{p,p' \in P} \|p-p'\|$.

\section{Algorithmic and Approximation Observations}
\label{sec:algo}

\paragraph{Kernel coresets.}
The kernel distance is robust under random samples~\cite{JoshiKommarajuPhillips2011}.  
Specifically, if $Q$ is a point set randomly chosen from $\mu$ of 
size $O((1/\eps^2) (d + \log(1/\delta))$ then 
$\| \kde_\mu - \kde_Q \|_{\infty} \leq \eps$
with probability at least $1-\delta$.  We call such a subset $Q$ and $\eps$-kernel sample of $(\mu, K)$.  
Furthermore, it is also possible to construct $\eps$-kernel samples $Q$ with even smaller size of $|Q| = O(((1/\eps) \sqrt{\log (1/\eps \delta)})^{2d/(d+2)})$~\cite{Phillips2013}; in particular in $\Rspace^2$ the required size is $|Q| = O((1/\eps) \sqrt{\log(1/\eps \delta)})$.  
Exploiting the above constructions, recent work~\cite{big-kde} builds a data structure to allow for efficient approximate evaluations of $\kde_P$ where $|P| = 100{,}000{,}000$.  

These constructions of $Q$ also immediately imply that $\|(\dK_\mu)^2 - (\dK_Q)^2\|_\infty \leq 4\eps$ since $(\dK_\mu(x))^2 = \kappa(\mu,\mu) + \kappa(x,x) - 2\kde_\mu(x)$, and both the first and third term incur at most $2 \eps$ error in converting to $\kappa(Q,Q)$ and $2\kde_Q(x)$, respectively (see Lemma \ref{lem:KD-samp}).  
Thus, an $(\eps^2/4)$-kernel sample $Q$ of $(\mu,K)$ implies that $\|\dK_\mu - \dK_Q\|_\infty \leq \eps$ (see Lemma \ref{lem:KD-samp-nosq}).

This implies algorithms for geometric inference on enormous noisy data sets.  
Moreover, if we assume an input point set $Q$ is drawn iid from some underlying, but unknown distribution $\mu$, we can bound approximations with respect to $\mu$.

\begin{corollary}
Consider a measure $\mu$ defined on $\mathbb{R}^d$, a kernel $K$, and a parameter $\eps \leq R(5 + 4/\alpha^2)$.  
We can create a coreset $Q$ of size $|Q| = O(((1/\eps^2) \sqrt{\log (1/\eps \delta)})^{2d/(d+2)})$ or randomly sample $|Q| = O((1/\eps^4)(d + \log(1/\delta)))$ points so, with probability at least $1-\delta$, any sublevel set $(\dK_\mu)^\eta$ is homotopy equivalent to $(\dK_Q)^r$ for $r \in [4\eps/\alpha^2, R-3\eps]$ and $\eta \in (0, R)$.  
\end{corollary}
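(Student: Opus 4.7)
The plan is to assemble three ingredients that are already laid out in the paper: (i) an existence result for a small point set $Q$ that approximates $\mu$ in the $\kde$-sense, (ii) the translation from a $\kde$-bound to a $\dK$-bound, and (iii) the reconstruction theorem for distance-like functions (Theorem \ref{thm:reconstruct}) applied to the pair $(\dK_\mu, \dK_Q)$. Since each of these pieces is stated elsewhere in the paper, the proof is essentially a bookkeeping argument; the only ``choice'' is to instantiate the coreset/sample accuracy parameter at $\eps^2/4$ rather than at $\eps$.

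First I would apply either the deterministic coreset construction of \cite{Phillips2013} or the random sampling bound of \cite{JoshiKommarajuPhillips2011} with accuracy parameter $\eps' = \eps^2/4$. This yields, with probability at least $1-\delta$, a set $Q$ of size
\[
|Q| \;=\; O\!\Big(\big((1/\eps')\sqrt{\log(1/\eps'\delta)}\big)^{2d/(d+2)}\Big) \;=\; O\!\Big(\big((1/\eps^2)\sqrt{\log(1/\eps\delta)}\big)^{2d/(d+2)}\Big),
\]
or, in the random case, a set of size $O((1/\eps'^2)(d+\log(1/\delta))) = O((1/\eps^4)(d+\log(1/\delta)))$, such that $\|\kde_\mu - \kde_Q\|_\infty \leq \eps'$. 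Constants absorbed into the $O(\cdot)$ handle the discrepancy between $\log(1/\eps'\delta)$ and $\log(1/\eps\delta)$.

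Next I would invoke the already-stated translation (Lemma \ref{lem:KD-samp-nosq}, as referenced in the ``Kernel coresets'' paragraph): the identity $(\dK_\mu(x))^2 = \kappa(\mu,\mu) + \kappa(x,x) - 2\kde_\mu(x)$ implies that an $\eps'$-kernel sample bounds $\|(\dK_\mu)^2 - (\dK_Q)^2\|_\infty \leq 4\eps'$, and taking square roots (using that both $\dK_\mu$ and $\dK_Q$ are non-negative) yields $\|\dK_\mu - \dK_Q\|_\infty \leq 2\sqrt{\eps'} = \eps$ with our choice $\eps'=\eps^2/4$. From Lemmas \ref{lem:semi}--\ref{lem:proper} both $\dK_\mu$ and $\dK_Q$ are distance-like, so the hypotheses of Theorem \ref{thm:reconstruct} are in force; the assumption $\reach_\alpha(\dK_\mu) \geq R$ is the standing one, and $\eps \leq R/(5+4/\alpha^2)$ matches the theorem's regime. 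Applying Theorem \ref{thm:reconstruct} gives that $(\dK_\mu)^\eta$ and $(\dK_Q)^r$ are homotopy equivalent for $r \in [4\eps/\alpha^2,\, R-3\eps]$ and $\eta \in (0,R)$, which is exactly the conclusion.

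I expect no genuine obstacle here, since none of the steps requires a new estimate: the only mild subtlety is tracking the $\sqrt{\cdot}$ loss in going from a $\kde$ guarantee to a $\dK$ guarantee, which forces the factor of $(1/\eps^2)$ (versus $(1/\eps)$) inside the coreset bound and the $(1/\eps^4)$ (versus $(1/\eps^2)$) inside the random sampling bound. If I wanted to sharpen constants I would expand Lemma \ref{lem:KD-samp} carefully to confirm the factor $4$ (from the two $\kappa$ terms each contributing $2\eps'$), but this does not affect the asymptotics claimed in the statement.
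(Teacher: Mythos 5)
Your proposal is correct and follows exactly the paper's own argument: construct an $(\eps^2/4)$-kernel sample $Q$ via \cite{Phillips2013} or \cite{JoshiKommarajuPhillips2011}, convert to $\|\dK_\mu - \dK_Q\|_\infty \leq \eps$ via Lemma \ref{lem:KD-samp-nosq}, and then invoke Theorem \ref{thm:reconstruct}. Your accounting of how the $\eps^2/4$ instantiation produces the $(1/\eps^2)$ and $(1/\eps^4)$ factors in the size bounds is exactly the intended bookkeeping.
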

\begin{proof}
Those bounds are obtained by constructing an $(\eps^2/4)$-kernel sample~\cite{JoshiKommarajuPhillips2011,Phillips2013}, which guarantees $\|\dK_\mu - \dK_Q\|_\infty \leq \eps$ via Lemma \ref{lem:KD-samp-nosq}.  
Then since $\eps \leq R/(5 + 4/\alpha^2)$, with Theorem \ref{thm:reconstruct} any sublevel set $(\dK_\mu)^\eta$ is homotopy equivalent to $(\dK_Q)^r$.  
\end{proof}

\paragraph{Stability of persistence diagrams.}
Furthermore, the stability results on persistence diagrams \cite{Cohen-SteinerEdelsbrunnerHarer2007} hold for kernel density estimates and kernel distance of $\mu$ and $Q$ (where $Q$ is a coreset of $\mu$ with the same size bounds as above).  If $\|f - g\|_\infty \leq \eps$, then $d_B(\dgmD{f}, \dgmD{g}) \leq \eps$, where $d_B$ is the bottleneck distance between persistence diagrams.  Combined with the coreset results above, this immediately implies the following corollaries.  

\begin{corollary}
Consider a measure $\mu$ defined on $\mathbb{R}^d$ and a kernel $K$.  
We can create a core set $Q$ of size $|Q| = O(((1/\eps) \sqrt{\log (1/\eps \delta)})^{2d/(d+2)})$ or randomly sample $|Q| = O((1/\eps^2)(d + \log(1/\delta)))$ points which will have the following properties with probability at least $1-\delta$.  
\begin{itemize}
\item $d_B(\dgmD{\kde_\mu}, \dgmD{\kde_Q}) \leq \eps$.
\item $d_B(\dgmD{(\dK_\mu)^2}, \dgmD{(\dK_Q)^2}) \leq \eps$.
\end{itemize}
\end{corollary}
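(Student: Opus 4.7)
The plan is to reduce the corollary to two ingredients that are both already available in the excerpt: the classical bottleneck stability theorem of Cohen--Steiner, Edelsbrunner, Harer (cited as \cite{Cohen-SteinerEdelsbrunnerHarer2007}), which says that $\|f-g\|_\infty \leq \eps$ implies $d_B(\dgmD{f},\dgmD{g}) \leq \eps$, and the kernel coreset / random sampling guarantees that produce an $\eps$-kernel sample $Q$ of $(\mu,K)$ with the two sizes appearing in the statement (the first bound is from \cite{Phillips2013}, the second is the standard $\eps$-sample VC / Rademacher bound of \cite{JoshiKommarajuPhillips2011}). Both give $\|\kde_\mu - \kde_Q\|_\infty \leq \eps$ with probability at least $1-\delta$.

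For the first bullet, I would simply combine these: pick $Q$ to be an $\eps$-kernel sample of $(\mu,K)$ at the sizes stated, conclude $\|\kde_\mu - \kde_Q\|_\infty \leq \eps$, and invoke bottleneck stability directly on the superlevel-set (equivalently sublevel-set of $-\kde$) filtrations to get $d_B(\dgmD{\kde_\mu}, \dgmD{\kde_Q}) \leq \eps$.

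For the second bullet, I would pass the $\kde$-error through to the squared kernel distance via the identity
\[
(\dK_\mu(x))^2 = \kappa(\mu,\mu) + \kappa(x,x) - 2\kde_\mu(x),
\]
noting that $\kappa(x,x) = \sigma^2$ is independent of $\mu$ and therefore cancels when comparing $\mu$ and $Q$. So
\[
(\dK_\mu(x))^2 - (\dK_Q(x))^2 = \bigl(\kappa(\mu,\mu) - \kappa(Q,Q)\bigr) - 2\bigl(\kde_\mu(x) - \kde_Q(x)\bigr).
\]
The second term is uniformly bounded by $2\eps$ by the $\eps$-kernel sample property. The first term is the same kernel-sample error evaluated ``with respect to $\mu$ itself,'' and is bounded by $2\eps$ by the same Lemma \ref{lem:KD-samp}-style argument (approximating $\kappa(\mu,\mu)$ by $\kappa(\mu,Q)$ and then by $\kappa(Q,Q)$, each step costing at most $\eps$). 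Thus an $(\eps/4)$-kernel sample suffices to give $\|(\dK_\mu)^2 - (\dK_Q)^2\|_\infty \leq \eps$, which only changes constants in the stated sample sizes; applying bottleneck stability again yields the second bullet.

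The only substantive point to be careful about is the second bullet: one must resist the temptation to invoke stability on $\dK_\mu$ rather than $(\dK_\mu)^2$, because stability requires a uniform $L^\infty$ bound between the filtration functions, and Lemma (K4) gives $\|\dK_\mu - \dK_Q\|_\infty \leq D_K(\mu,Q)$ rather than a bound in terms of the $\kde$-sample parameter. Working with $(\dK_\mu)^2$ sidesteps this by reducing everything back to the $\kde$-error on $\kappa(\mu,\cdot)$ and the constant $\kappa(\mu,\mu)$, which is precisely what the kernel coreset controls. Beyond that, the proof is a direct chaining of the two cited ingredients.
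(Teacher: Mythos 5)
Your proposal is correct and follows essentially the same route as the paper: the paper also obtains the corollary by combining the $\eps$-kernel sample guarantees with the decomposition $(\dK_\mu(x))^2 = \kappa(\mu,\mu) + \kappa(x,x) - 2\kde_\mu(x)$ (its Lemma \ref{lem:KD-samp}, giving a $4\eps$ bound on $\|(\dK_\mu)^2-(\dK_Q)^2\|_\infty$ from an $\eps$-kernel sample) and then invoking bottleneck stability of \cite{Cohen-SteinerEdelsbrunnerHarer2007}. Your explicit remark that one should rescale to an $(\eps/4)$-kernel sample, absorbing only constants in the stated sizes, is a careful touch the paper leaves implicit, and your warning against applying stability to $\dK_\mu$ itself correctly anticipates why the paper treats that case in a separate corollary with worse sample sizes.
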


\begin{corollary}
Consider a measure $\mu$ defined on $\mathbb{R}^d$ and a kernel $K$.  
We can create a coreset $Q$ of size $|Q| = O(((1/\eps^2) \sqrt{\log (1/\eps \delta)})^{2d/(d+2)})$ or randomly sample $|Q| = O((1/\eps^4)(d + \log(1/\delta)))$ points which will have the following property with probability at least $1-\delta$.  
\begin{itemize}
\item $d_B(\dgmD{\dK_\mu}, \dgmD{\dK_Q}) \leq \eps$.
\end{itemize}
\end{corollary}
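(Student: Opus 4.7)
The plan is to reduce this corollary to the preceding bullet-point in the previous corollary (on $(\dK_\mu)^2$) plus the sup-norm translation machinery, exactly parallel to how that earlier corollary was obtained from $\kde$. The starting point is the classical stability theorem of Cohen-Steiner, Edelsbrunner, and Harer, which says that for two sufficiently tame real-valued functions $f, g$, we have $d_B(\dgmD{f}, \dgmD{g}) \leq \|f - g\|_\infty$. So it suffices to build $Q$ so that $\|\dK_\mu - \dK_Q\|_\infty \leq \eps$ with probability at least $1-\delta$.

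Next I would invoke Lemma \ref{lem:KD-samp-nosq}, which promotes an $(\eps^2/4)$-kernel sample $Q$ of $(\mu, K)$ to a uniform bound $\|\dK_\mu - \dK_Q\|_\infty \leq \eps$. The appearance of the $\eps^2/4$ accounts for the square root in the definition of $\dK_\mu$: first we get $\|(\dK_\mu)^2 - (\dK_Q)^2\|_\infty \leq 4(\eps^2/4) = \eps^2$ from the two $\kappa$-terms each contributing $2(\eps^2/4)$, and then square-rooting worsens the multiplicative distortion from $\eps^2$ to $\eps$. This is the step where the quadratic blow-up in the sample size appears, and it is the only real arithmetic to track.

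Finally, I would instantiate the existence of such an $(\eps^2/4)$-kernel sample using the two available constructions: the deterministic coreset of Phillips~\cite{Phillips2013}, which gives an $\eps'$-kernel sample of size $O(((1/\eps') \sqrt{\log (1/\eps' \delta)})^{2d/(d+2)})$, and the random sampling bound of Joshi-Kommaraju-Phillips~\cite{JoshiKommarajuPhillips2011}, which gives size $O((1/(\eps')^2)(d + \log(1/\delta)))$. Substituting $\eps' = \eps^2/4$ into these bounds (and absorbing the constant $4$ and logarithmic factor of $1/\eps^2$ versus $1/\eps$ into the $O(\cdot)$ notation) produces exactly the two claimed sizes $O(((1/\eps^2) \sqrt{\log (1/\eps \delta)})^{2d/(d+2)})$ and $O((1/\eps^4)(d + \log(1/\delta)))$.

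Chaining these three pieces gives $d_B(\dgmD{\dK_\mu}, \dgmD{\dK_Q}) \leq \|\dK_\mu - \dK_Q\|_\infty \leq \eps$ with probability at least $1-\delta$, as desired. The only nontrivial step is the $\eps \leftrightarrow \eps^2$ bookkeeping in the second paragraph; everything else is a direct citation. One minor cleanup point is verifying that the functions $\dK_\mu$ and $\dK_Q$ are tame enough for the CEH stability theorem to apply, which follows from the distance-like properties established in Section \ref{sec:prop} (in particular the properness result, Lemma \ref{lem:proper}, ensuring the sublevel sets behave well), analogous to the argument already used for Corollary \ref{cor:wRip}.
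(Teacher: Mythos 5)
Your proposal is correct and follows exactly the route the paper intends: the paper presents this corollary as an immediate consequence of the bottleneck stability theorem of Cohen-Steiner--Edelsbrunner--Harer combined with the fact that an $(\eps^2/4)$-kernel sample yields $\|\dK_\mu - \dK_Q\|_\infty \leq \eps$ via Lemma \ref{lem:KD-samp-nosq}, with the two claimed sizes obtained by substituting $\eps^2/4$ into the coreset and random-sampling bounds. Your bookkeeping of where the quadratic blow-up enters, and your closing remark on tameness, match the paper's (largely implicit) argument.
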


Another bound was independently derived to show an upper bound on the size of a random sample $Q$ such that $d_B(\dgmD{\kde_{\mu_P}}, \dgmD{\kde_Q}) \leq \eps$ in \cite{BFLRSW13};  this can, as above, also be translated into bounds for $\dgmD{(\dK_Q)^2}$ and $\dgmD{\dK_Q}$.  This result assumes $P \subset [-C,C]^d$ and is parametrized by a bandwidth parameter $h$ that retains that $\int_{x \in \mathbb{R}^d} K_h(x,p) \dir{x} = 1$ for all $p$ using that $K_1(\|x-p\|) =K(x,p)$ and $K_h(\|x-p\|) = \frac{1}{h^d} K_1(\|x-p\|^2/h)$.  This ensures that $K(\cdot, p)$ is $(1/h^d)$-Lipschitz and that $K(x,x) = \Theta(1/h^d)$ for any $x$.  Then their bound requires
$|Q| = O(\frac{d}{\eps^2 h^d} \log(\frac{C d }{\eps \delta h}))$ random samples.  

To compare directly against the random sampling result we derive from Joshi \etal~\cite{JoshiKommarajuPhillips2011}, for kernel $K_h(x,p)$ then $\|\kde_{\mu_P} - \kde_Q\|_\infty \leq \eps K_h(x,x) = \eps / h^d$.  Hence, our analysis requires $|Q| = O((1/\eps^2 h^{2d}) (d + \log(1/\delta)))$, and is an improvement when $h = \Omega(1)$ or $C$ is not known or bounded, as well as in some other cases as a function of $\eps$, $h$, $\delta$, and $d$.

\section{Discussion}
\label{app:discuss}
We mention here a few other interesting aspects of our results and observations about topological inference using the kernel distance.  They are related to how the noise parameter $\sigma$ affects the idea of scale, and a few more experiments, including with alternate kernels.

\subsection{Noise and Scale}
Much of geometric and topological reconstruction grew out of the desire to understand shapes at various scales.  A common mechanism is offset based; e.g., $\alpha$-shapes~\cite{Edelsbrunner1993} represent the scale of a shape with the $\alpha$ parameter controlling the offsets of a point cloud.  
There are two parameters with the kernel distance: $r$ controls the offset through the sublevel set of the function, and $\sigma$ controls the noise. 
We argue that any function which is robust to noise must have a parameter that controls the noise (e.g. $\sigma$ for $\dK_\mu$ and $m_0$ for $\dC_{\mu,m_0}$).
Here $\sigma$ clearly defines some sense of scale in the setting of density estimation~\cite{Sil86} and has a geometrical interpretation, while $m_0$ represents a fraction of the measure and is hard to interpret geometrically, as illustrated by the lack of a Lipschitz property for $\dC_{\mu,m_0}$ with respect to $m_0$.  

There are several experiments below, in Section \ref{sec:exp}, 
from which several insights can be drawn.  
One observation is that even though there are two parameters $r$ and $\sigma$ that control the scale, the interesting values typically have $r$ very close to $\sigma$.  
Thus, we recommend to first set $\sigma$ to control the scale at which the data is studied, and then explore the effect of varying $r$ for values near $\sigma$.  
Moreover, not much structure seems to be missed by not exploring the space of both parameters; 
Figure \ref{fig:gamma-v-sigma} shows that fixing one (of $r$ and $\sigma$) and varying the other can provide very similar superlevel sets.  
However, it is possible instead to fix $r$ and explore the persistent topological features in the data \cite{EdelsbrunnerLetscherZomorodian2002,EdelsbrunnerHarer2008} (those less affected by smoothing) by varying $\sigma$.  
On the other hand, it remains a challenging problem to study two parameter persistent homology \cite{CarlssonZomorodian2007,CarlssonSinghZomorodian2009} under the setting of kernel distance (or kernel density estimate).

\subsection{Experiments}
\label{sec:exp}

We consider measures $\mu_P$ defined by a point set $P \subset \Rspace^2$.  
To experimentally visualize the structures of the superlevel sets of kernel density estimates, or equivalently sublevel sets of the kernel distance, we do the simplest thing and just evaluate $\dK_\muP$ at every grid point on a sufficiently dense grid.

\paragraph{Grid approximation.}
Due to the $1$-Lipschitz property of the kernel distance, well chosen grid points have several nice properties.  
We consider the functions up to some resolution parameter $\eps > 0$, consistent with the parameter used to create a coreset approximation $Q$.  
Now specifically, consider an axis-aligned grid $G_{\eps,d}$ with edge length $\eps/2\sqrt{d}$ so no point $x \in \Rspace^d$ is further than $\eps$ from some grid point $g \in G_{\eps,d}$.  
Since $K(x,y) \leq \eps$ when $\|x-y\| \geq 2 \sigma^2 \ln(\sigma^2/\eps) = \delta_{\eps,\sigma}$, we only need to consider grid points $g \in G_{\eps,d}$ which are within $\delta_{\eps,\sigma}$ of some point $p \in P$ (or $q \in Q$, of coreset $Q$ of $P$)~\cite{JoshiKommarajuPhillips2011,big-kde}.  This is at most $(2\sqrt{d} / \eps)^d (2 \delta_{\eps,\sigma})^d = O((\sigma^2 \log (\eps/d) /\eps)^d)$ grid points total for $d$ a fixed constant.  
Furthermore, due to the $1$-Lipschitz property of $\dK_P$, when considering a specific level set at $r$
\begin{itemize}
\item a point $x$ such that $\dK_P(x) \leq r - \eps$ is no further than $\eps$ from some $g \in G$ such that $\dK_P(g) \leq r$, and 
\item every ball $B_{\eps}(x)$ centered at some point $x \in \Rspace^d$ of radius $\eps$ so that all $y \in B_{\eps}(x)$ has $\dK_P(y) \leq r$ has some representative point $g \in G_{\eps,d}$ such that $g \in B_{\eps}(x)$, and hence $\dK_P(g) \leq r$. 
\end{itemize} 
Thus ``deep" regions and spatially thick features are preserved, however thin passageways or layers that are near the threshold $r$, even if they do not correspond to a critical point, may erroneously become disconnected, causing phantom components or other topological features.  However, due to the Lipschitz property, these can be different from $r$ by at most $\eps$, so the errors will have small deviation in persistence.

\paragraph{Varying parameter $r$ or $\sigma$.}
\label{sec:exp}

\begin{figure}
\begin{center}
\includegraphics[width=.4\linewidth]{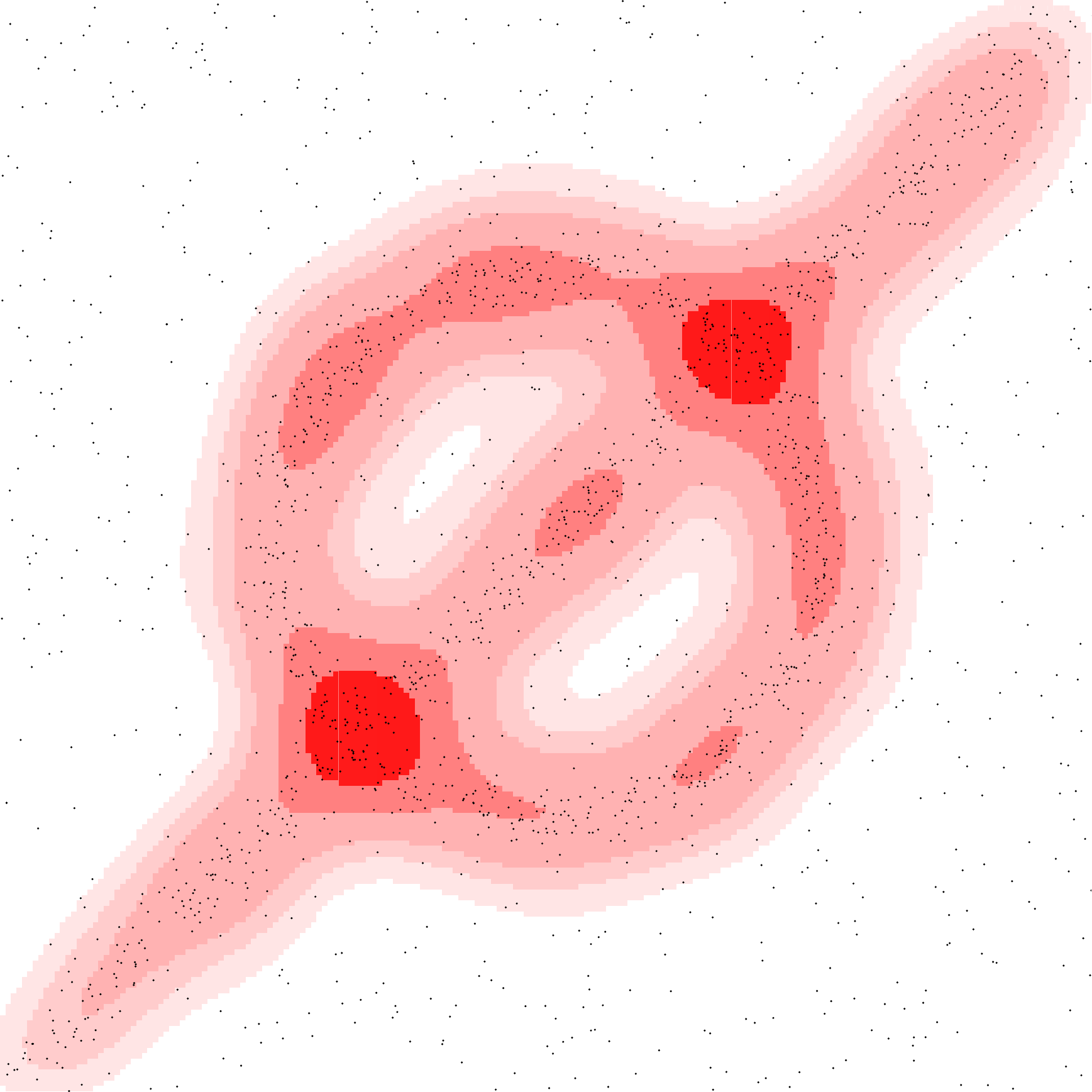}
\hspace{10mm}
\includegraphics[width=.4\linewidth]{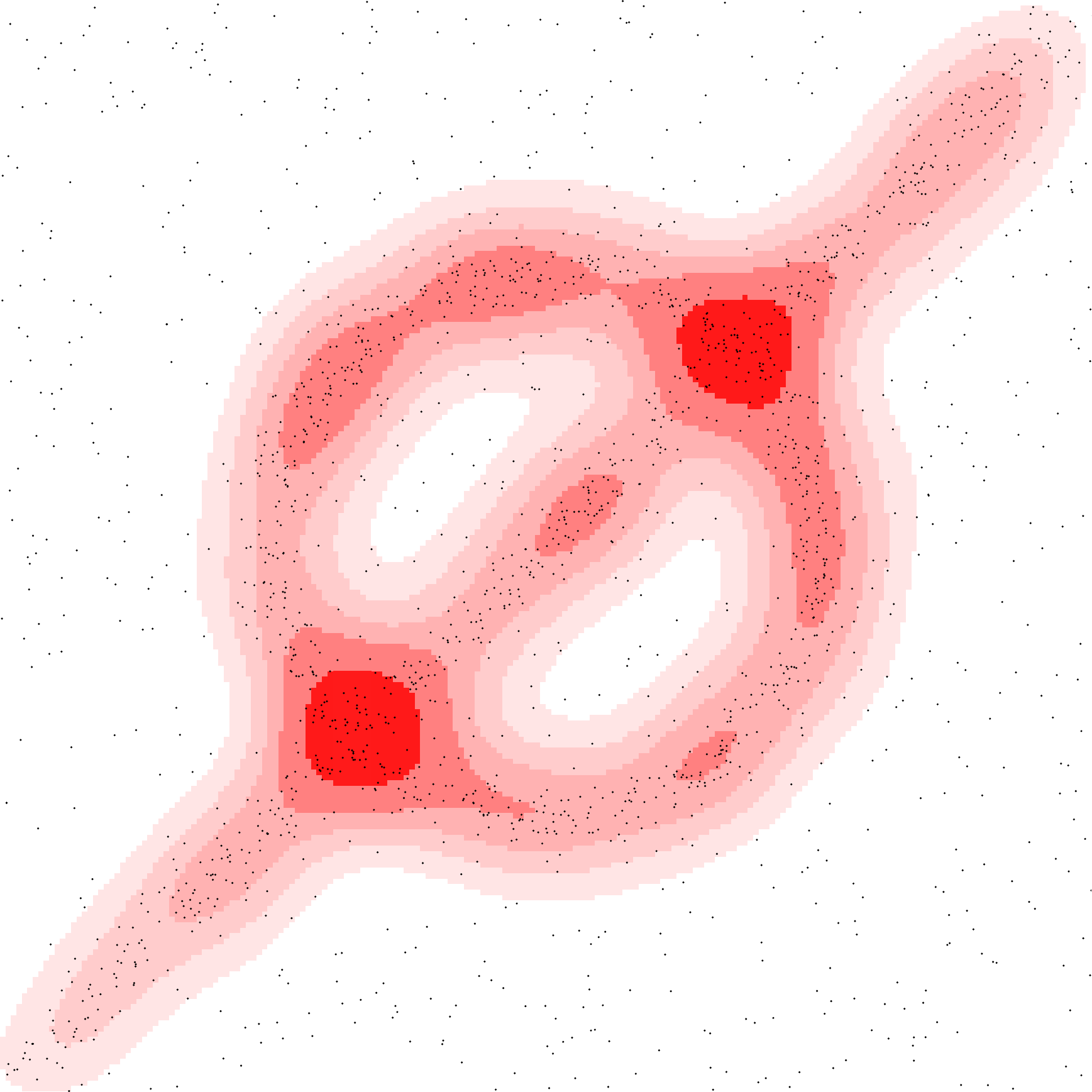}
\end{center}

\caption{\label{fig:gamma-v-sigma} \small \sffamily
Sublevel sets for the kernel distance while varying the isolevel $\gamma$, for fixed $\sigma$ (left) and for fixed isolevel $\gamma$ but variable $\sigma$ (right), with Gaussian kernel.  The variable values of $\sigma$ and $\gamma$ are chosen to make the plots similar.}  
\end{figure}

We demonstrate the geometric inference on a synthetic dataset in $[0,1]^2$ where $900$ points are chosen near a circle centered at $(0.5,0.5)$ with radius $0.25$ or along a line segment from $(0,0)$ to $(1,1)$.  Each point has Gaussian noise added with standard deviation $0.01$.  The remaining $1100$ points are chosen uniformly from $[0,1]^2$.  
We use a Gaussian kernel with $\sigma = 0.05$.  
Figure \ref{fig:gamma-v-sigma} shows 
(left) various sublevel sets $\gamma \in \Gamma$ for the kernel distance at a fixed $\sigma = 0.05$ and (right) various superlevel sets for a fixed $\gamma = 0.04853$, but various values of $\sigma \in \Sigma$, where

\begin{align*}
\Gamma &= [0.05005, 0.04979, 0.04954, 0.04904, 0.04853] \text{ and }
\\
\Sigma &= [0.0485, 0.0489, 0.0492, 0.0495, 0.05].
\end{align*}

\noindent
This choice of $\Gamma$ and $\Sigma$ were made to highlight how similar the isolevels can be.

\begin{figure}
\includegraphics[width=.24\linewidth]{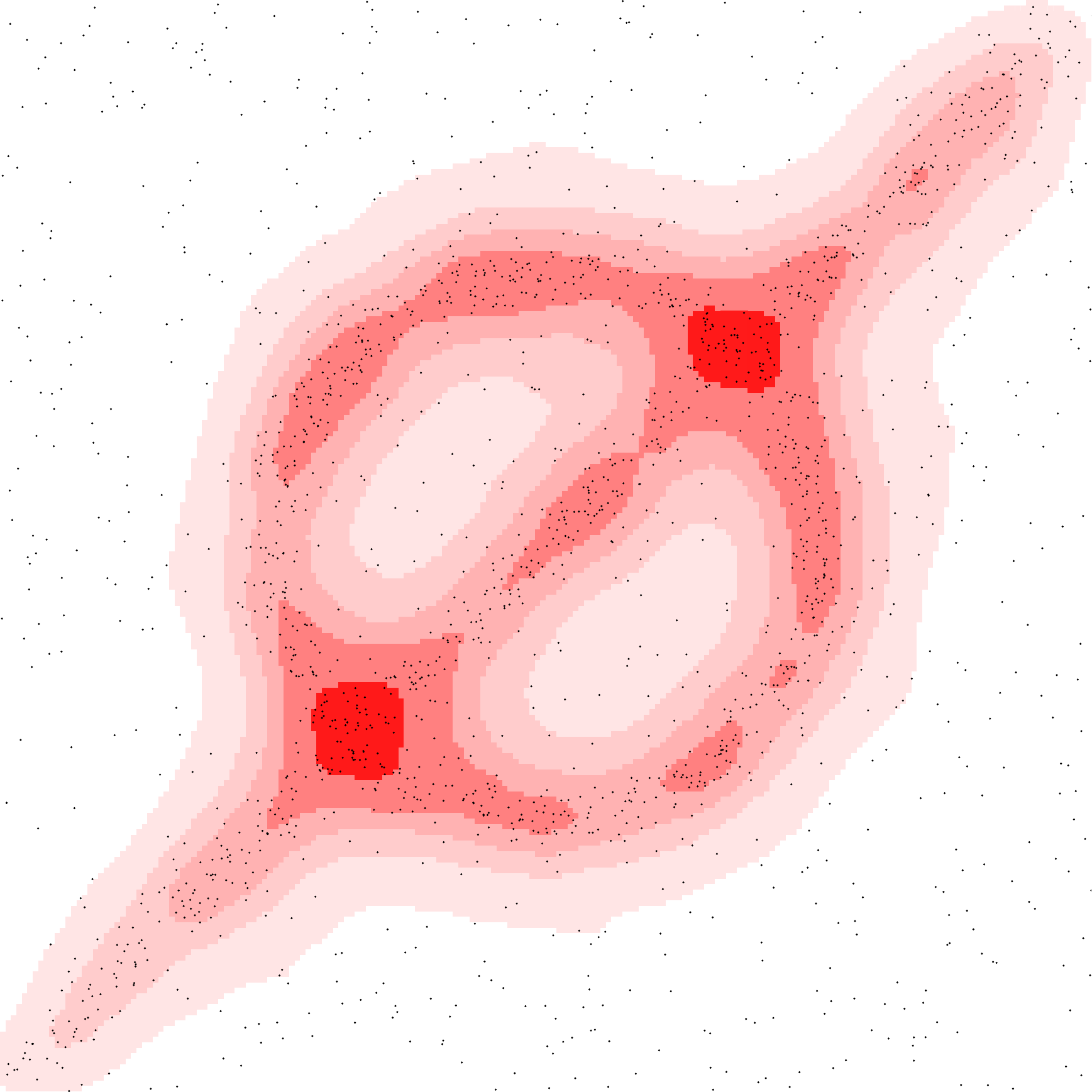}
\hspace{0.3mm}
\includegraphics[width=.24\linewidth]{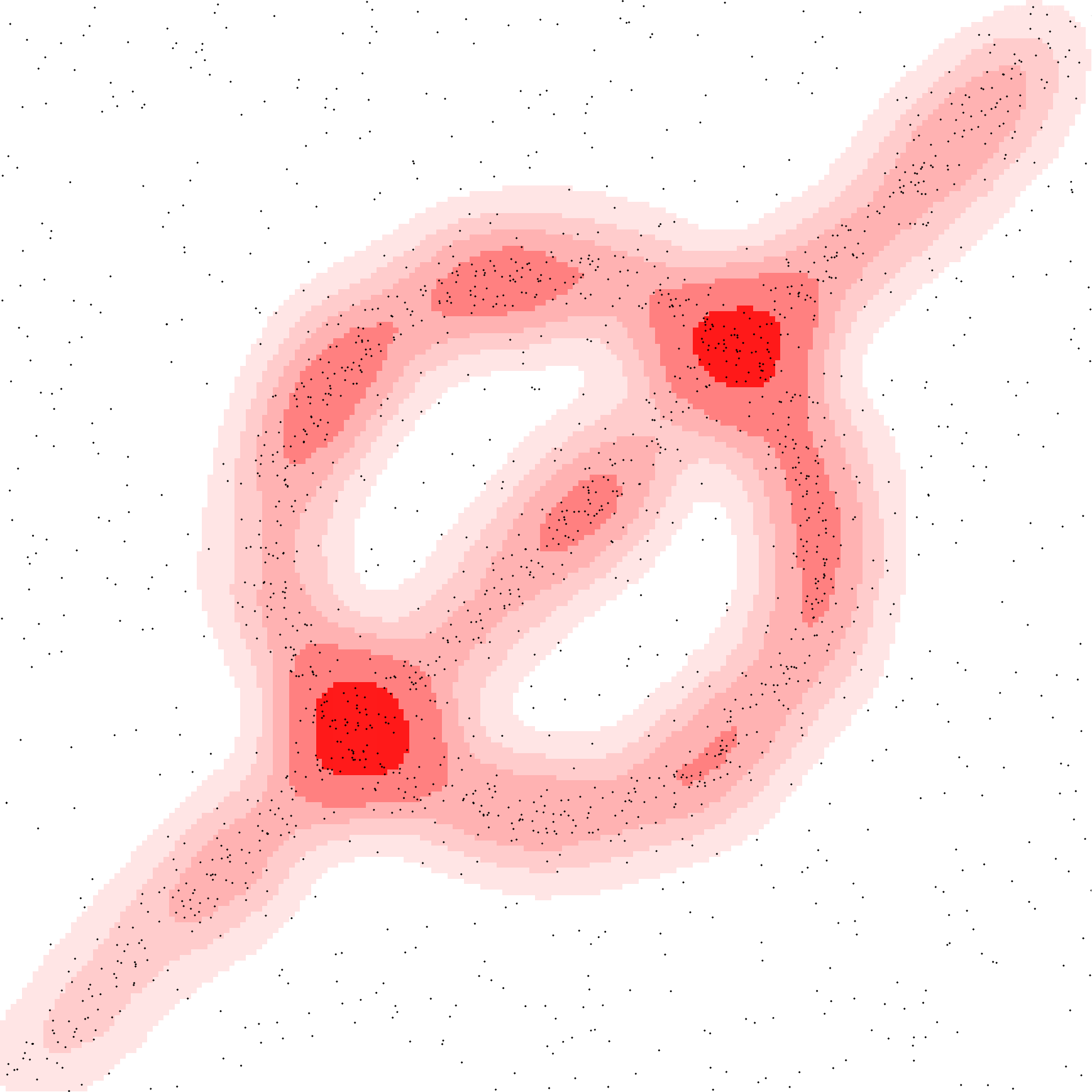}
\hspace{0.3mm}
\includegraphics[width=.24\linewidth]{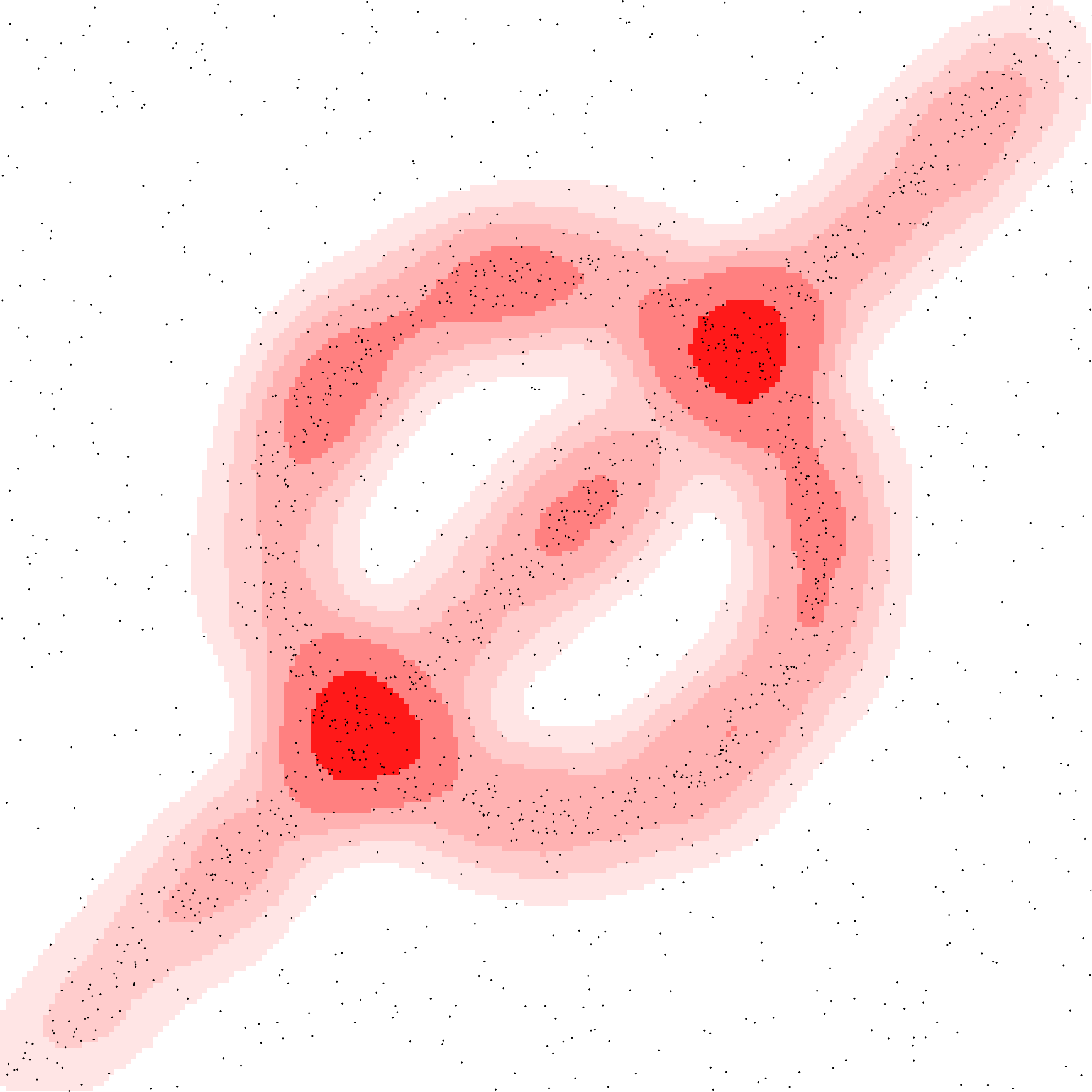}
\hspace{0.3mm}
\includegraphics[width=.24\linewidth]{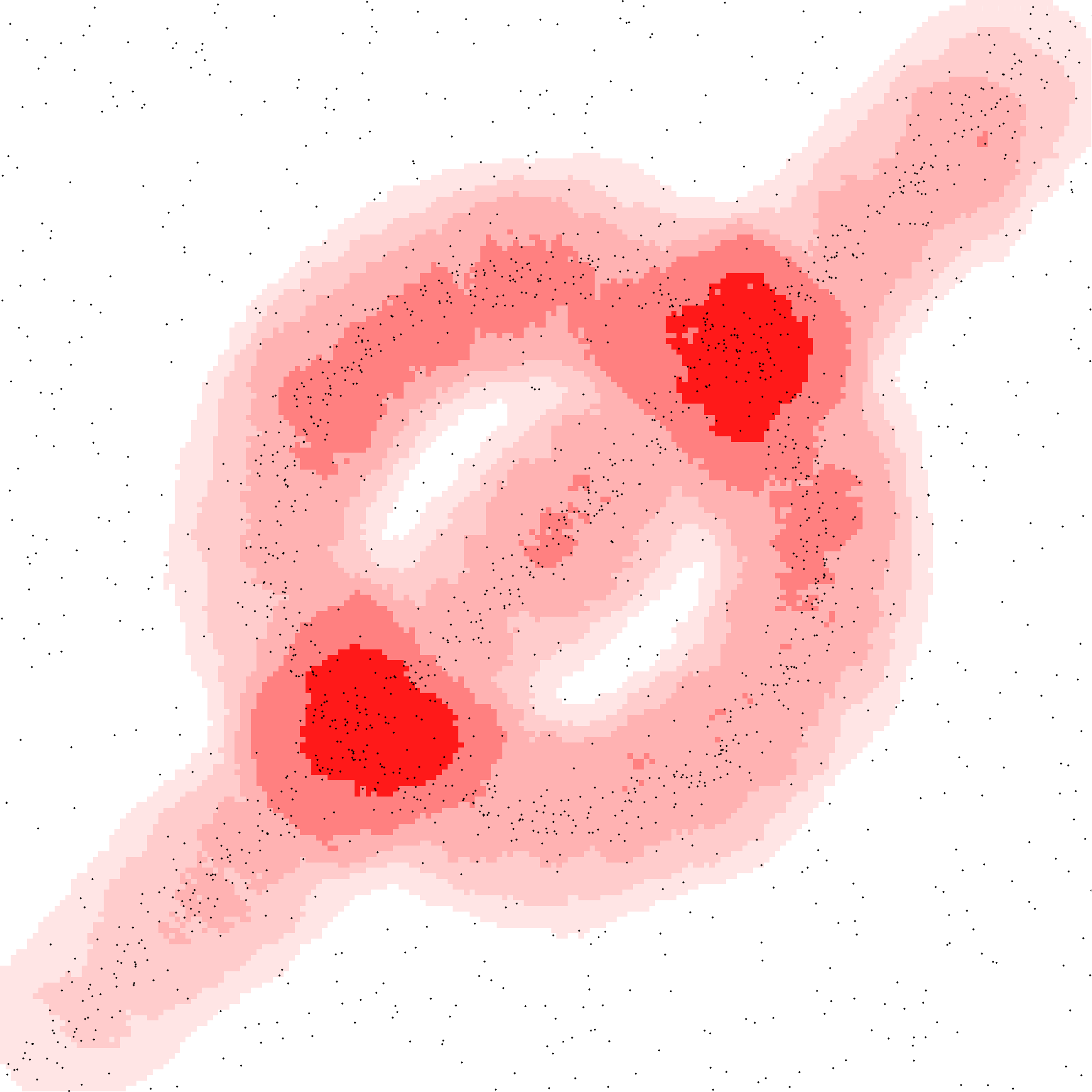}

\caption{\label{fig:TEB}  
\small \sffamily
Alternate kernel density estimates for the same dataset as Figure \ref{fig:gamma-v-sigma}.  From left to right, they use the Laplace, triangle, Epanechnikov, and the ball kernel.}
\end{figure}

\paragraph{Alternative kernels.}
We can choose kernels other than the Gaussian kernel in the kernel density estimate, for instance 
\begin{itemize} \denselist
\item the Laplace kernel $K(p,x) = \exp(-2\|x-y\|/\sigma)$, 
\item the triangle kernel $K(p,x) = \max \{0, 1-\|x-y\|/\sigma\}$, 
\item the Epanechnikov kernel $K(p,x) = \max \{0, 1 - \|x-y\|^2/\sigma^2\}$, or 
\item the ball kernel ($K(p,x) = \{1 \text{ if } \|p-x\| \leq \sigma \text{; o.w. } 0\}$.   
\end{itemize}
Figure \ref{fig:TEB} chooses parameters to make them comparable to the Figure \ref{fig:gamma-v-sigma}(left).  
Of these only the Laplace kernel is \emph{characteristic}~\cite{SGFSL10} making the corresponding version of the kernel distance a metric.  Investigating which of the above reconstruction theorems hold when using the Laplace or other kernels is an interesting question for future work.

Additionally, normal vector information and even $k$-forms can be used in the definition of a kernel~\cite{glaunesthesis,Vaillant2005,DurrlemanPennecTrouve2008,
DurrlemanPennecTrouve2007,GlaunesJoshi2006,JoshiKommarajuPhillips2011}; this variant is known as the \emph{current distance}.  In some cases it retains its metric properties and has been shown to be very useful for shape alignment in conjunction with medical imaging.

\subsection{Open Questions}
\label{sec:open}
This work shows it is possible to prove formal reconstruction results using kernel density estimates and the kernel distance.  But it also opens many interesting questions.  

\begin{itemize}
\item For what other types of kernels can we show reconstruction bounds?  The Laplace and triangle kernels are natural choices.  For both the coresets results match those of the Gaussian kernel.  The kernel distance under the Laplace kernel is also a metric, but is not known to be for the triangle kernel.  Yet, the triangle kernel would be interesting since it has bounded support, and may lend itself to easier computation.  

\item The power distance construction in Section \ref{sec:powerdistance} requires a point $\hat p_+$, which approximates the point with minimum kernel distance.  This is intuitively because it is possible to construct a point set $P$ (say points lying on a circle with no points inside) such that the point $p_+ \in \mathbb{R}^d$ which minimizes the kernel distance and maximizes the kernel density estimate is far from any point in the point set.  
For one, can $\hat p_+$ be constructed efficiently without dependence on $\beta_P$ or $\Lambda_P/\sigma$?  

But more interestingly, can we generally approximate the persistence diagram without creating a simplicial complex on a subset of the input points?  We do describe some bounds on using a grid-based technique in Section \ref{sec:exp}, but this is also unsatisfying since it essentially requires a low-dimensional Euclidean space.  

\item Since $\dK_\mu$ is Lipschitz in $x$ and $\sigma$, it may make sense to understand the simultaneous stability of both variables.  What is the best way to understand persistence over both parameters?  

\item We provided some initial bound comparing the kernel distance under the Gaussian kernel and the Wasserstein $2$ distance.  Can we show that under our choice of normalization that $D_K(\mu,\nu) \leq W_2(\mu,\nu)$, unconstrained?  More generally, how does the kernel distance under other kernels compare with other forms of Wasserstein and other distances on measures?  
\end{itemize}

\paragraph{Acknowledgements}
The authors thank Don Sheehy, Fr\'{e}d\'{e}ric Chazal and the rest of the Geometrica group at  INRIA-Saclay for enlightening discussions on geometric and topological reconstruction.  
We also thank Don Sheehy for personal communications regarding the power distance constructions, 
and Yusu Wang for ideas towards Lemma \ref{lem:monotonicity}.  Finally, we are also indebted to the anonymous reviewers for many detailed suggestions leading to improvements in results and presentation.

\bibliographystyle{plain}
\bibliography{kernel-refs}


\newpage
\appendix

\section{Details on Distance-Like Properties of Kernel Distance}
\label{app:dist-like}

We provide further details on distance-like properties of the kernel distance.  

\subsection{Semiconcave Properties of Kernel Distance}

We also note that semiconcavity follows quite naturally and simply in the RKHS $\Eu{H}_K$ for $\dK_\mu$.  

\begin{lemma}
\label{lemma:RKHS-concave}
$(\dK_\mu)^2$ is $1$-semiconcave in $\Eu{H}_K$: the map $x \mapsto (\dK_\mu(x))^2 - \|\phi(x)\|_{\Eu{H}_K}^2$ is concave. 
\end{lemma}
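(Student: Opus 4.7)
The plan is to work entirely inside the RKHS $\Eu{H}_K$ and exploit the Hilbert space structure, which makes $(\dK_\mu)^2$ literally a squared distance. Recall from Section \ref{sec:kernel} that $K(p,x) = \langle \phi(p), \phi(x)\rangle_{\Eu{H}_K}$, and for a (finite or general) measure $\mu$ we may define the Bochner-type integral $\Phi(\mu) = \int \phi(p)\dir\mu(p) \in \Eu{H}_K$, so that $\kappa(\mu,\mu) = \|\Phi(\mu)\|_{\Eu{H}_K}^2$ and $\kappa(\mu,x) = \langle \Phi(\mu), \phi(x)\rangle_{\Eu{H}_K}$.

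With this in hand, the first step is to rewrite
\[
(\dK_\mu(x))^2 = \kappa(\mu,\mu) + K(x,x) - 2\kappa(\mu,x) = \|\Phi(\mu) - \phi(x)\|_{\Eu{H}_K}^2.
\]
Expanding the right-hand side gives
\[
(\dK_\mu(x))^2 = \|\Phi(\mu)\|_{\Eu{H}_K}^2 - 2\langle \Phi(\mu),\phi(x)\rangle_{\Eu{H}_K} + \|\phi(x)\|_{\Eu{H}_K}^2.
\]
Subtracting $\|\phi(x)\|_{\Eu{H}_K}^2$ therefore cancels the only quadratic-in-$\phi(x)$ term and leaves
\[
(\dK_\mu(x))^2 - \|\phi(x)\|_{\Eu{H}_K}^2 = \|\Phi(\mu)\|_{\Eu{H}_K}^2 - 2\langle \Phi(\mu),\phi(x)\rangle_{\Eu{H}_K},
\]
which, viewed as a function of the element $\phi(x) \in \Eu{H}_K$, is affine (a constant minus a bounded linear functional). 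The second step is simply to observe that any affine function on a Hilbert space is both concave and convex, so the map is in particular concave in $\phi(x)$, yielding the claim.

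There is essentially no obstacle here beyond bookkeeping; the RKHS viewpoint turns the computation that occupied Lemma \ref{lem:semi} (bounding second derivatives of the Gaussian) into a one-line cancellation, because the lifting map moves the quadratic part of $(\dK_\mu)^2$ entirely onto $\|\phi(x)\|_{\Eu{H}_K}^2$. The only subtlety worth flagging is that concavity is being asserted in the RKHS variable $\phi(x)$, not in $x \in \Rspace^d$ itself -- that is why the statement uses $\|\phi(x)\|_{\Eu{H}_K}^2$ rather than $\|x\|^2$, and why this lemma does not supersede Lemma \ref{lem:semi} but rather complements it.
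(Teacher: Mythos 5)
Your proposal is correct and follows essentially the same route as the paper: lift everything into $\Eu{H}_K$ via $\Phi(\mu)$ and $\phi(x)$, identify $(\dK_\mu(x))^2$ with $\|\Phi(\mu)-\phi(x)\|_{\Eu{H}_K}^2$, and observe concavity of the difference as a function of the RKHS variable $\phi(x)$. If anything, your version is tidier: you correctly reduce the difference to the affine functional $\|\Phi(\mu)\|_{\Eu{H}_K}^2 - 2\langle\Phi(\mu),\phi(x)\rangle_{\Eu{H}_K}$, whereas the paper's displayed algebra contains sign/term slips before reaching the same conclusion, and you rightly flag (as the paper does in the following remark) that concavity here is in $\phi(x)$, not in $x$.
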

\begin{proof}
We can write 
\[
(\dK_{\mu}(x))^2 = (D_K(\mu,x))^2 
= 
\kappa(\mu,\mu) + \kappa(x,x) - 2 \kappa(\mu,x)
=
\|\Phi(\mu)\|^2_{\Eu{H}_K} + \|\phi(x)\|^2_{\Eu{H}_K} - 2 \|\Phi(\mu) - \phi(x)\|^2_{\Eu{H}_K}.  
\]
Now 
\[
(\dK_\mu(x))^2 - \|\phi(x)\|_{\Eu{H}_K}^2 
=
\|\Phi(\mu)\|^2_{\Eu{H}_K} - 2 \|\Phi(\mu) - \phi(x)\|^2_{\Eu{H}_K}.  
\]
Since the above is twice-differentiable, we only need to show that its twice-differential is non-positive. 
By definition, for a fixed $\mu$, $\Phi(\mu)$ and $\|\Phi(\mu)\|^2_{\Eu{H}_K}$ are both constant. 
Suppose $\Phi(\mu) = c_1$ and $\|\Phi(\mu)\|^2_{\Eu{H}_K} = c_2$,
we have $(d_\mu(x))^2 - \|\phi(x)\|_{\Eu{H}_K}^2  = c_2 - \|c_1 - \phi(x)\|^2_{\Eu{H}_K}$.
Since the RKHS $\Eu{H}_K$ is a vector space with well-defined norm $\| \cdot \|_{\Eu{H}_K}$, 
the above is a concave parabolic function.  
\end{proof}

However, this semiconcavity in $\Eu{H}_K$ is not that useful. 
For unit weight elements $x,y \in \Rspace^d$, an element $s_\alpha$ such that $\phi(s_\alpha) = \alpha \phi(y) + (1-\alpha) \phi(x)$ is a weighted point set with a point at $x$ with weight $(1-\alpha)$ and another at $y$ with weight $\alpha$. 
Lemma \ref{lemma:RKHS-concave} only implies that 
$(d_K(s_\alpha))^2 - \|\phi(s_\alpha)\|^2_{\Eu{H}_K} \leq \alpha((d_K(x))^2 - \|\phi(x)\|^2_{\Eu{H}_K}) + (1-\alpha)((d_K(y))^2 - \|\phi(y)\|^2_{\Eu{H}_K})$.

\subsection{Kernel Distance is Proper}

We use two more general, but equivalent definitions of a proper map.  
Definition (i): A continuous map $f: \Xspace \to \Yspace$ between two topological spaces is \emph{proper} if and only if the inverse image of every compact subset in $\Yspace$ is compact in $\Xspace$ (\cite{Lee2000}, page 84; \cite{Lee2003}, page 45). 
Definition (ii): a continuous map $f: \Xspace \to \Yspace$ between two topological manifolds is proper if and only if for every sequence $\{p_i\}$ in $\Xspace$ that escapes to infinity, $\{f(p_i)\}$ escapes to infinity in $\Yspace$ (\cite{Lee2003}, Proposition 2.17).
Here, for a topological space $\Xspace$, a sequence $\{p_i\}$ in $\Xspace$ \emph{escapes to infinity} if for every compact set $G \subset \Xspace$, there are at most finitely many values of $i$ for which $p_i \in G$ (\cite{Lee2003}, page 46).

\begin{lemma}[Lemma \ref{lem:proper}] 
$d^K_{\mu}$ is proper. 
\label{lem:proper-app}
\end{lemma}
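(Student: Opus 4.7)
The plan is to invoke definition (ii) of properness as stated in the paper: for any sequence $\{x_i\}$ in $\Rspace^d$ that escapes to infinity (i.e.\ $\|x_i\| \to \infty$), I will show the image sequence $\{\dK_\mu(x_i)\}$ escapes to infinity in the codomain $[0, c_\mu)$. Here escaping to infinity in $[0, c_\mu)$ means approaching $c_\mu$, since every compact subset of $[0, c_\mu)$ is contained in some $[0, a]$ with $a < c_\mu$. Continuity of $\dK_\mu$ is already provided by Lemma \ref{lem:Lipschitz}, so only the escape condition needs to be verified.

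The key analytic step is to establish
\[
\lim_{\|x\| \to \infty} \kappa(\mu, x) = 0.
\]
I would prove this by dominated convergence: for each fixed $p \in \Rspace^d$ one has $K(p, x_i) = \sigma^2 \exp(-\|p - x_i\|^2 / 2\sigma^2) \to 0$ as $\|x_i\| \to \infty$, and the constant $\sigma^2$ dominates $K(p, x_i)$ and is $\mu$-integrable because $\mu$ has finite total mass. Therefore $\kappa(\mu, x_i) = \int K(p, x_i) \dir{\mu(p)} \to 0$. (If one prefers an elementary argument, the same conclusion follows by splitting $\Rspace^d$ into a ball $B_R(0)$ on which $K(p,x_i)$ is uniformly small once $\|x_i\|$ exceeds $R$ by enough, and its complement, which carries arbitrarily little $\mu$-mass for $R$ large.)

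Given this limit, I compute
\[
\dK_\mu(x_i)^2 \;=\; \kappa(\mu,\mu) + \sigma^2 - 2\kappa(\mu, x_i) \;\longrightarrow\; \kappa(\mu,\mu) + \sigma^2 \;=\; c_\mu^2,
\]
so $\dK_\mu(x_i) \to c_\mu$. Consequently, for any compact $G \subset [0, c_\mu)$ contained in $[0, a]$ with $a < c_\mu$, at most finitely many terms $\dK_\mu(x_i)$ lie in $G$, which is precisely the escape-to-infinity condition in $[0, c_\mu)$. By definition (ii), $\dK_\mu$ is proper.

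The main obstacle is conceptual rather than computational: the original formulation of (D3) assumes $\phi$ tends to the infimum of an unbounded codomain, but here the range is restricted to $[0, c_\mu)$ because $\dK_\mu(x)$ cannot exceed $c_\mu = \sqrt{\kappa(\mu,\mu)+\sigma^2}$. Matching escape-to-infinity in the domain with approach-to-$c_\mu$ in this bounded codomain is the subtle part; once that bookkeeping is set up, the proof reduces to the Gaussian decay estimate above.
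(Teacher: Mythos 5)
Your proof is correct and follows essentially the same route as the paper's: both reduce properness to showing that $\kappa(\mu, x_i) \to 0$ (hence $\dK_\mu(x_i) \to c_\mu$) along any sequence escaping to infinity, the paper establishing this reduction by a contrapositive argument that is equivalent to your observation that every compact subset of $[0,c_\mu)$ sits inside some $[0,a]$ with $a < c_\mu$. The only cosmetic difference is that you prove $\kappa(\mu,x_i)\to 0$ by dominated convergence, whereas the paper uses exactly the elementary ball-splitting argument you offer as an alternative.
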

\begin{proof}
To prove that $d^K_{\mu}$ is proper, we prove the following two claims: 
(a) A continuous function $f: \Rspace^d \to [0, c)$ (where c is a constant) is proper, if for any sequence $\{x_i\}$ in $\Rspace^d$ that escapes to infinity, the sequence 
$\{f(x_i)\}$ tends to $c$ (approaches $c$ in the limit);
(b) Let $f: = d^K_{\mu}$ and one needs to show that for any sequences $\{x_i\}$ that escapes to infinity, the sequence $\{f(x_i)\}$ tends to $c_{\mu}$; or equivalently, 
$\kappa(\mu, x_i)$ tends to $0$.

We prove claim (a) by proving its contrapositive.   
If a continuous function $f: \Rspace^d \to [0, c)$ is not proper, then there exists a sequence $\{x_i\}$ in $\Rspace^d$ that escapes to infinity, such that the sequence $\{f(x_i)\}$ does not tend to $c$. 
Suppose $f$ is not proper, this implies that there exists a constant $b < c$ such that $f^{-1}[0,b]$ is not compact (based on properness definition (i)) and therefore either not closed or unbounded. 
We first show that $A: = f^{-1}[0,b]$ is closed. 
We make use of the following theorem (\cite{KolmogorovFominSilverman1975}, page 88, Theorem 10'):
A mapping $f$ of a topological space $\Xspace$ into a topological space $\Yspace$ is continuous if and only if the pre-image $f^{-1}(F)$ of every closed set $F \subset \Yspace$ is closed in $\Xspace$.  
Since $f$ is continuous, it implies that the pre-image of every closed set $[a,b] \subset  R$ is closed in $\Rspace^d$. Therefore, $A$ is closed, therefore it must be unbounded.
Since every unbounded sequence contains a monotone subsequence
that has either $+\infty$ or $-\infty$ as a limit, 
therefore $A$ contains a subsequence $S := \{x_i\}$ that tends to an infinite limit. 
In addition, as elements in $S$ escapes to infinity, $\{f(x_i)\}$ tends to $b$ and does not tend to $c$.
Therefore (a) holds by contraposition. 

To prove claim (b), we need to show that for any sequence $\{x_i\}$ that escapes to infinity, $\kappa(\mu, x_i)$ tends to $0$.  
For each $x_i$, define a radius $r_i = \|x_i - 0\|/2$ and define a ball $B_i$ that is centered at the origin $0$ and has radius $r_i$.  As $x_i$ goes to infinity, $r_i$ increases until for any fixed arbitrary $\eps > 0$,  we have $\int_{p \in B_i} \mu(p) \dir p \geq 1-\eps/2\sigma^2$ and thus $\int_{p \in \Rspace^d \setminus B_i} \dir{\mu(p)} \leq \eps/2\sigma^2$.  Furthermore, let $p_i = \arg\min_{p \in B} \|p - x_i\|$, so $\|x_i - p_i\| = r_i$.  Thus also as $x_i$ goes to infinity, $r_i$ increases until for any $\eps > 0$ we have $K(p_i,x_i) \leq \eps/2$.  
We now decompose $\kappa(\mu,x_i) = \int_{p \in B_i} K(p,x_i) \dir{\mu(p)} + \int_{q \in \Rspace^d \setminus B_i} K(q,x_i) \dir{\mu(q)}$.  
Thus for any $\eps >0$, as $x_i$ goes to infinity, the first term is at most $\eps/2$ since all $K(p,x_i) \leq K(p_i,x_i) \leq \eps/2$ and the second term is at most $\eps/2$ since $K(q,x) \leq \sigma^2$ and $\int_{q \in \Rspace^d \setminus B_i} \mu(q) \dir q \leq \eps / 2 \sigma^2$.  
Since these results hold for all $\eps$, as $x_i$ goes to infinity and $\eps$ goes to $0$, 
$\kappa(\mu,x_i)$ goes to $0$.  

Combine (a) with (b) and the fact that $\dK_{\mu}$ is a continuous (in fact, Lipschitz) function, we obtained the properness result. 
\end{proof}

\section{$\eps$-Approximation of the Kernel Distance}
\label{app:KD-approx}

Here we make explicit the way that an $\eps$-kernel sample approximated the kernel distance.  
Recall that if $Q$ is an $\eps$-kernel sample of $\mu$, then $\|\kde_\mu - \kde_\muQ \| =  \max_{x \in \Rspace^d} | \kappa(\mu,x) - \kappa(\muQ,x) | \leq \eps$.

\begin{lemma}
\label{lem:KD-samp}
If $Q$ is an $\eps$-kernel sample of $\mu$, then $\| (\dK_\mu)^2 - (\dK_\muQ)^2 \|_\infty \leq 4\eps$. 
\end{lemma}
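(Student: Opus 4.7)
The plan is to expand $(\dK_\mu(x))^2 - (\dK_{\mu_Q}(x))^2$ using the definition of the kernel distance, note that the $\kappa(x,x)$ terms cancel, and then bound each of the remaining two difference terms by $2\eps$ separately.

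First I would write
\[
(\dK_\mu(x))^2 - (\dK_{\mu_Q}(x))^2 = \bigl(\kappa(\mu,\mu) - \kappa(\mu_Q,\mu_Q)\bigr) - 2\bigl(\kappa(\mu,x) - \kappa(\mu_Q,x)\bigr).
\]
The second term is immediately controlled: since $\kappa(\mu,x) = \kde_\mu(x)$ and $\kappa(\mu_Q,x) = \kde_{\mu_Q}(x)$, the $\eps$-kernel sample property gives $|\kappa(\mu,x) - \kappa(\mu_Q,x)| \leq \eps$ uniformly in $x$, contributing at most $2\eps$.

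The main (though still routine) step is bounding $|\kappa(\mu,\mu) - \kappa(\mu_Q,\mu_Q)|$, which I would handle by inserting the cross term $\kappa(\mu,\mu_Q)$:
\[
\kappa(\mu,\mu) - \kappa(\mu_Q,\mu_Q) = \bigl(\kappa(\mu,\mu) - \kappa(\mu,\mu_Q)\bigr) + \bigl(\kappa(\mu,\mu_Q) - \kappa(\mu_Q,\mu_Q)\bigr).
\]
Using Fubini, I would rewrite $\kappa(\mu,\mu) = \int_p \kde_\mu(p)\,\dir\mu(p)$ and $\kappa(\mu,\mu_Q) = \int_p \kde_{\mu_Q}(p)\,\dir\mu(p)$, so the first difference is at most $\|\kde_\mu - \kde_{\mu_Q}\|_\infty \leq \eps$ by the $\eps$-kernel sample hypothesis; the second difference is handled analogously by integrating $\kde_\mu - \kde_{\mu_Q}$ against $\mu_Q$. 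Together this contributes another $2\eps$.

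Adding the two contributions yields the claimed bound $\|(\dK_\mu)^2 - (\dK_{\mu_Q})^2\|_\infty \leq 4\eps$. There is no real obstacle here: the only thing to be careful about is that the $\eps$-kernel sample property applies to $\kappa(\mu,\cdot)$ evaluated at a single point, so when bounding the symmetric quantity $\kappa(\mu,\mu) - \kappa(\mu_Q,\mu_Q)$ one must introduce the mixed term $\kappa(\mu,\mu_Q)$ so that each resulting difference is of the form ``integrate $(\kde_\mu - \kde_{\mu_Q})$ against a probability measure,'' which is directly bounded by $\|\kde_\mu - \kde_{\mu_Q}\|_\infty$.
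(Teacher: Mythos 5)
Your proof is correct and follows essentially the same route as the paper: the paper also cancels the $\kappa(x,x)$ term, bounds $2|\kappa(\mu,x)-\kappa(\mu_Q,x)|$ by $2\eps$, and controls $\kappa(\mu,\mu)-\kappa(\mu_Q,\mu_Q)$ by passing through the mixed term $\kappa(\mu_Q,\mu)$ with one application of the $\eps$-kernel-sample bound on each side. Your telescoping presentation is just a cleaner writeup of the paper's chain of inequalities.
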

\begin{proof}
First expand $D_K(\mu,x)^2 =  \kappa(x,x) + \kappa(\mu,\mu)  - 2 \kappa(\mu,x) = \sigma^2 + \kappa(\mu,\mu)  - 2 \kappa(\mu,x)$.  
Replacing $\mu$ with $\muQ$, 
the first term is unaffected.  The second term is bounded, 
\begin{align*}
\kappa(\mu,\mu) &= \int_{(p,q)} K(p,q) \dir{\textsf{m}_{\mu,\mu}(p,q)}
=  
\int_p \left( \int_q K(p,q) \dir{\mu(q)} \right) \dir{\mu(p)}
\\ &=
\int_p \kde_{\mu}(p) \dir{\mu(p)}
\leq 
\int_p (\kde_{\muQ}(p) + \eps) \dir{\mu(p)}
\\ &=  
\int_p \kde_{\muQ}(p) \dir{\mu(p)} + \eps
= \int_p \left( \int_q K(p,q) \dir{\muQ(q)} \right) \dir{\mu(p)} + \eps
\\ & = 
\kappa(\muQ, \mu) + \eps
\\ & \leq  \kappa(\muQ, \muQ) + 2 \eps. 
\end{align*}
Similar results hold by switching $\muQ$ with $\mu$ in the above inequality, that is, 
$\kappa(\muQ,\muQ) \leq \kappa(\mu,\mu)  + 2\eps$.   
And for the third term we have similar inequality, $|2\kappa(\mu, x) - 2\kappa(\muQ,x)| \leq 2\eps$.
Combining all three terms,  we have the desired result: 
  $| D_K(\mu,x)^2 - D_K(\muQ,x)^2 | \leq 4\eps$. 
\end{proof}

\begin{lemma}
\label{lem:KD-samp-nosq}
If $Q$ is an $(\eps^2/4)$-kernel sample of $\mu$, then $\| \dK_\mu - \dK_\muQ \|_\infty \leq \eps$. 
\end{lemma}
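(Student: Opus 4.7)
The plan is to reduce this lemma directly to Lemma \ref{lem:KD-samp} and then upgrade a squared-difference bound to an unsquared-difference bound via an elementary case split. Since $\dK_\mu(x), \dK_{\mu_Q}(x) \geq 0$ for every $x$, and by Lemma \ref{lem:KD-samp} applied with parameter $\eps^2/4$ we immediately obtain
\[
\|(\dK_\mu)^2 - (\dK_{\mu_Q})^2\|_\infty \;\leq\; 4\cdot \tfrac{\eps^2}{4} \;=\; \eps^2,
\]
so pointwise $|\dK_\mu(x)^2 - \dK_{\mu_Q}(x)^2| \leq \eps^2$.

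Next I would fix an arbitrary $x \in \Rspace^d$ and write $a = \dK_\mu(x) \geq 0$ and $b = \dK_{\mu_Q}(x) \geq 0$, reducing the problem to the elementary claim: if $a,b\geq 0$ and $|a^2 - b^2| \leq \eps^2$, then $|a-b|\leq \eps$. I would prove this by splitting on the size of $a+b$. If $a+b \geq \eps$, factor $|a^2-b^2| = |a-b|(a+b)$ and divide to get $|a-b| \leq \eps^2/(a+b) \leq \eps$. If instead $a+b < \eps$, then since $a,b\geq 0$ we have $|a-b| \leq \max(a,b) \leq a+b < \eps$. Either way $|a-b| \leq \eps$, and taking the supremum over $x$ yields $\|\dK_\mu - \dK_{\mu_Q}\|_\infty \leq \eps$.

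There is no real obstacle here: the step that does the genuine work, namely controlling the squared kernel distance by an $\eps$-kernel sample, is already done in Lemma \ref{lem:KD-samp}. The only thing to be careful about is not accidentally dividing by $a+b$ when it is zero or very small, which is precisely what the case split handles. One could alternatively use the inequality $|\sqrt{u} - \sqrt{v}| \leq \sqrt{|u-v|}$ for $u,v \geq 0$ (applied with $u = \dK_\mu(x)^2$ and $v = \dK_{\mu_Q}(x)^2$), which gives the same conclusion in one line, but the case split is equally clean and self-contained.
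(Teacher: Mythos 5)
Your proof is correct and follows essentially the same route as the paper: both reduce to Lemma \ref{lem:KD-samp} applied with parameter $\eps^2/4$ and then upgrade the squared bound to an unsquared one by an elementary inequality. The only cosmetic difference is that the paper uses $\sqrt{\gamma^2+\eps^2}\leq\gamma+\eps$ applied in both directions, whereas you use the equivalent case split on $a+b$ versus $\eps$ (or the one-line $|\sqrt{u}-\sqrt{v}|\leq\sqrt{|u-v|}$); both are fine.
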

\begin{proof}
By Lemma \ref{lem:KD-samp} this condition on $Q$ implies that $\| (\dK_\mu)^2 - (\dK_\muQ)^2\|_\infty \leq \eps^2$.  
We then use a basic fact for values $\eps \geq 0$ and $\gamma \geq 0$.  
\\
$\bullet$ $\sqrt{\gamma^2 + \eps^2} \leq \gamma + \eps$.  This follows since
$(\gamma + \eps)^2 = \gamma^2 + \eps^2 + 2 \gamma\eps \geq \gamma^2 + \eps^2$.

We now prove the main result as an upper and lower bound using for any $x \in \mathbb{R}^d$.  
We first use $\gamma = \dK_\mu(x) \geq 0$ and expand $\dK_\muQ(x)$ to obtain
\[
\dK_\muQ(x) = \sqrt{(\dK_\muQ(x))^2} \leq \sqrt{(\dK_\mu(x))^2 + \eps^2} \leq \dK_\mu(x) + \eps.  
\]

Now we use $\gamma = \dK_\muQ(x) \geq 0$ and expand $\dK_\mu(x)$ to obtain
\[
\dK_\mu(x) = \sqrt{(\dK_\mu(x))^2} \leq \sqrt{(\dK_\muQ(x))^2 + \eps^2} \leq \dK_\muQ(x) + \eps.  
\]
Hence for any $x \in \mathbb{R}^d$ we have $\dK_\mu(x) - \eps \leq \dK_\muQ(x) \leq \dK_\mu(x) + \eps$.  
\end{proof}

\section{Power Distance Constructions}
\label{app:power}
Recall we want to consider the following \emph{power distance} using $\dK_\mu$ (as weight) for a measure $\mu$ associated with a subset $P \subset \Rspace^d$ and metric $d(\cdot, \cdot)$ on $\Rspace^d$,
\[
\Dpow{P}(\mu,x) = \sqrt{\min_{p \in P} \left( {d(p,x)}^2 + {\dK_\mu(p)}^2 \right) }. 
\]
We consider a particular choice of the distance metric $d(p,x) = D_K(p,x)$ which leads to a kernel version of the power distance
\[
\Kpow{P}(\mu, x) = \sqrt{\min_{p \in P} \left( {D_K(p,x)}^2 + {\dK_\mu(p)}^2 \right) }.
\]

Recall that $\dK_\mu(x) = D_K(\mu,x)$.  In this section, we will always use the notation $D_K(\mu,\nu)$, and when $\mu$ or $\nu$ are points (e.g. $\mu$ is a Dirac mass at $p$ and $\nu$ is a Dirac mass at $q$), then we will just write $D_K(p,q)$.  This will be especially helpful when we apply the triangle inequality in several places.

\subsection{Kernel Power Distance on Point Set $P$}
\label{app:pow-justP}
Given a set $P$ defining a measure of interest $\mu_P$, it is of interest to consider if $\Kpow{P}(\muP,x)$ is multiplicatively bounded by $D_K(\muP,x)$.  Theorem \ref{thm:powK-low} shows that the lower bound holds.  In this section we try to provide a multiplicative approximation upper bound.  

Let $p^\star = \arg \min_{p \in P} \|p-x\|$.
We can start with Lemma \ref{lem:pK-up} which reduces the problem finding a multiplicative upper bound for $D_K(p^\star,x)$ in terms of $D_K(\muP,x)$.  
However, we are not able to provide very useful bounds, and they require more advanced techniques that the previous section.  In particular, they will only apply for points $x \in \mathbb{R}^d$ when $D_K(\muP,x)$ is large enough; hence not well-approximating the minima of $\dK_\mu$.  

For simplicity, we write $\dK_P(\cdot) = D_K(\muP,\cdot)$ as $D_K(P, \cdot)$. 

The difficult case is when $D_K(P,x)$ is very small, and hence $\kappa(P,P)$ is very small.  So we start by developing tools to upper bound $\kappa(P,P)$ using $\hat p = \arg \min_{p \in P} D_K(P,p)$, a point which only provides a worse approximation that $p^\star$.  

We first provide a general result in a Hilbert space (a refinement of a vector space \cite{Daume2004}), and then next apply it to our setting in the RKHS.

\begin{lemma}
\label{lem:vec-shrink}
Consider a set $V = \{v_1, \ldots, v_n\}$ of vectors in a Hilbert space endowed with norm $\|\cdot\|$ and inner product $\langle \cdot, \cdot \rangle$.  Let each $v_i$ have norm $\|v_i\| = \eta$.  
Consider weights $W = \{w_1, \ldots, w_n\}$ such that $w_i \geq 0$ and $\sum_{i=1}^n w_i = 1$.  Let $r = \sum_{i=1}^n w_i v_i$.  Let $\hat v = \arg \min_{v_i \in V} \|v_i - r\|$.
Then
 \[
 \|r\|^2 \leq \eta^2 - \|r - \hat v\|^2.
 \]
\end{lemma}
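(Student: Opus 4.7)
The plan is to reduce the claim $\|r\|^2 + \|r-\hat v\|^2 \leq \eta^2$ to the single scalar inequality $\|r\|^2 \leq \langle r,\hat v\rangle$, and then establish that inequality from two elementary observations enabled by the assumption that all $v_i$ have the same norm $\eta$.

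First, I would expand the squared norm in the Hilbert space:
\[
\|r-\hat v\|^2 = \|r\|^2 - 2\langle r,\hat v\rangle + \|\hat v\|^2 = \|r\|^2 - 2\langle r,\hat v\rangle + \eta^2,
\]
using $\|\hat v\|=\eta$. Rearranging, the target inequality $\|r\|^2\leq \eta^2-\|r-\hat v\|^2$ is equivalent to $2\|r\|^2 \leq 2\langle r,\hat v\rangle$, i.e., $\|r\|^2\leq \langle r,\hat v\rangle$. So the entire lemma reduces to showing this one inner-product bound.

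The second step exploits the equal-norm condition. Since $\|v_i\|=\eta$ for every $i$,
\[
\|v_i-r\|^2 = \eta^2 + \|r\|^2 - 2\langle v_i,r\rangle,
\]
so minimizing $\|v_i-r\|$ over $i$ is equivalent to maximizing $\langle v_i,r\rangle$ over $i$. In particular, $\hat v = \arg\max_{v_i\in V}\langle v_i,r\rangle$, and hence $\langle v_i,r\rangle \leq \langle \hat v,r\rangle$ for every $i$.

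Third, since $r$ is a convex combination of the $v_i$,
\[
\|r\|^2 = \langle r,r\rangle = \Big\langle \sum_i w_i v_i,\,r\Big\rangle = \sum_i w_i \langle v_i,r\rangle \leq \Big(\sum_i w_i\Big)\langle \hat v,r\rangle = \langle \hat v,r\rangle,
\]
using $w_i\geq 0$ and $\sum_i w_i=1$. Combining with the reduction from the first step completes the proof. There is no real obstacle here; the whole argument is a short computation, and the only idea worth highlighting is the trick that equal norms let one swap ``closest to $r$'' for ``largest inner product with $r$,'' at which point the convex-combination representation of $r$ finishes things off.
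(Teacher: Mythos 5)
Your proof is correct and follows essentially the same route as the paper's: both use the equal-norm condition to convert the nearest-point property of $\hat v$ into the inner-product bound $\langle v_i, r\rangle \leq \langle \hat v, r\rangle$, then apply the convex-combination representation of $r$ to get $\|r\|^2 \leq \langle \hat v, r\rangle$, which is algebraically equivalent to the stated conclusion. The only difference is cosmetic ordering — you reduce the target first and then verify the key inequality, while the paper derives the inequality first and solves at the end.
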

\begin{proof}
Recall elementary properties of inner product space: 
$\| x\| ^2 = \langle x, x\rangle$, 
$\langle ax, y\rangle = a \langle x, y \rangle$, 
$\langle x - y, x - y \rangle = \langle x, x \rangle + \langle y, y \rangle - 2 \langle x, y \rangle$. 
By definition of $\hat v$, 
for any $v_i \in V$, 
\begin{align*}
 \| v_i - r \|^2 \geq \| \hat v - r \|^2  
 \Rightarrow \langle v_i, v_i \rangle + \langle r, r \rangle - 2 \langle v_i, r \rangle 
\geq \langle \hat v, \hat v \rangle + \langle r, r \rangle - 2 \langle \hat v, r \rangle 
 \Rightarrow \langle v_i, r \rangle \leq \langle \hat v, r \rangle. 
\end{align*}
We can decompose $r$ (based on linearity of an inner product space) as 
\[
\|r\|^2 
= 
\langle r, r\rangle 
= 
\sum_{i=1}^n w_i \langle v_i, r \rangle
\leq
\sum_{i=1}^n w_i \langle \hat v, r \rangle
= 
\langle \hat v, r \rangle
= 
\frac{1}{2} ( \|r\|^2 + \|\hat v\|^2 - \|\hat v - r\|^2).
\]
The last inequality holds by $\|\hat v - r\|^2 = \|r\|^2 + \|\hat v\|^2 - 2 \langle \hat v, r\rangle$.  
Then since $\|\hat v\| = \eta$ we can solve for $\|r\|^2$ as 
\[
\|r\|^2 \leq \eta^2 - \|\hat v - r\|^2. \qedhere
\]
\end{proof}

\begin{lemma}
\label{lem:HK-shrink}
Let $\hat p = \arg \min_{p \in P} D_K(P, p)$, then
$\kappa(P,P) \leq \sigma^2 - D_K(P,\hat p)^2$.
\end{lemma}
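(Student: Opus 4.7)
The plan is to lift the whole problem into the RKHS $\Eu{H}_K$ and invoke Lemma \ref{lem:vec-shrink} as a black box. The ingredients we need are already set up in Section \ref{sec:kernel}: each point $p \in P$ maps to $\phi(p) \in \Eu{H}_K$ with $\|\phi(p)\|_{\Eu{H}_K} = \sqrt{K(p,p)} = \sigma$ (since we use the $\sigma^2$-normalized Gaussian), and inner products correspond to kernel evaluations.

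First, I would identify $\mu_P$ with the centroid vector $r := \frac{1}{|P|}\sum_{p \in P} \phi(p) \in \Eu{H}_K$. By bilinearity of the inner product, $\|r\|_{\Eu{H}_K}^2 = \frac{1}{|P|^2}\sum_{p,q \in P} \langle \phi(p), \phi(q)\rangle_{\Eu{H}_K} = \kappa(P,P)$. Similarly, for any $p \in P$, the RKHS norm recovers the kernel distance: $\|r - \phi(p)\|_{\Eu{H}_K}^2 = \kappa(P,P) + K(p,p) - 2\kappa(P,p) = D_K(\mu_P, p)^2 = D_K(P,p)^2$. Consequently $\hat p = \arg\min_{p \in P} D_K(P,p)$ coincides with $\arg\min_{p \in P}\|\phi(p) - r\|_{\Eu{H}_K}$, which is precisely the role played by $\hat v$ in Lemma \ref{lem:vec-shrink}.

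Second, I would apply Lemma \ref{lem:vec-shrink} with the Hilbert space $\Eu{H}_K$, vectors $v_i = \phi(p_i)$ for $p_i \in P$ (all of norm $\eta = \sigma$), uniform weights $w_i = 1/|P|$, and convex combination $r$ as above. The lemma then yields
\[
\|r\|_{\Eu{H}_K}^2 \;\leq\; \sigma^2 - \|r - \phi(\hat p)\|_{\Eu{H}_K}^2,
\]
which, under the identifications above, is exactly $\kappa(P,P) \leq \sigma^2 - D_K(P,\hat p)^2$, the desired conclusion.

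There is no real obstacle here beyond the bookkeeping of translating between kernel language and RKHS language; the heavy lifting has already been done by Lemma \ref{lem:vec-shrink}. The only thing to be careful about is that $\mu_P$ really is represented in $\Eu{H}_K$ by a \emph{convex} combination of the $\phi(p_i)$'s with equal weights $1/|P|$, so Lemma \ref{lem:vec-shrink} applies verbatim; and that $\|\phi(p)\|_{\Eu{H}_K} = \sigma$ uniformly over $p$, which is where the specific $\sigma^2$-normalization of the Gaussian kernel pays off.
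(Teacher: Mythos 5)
Your proof is correct and is essentially identical to the paper's: both lift $P$ into the RKHS via $\phi_K$, identify $\mu_P$ with the uniform centroid $\Phi_K(P)$, note that all lifted points have norm $\sigma$, and apply Lemma \ref{lem:vec-shrink} with weights $1/|P|$. Your extra bookkeeping verifying that $\|r\|^2_{\Eu{H}_K} = \kappa(P,P)$ and $\|r - \phi(p)\|^2_{\Eu{H}_K} = D_K(P,p)^2$ is exactly the translation the paper leaves implicit.
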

\begin{proof}
Let $\phi_K : \Rspace^d \to \Eu{H}_K$ map points in $\Rspace^d$ to the reproducing kernel Hilbert space (RKHS) $\Eu{H}_K$ defined by kernel $K$.  This space has norm $\|P\|_{\Eu{H}_K} = \sqrt{\kappa(P,P)}$ defined on a set of points $P$ and inner product $\kappa(P,P)$.  Let $\Phi_K(P) = \frac{1}{|P|}\sum_{p \in P} \phi_K(p)$ be the representation of a set of points $P$ in $\Eu{H}_K$.  
Note that $D_K(P,Q) = \|\Phi_K(P) - \Phi_K(Q)\|_{\Eu{H}_K}$.  
We can now apply Lemma \ref{lem:vec-shrink} to $\{\phi_K(p)\}_{p \in P}$ with weights $w(p) = 1/|P|$ and $r = \Phi_K(P)$, and norm $\eta = \sigma$.  Hence
$\kappa(P,P) = \|P\|_{\Eu{H}_K}^2 \leq \sigma^2 - D_K(P,\hat p)^2$.
\end{proof}

\begin{lemma}
\label{lem:outa-root}
For any $s>0$ and any $x$, then $\sqrt{s^2 - x} \leq s-x/2s$.
\end{lemma}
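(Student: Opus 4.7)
}
The plan is to prove the inequality by a direct squaring argument, after first checking that both sides are non-negative in the regime where the left-hand side is even defined. For the square root on the left to make sense as a real number, we need $x \leq s^2$. Given this, since $s>0$, we have $s - x/(2s) \geq s - s^2/(2s) = s/2 > 0$, so the right-hand side is strictly positive. Hence it suffices to compare the squares.

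The core step is the algebraic identity
\[
\left(s - \frac{x}{2s}\right)^{\!2} = s^2 - x + \frac{x^2}{4s^2} \;\geq\; s^2 - x,
\]
where the inequality is simply $x^2/(4s^2) \geq 0$. Taking square roots of both sides (both non-negative) gives $s - x/(2s) \geq \sqrt{s^2 - x}$, which is the claim.

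There is essentially no obstacle here; the only subtlety is the sign check I mentioned above, which is routine. If one also wants to cover the case $x < 0$ (where the statement still holds), the same identity works: the right-hand side $s - x/(2s) = s + |x|/(2s)$ is positive, and squaring yields $s^2 + |x| + x^2/(4s^2) \geq s^2 + |x| = s^2 - x$, so the same conclusion follows. Thus the lemma reduces to a one-line computation, and I would present it essentially as the display above together with the brief non-negativity remark.
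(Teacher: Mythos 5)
Your proof is correct and follows essentially the same route as the paper: expand $(s - x/2s)^2 = s^2 - x + x^2/(4s^2)$ and observe the difference is the nonnegative term $x^2/(4s^2)$. Your additional check that the right-hand side is nonnegative before comparing squares is a small but welcome refinement that the paper's proof leaves implicit.
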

\begin{proof}
We expand the square of the desired result
\[
(s^2 -x) \leq (s-x/2s)^2 = s^2 -x +x^2/4s^2.
\]
After subtracting $(s^2-x)$ from both sides, it is equivalent to $0 \leq x^2/4s^2$.  This holds since $x^2$ and $s$ are always nonnegative.  
\end{proof}

\begin{lemma}
\label{lem:sq-bnd}
$D_K(P,x) \geq D_K(p^\star, x)^2/C_\sigma$ for $C_\sigma = 2\sigma + 2$. 
\end{lemma}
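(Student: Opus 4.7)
My plan is to work in the reproducing kernel Hilbert space $\Eu{H}_K$ and obtain the bound by a direct Cauchy--Schwarz argument. Write $\Phi = \Phi_K(P)$ and $\phi_x = \phi_K(x)$, so that $\|\phi_x\|_{\Eu{H}_K} = \sigma$, $\|\Phi\|_{\Eu{H}_K}^2 = \kappa(P,P)$, $\langle \Phi, \phi_x \rangle_{\Eu{H}_K} = \kappa(P,x)$, and $D_K(P,x)^2 = \|\Phi - \phi_x\|_{\Eu{H}_K}^2 = \|\Phi\|_{\Eu{H}_K}^2 + \sigma^2 - 2\kappa(P,x)$. The first step is to apply Cauchy--Schwarz to the inner product, noting that $\kappa(P,x) \geq 0$ since the Gaussian kernel is strictly positive; this yields $\|\Phi\|_{\Eu{H}_K} \geq \kappa(P,x)/\sigma$.

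Substituting this back into the expansion of $D_K(P,x)^2$ produces $D_K(P,x)^2 \geq (\kappa(P,x)/\sigma)^2 - 2\kappa(P,x) + \sigma^2 = (\sigma - \kappa(P,x)/\sigma)^2$. Since $\kappa(P,x) \leq \sigma^2$ (as $K(\cdot,\cdot) \leq \sigma^2$ everywhere), the quantity $\sigma - \kappa(P,x)/\sigma$ is nonnegative, and taking square roots gives $D_K(P,x) \geq (\sigma^2 - \kappa(P,x))/\sigma$. The next step is to invoke the definition of $p^\star$: being the Euclidean nearest point of $P$ to $x$, it is also the point of $P$ maximizing $K(\cdot,x)$, since the Gaussian is monotonically decreasing in radial distance. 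Therefore $\kappa(P,x) = \frac{1}{|P|}\sum_{p \in P} K(p,x) \leq K(p^\star,x)$, and hence $D_K(P,x) \geq (\sigma^2 - K(p^\star,x))/\sigma$.

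Finally, the identity $D_K(p^\star,x)^2 = 2\sigma^2 - 2K(p^\star,x) = 2(\sigma^2 - K(p^\star,x))$ converts the previous line into $D_K(P,x) \geq D_K(p^\star,x)^2/(2\sigma)$, which is even sharper than the stated bound with $C_\sigma = 2\sigma + 2$ and so implies it immediately. I do not foresee a real obstacle here: the argument is just a chain of Cauchy--Schwarz and the trivial max-exceeds-average inequality, and it slots naturally into the Hilbert-space framework already used for Lemma~\ref{lem:HK-shrink}. The only mild care needed is checking signs when removing the square root, which is handled by $\kappa(P,x) \leq \sigma^2$.
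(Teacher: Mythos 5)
Your proof is correct, and it reaches a strictly stronger conclusion ($C_\sigma = 2\sigma$ rather than $2\sigma+2$) by a genuinely different and shorter route than the paper. The paper's argument introduces the auxiliary point $\hat p = \arg\min_{p\in P} D_K(P,p)$, proves $\kappa(P,P) \leq \sigma^2 - D_K(P,\hat p)^2$ via a Hilbert-space averaging lemma (Lemma~\ref{lem:vec-shrink}/\ref{lem:HK-shrink}), combines this with the triangle inequality through the zero measure $\nu_0$ and Lemma~\ref{lem:outa-root} to get $D_K(P,x) \geq D_K(P,\hat p)^2/2\sigma$, and then runs a proof by contradiction to transfer the bound from $\hat p$ to $p^\star$; the quadratic in $C_\sigma$ arising in that last step is what forces the constant $2\sigma+2$. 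You bypass all of this: your Cauchy--Schwarz step is equivalent to observing that $\|\Phi - \phi_x\|_{\Eu{H}_K}$ dominates the component of $\Phi - \phi_x$ along the unit vector $\phi_x/\sigma$, which immediately gives $D_K(P,x) \geq (\sigma^2 - \kappa(P,x))/\sigma$, and then the max-exceeds-average bound $\kappa(P,x) \leq K(p^\star,x)$ (valid because the Gaussian is radially decreasing, so the Euclidean-nearest point maximizes $K(\cdot,x)$ over $P$) together with the identity $D_K(p^\star,x)^2 = 2(\sigma^2 - K(p^\star,x))$ finishes. Every step checks out, including the sign check $\kappa(P,x)\le\sigma^2$ needed to drop the absolute value. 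The paper's heavier machinery is not wasted globally --- the averaging lemma and the $\hat p$-type construction are reused in Appendix~\ref{app:phat+} for the $\hat p_2$/median-concentration argument --- but for this lemma your direct argument is cleaner, and since downstream uses (e.g.\ Theorem~\ref{thm:powK-up-eh}) only need $C_\sigma = O(\sigma)$, your sharper constant is a harmless strengthening.
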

\begin{proof}
Refer to Figure \ref{fig:kern-pow} for geometric intuition in this proof.  
Let $\nu_0$ be a measure that is $\nu_0(p) = 0$ for all $p \in \Rspace^d$; thus it has a norm $\kappa(\nu_0, \nu_0) = 0$.  
We can measure the distance from $\nu_0$ to $x$ and $P$, noting that $D_K(\nu_0,x) = \sqrt{\kappa(x,x)} = \sigma$ and $D_K(\nu_0, P) = \sqrt{\kappa(P,P)}$.  Thus by triangle inequality, Lemma \ref{lem:HK-shrink}, and Lemma \ref{lem:outa-root},   
\begin{align*}
D_K(P,x) 
& \geq 
D_K(\nu_0, x) - D_K(\nu_0,P) 
\\ &= 
\sigma - \sqrt{\kappa(P,P)}
\\ &\geq
\sigma - \sqrt{\sigma^2 - D_K(P,\hat p)^2}  
\\ & \geq
D_K(P,\hat p)^2/2\sigma.  
\end{align*}

\begin{figure}
\centering{\includegraphics{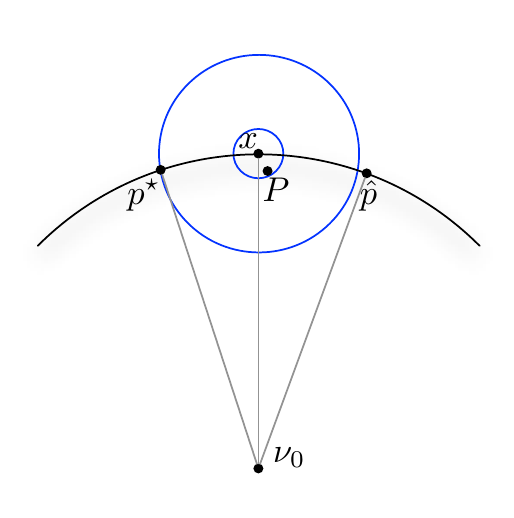}}
\vspace{-5mm}
\caption{\label{fig:kern-pow}
\small \sffamily 
Illustration of $x$, $p^\star$, $\hat p$, $\nu_0$, and $P$ as vectors in a RKHS.  Note we have omitted the $\phi_K$ and $\Phi_K$ maps to unclutter the notation.  
}
\end{figure}

We now assume that $D_K(P,x) < D_K(p^\star,x)/C_\sigma$ and show this is not possible.  
First observe that $D_K(P, \hat p) + D_K(P, x) \geq D_K(\hat p, x) \geq D_K(p^\star, x)$.  
These expressions imply that $D_K(P,\hat p) \geq D_K(p^\star,x) - D_K(P,x) \geq (1-1/C_\sigma)D_K(p^\star, x)$, and thus 
\[
D_K(P,x) 
\geq 
\frac{1}{2 \sigma} D_K(P,\hat p)^2 
\geq 
\frac{1}{2 \sigma} \left(1-\frac{1}{C_\sigma}\right)^2 D_K(p^\star,x)^2
\geq 
\frac{1}{C_\sigma}D_K(p^\star,x)^2,
\]
a contradiction.  
The last steps follows by setting 
\begin{align*}
\frac{1}{2 \sigma} \left(1-\frac{1}{C_\sigma}\right)^2 & \geq \frac{1}{C_\sigma}
\Rightarrow
C_\sigma^2 - (2+2\sigma) C_\sigma + 1  \geq 0
\end{align*}
and solving for $C_\sigma$, 
\begin{align*}
C_\sigma 
 \geq 
\frac{(2+2 \sigma) + \sqrt{(2+2 \sigma)^2 - 4}}{2}
= 
1+ \sigma + \sqrt{\sigma^2 + 2 \sigma}
= 
1 +  \sigma + \sqrt{( \sigma +1)^2 - 1}.
\end{align*}
Since $C_\sigma =  2\sigma + 2 > 1 +  \sigma + \sqrt{( \sigma +1)^2 - 1}$, so we have $\frac{1}{2 \sigma} \left(1-\frac{1}{C_\sigma}\right)^2  \geq \frac{1}{C_\sigma}$.  
\end{proof}

Recall that an $\eps$-kernel sample $P$ of $\mu$ satisfies
$\max_{x \in \Rspace^d} |\kappa(\mu,x) - \kappa(\mu_P,x) | \leq \eps.$

\begin{theorem}
\label{thm:powK-up-eh}
If $D_K(P,x) \geq 1$ then $\Kpow{P}(P,x) \leq \sqrt{6 \sigma + 8} D_K(P,x)$. 
If $P$ is an $(\eps/4)$-kernel sample of $\mu$ then 
$\Kpow{P}(\mu,x) \leq \sqrt{6 \sigma+8} (D_K(\mu,x) +\eps)$.  
\end{theorem}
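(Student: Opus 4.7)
The plan is to handle the two parts sequentially: the first is a direct chain of the two previously established lemmas, while the second transfers the first-part bound from the empirical measure $\mu_P$ to the target measure $\mu$ via the $\eps$-kernel sample property.

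For the first statement, I would instantiate Lemma \ref{lem:pK-up} at the nearest-neighbor choice $p = p^\star := \arg\min_{p\in P}\|p-x\|$, obtaining
\[
\Kpow{P}(P,x)^2 \;\leq\; 2 D_K(P,x)^2 + 3 D_K(p^\star,x)^2.
\]
Next I would plug in the key geometric bound from Lemma \ref{lem:sq-bnd}, namely $D_K(p^\star,x)^2 \leq C_\sigma D_K(P,x)$ with $C_\sigma = 2\sigma+2$. Note that this produces a \emph{linear} factor in $D_K(P,x)$, which does not by itself give a multiplicative bound. The hypothesis $D_K(P,x) \geq 1$ is used precisely here to promote this linear factor to a quadratic one, since then $D_K(P,x) \leq D_K(P,x)^2$. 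Combining,
\[
\Kpow{P}(P,x)^2 \;\leq\; (2 + 3C_\sigma)\,D_K(P,x)^2 = (6\sigma+8)\,D_K(P,x)^2,
\]
and taking the square root gives the first claim.

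For the second statement I would treat the first part as a black box applied at $(\mu_P, x)$ and then perturb both sides using the $(\eps/4)$-kernel sample assumption. The sample property gives, via the computations underlying Lemma \ref{lem:KD-samp}, small uniform control on the differences $\kappa(\mu,\cdot) - \kappa(\mu_P,\cdot)$, which in turn bounds $|D_K(\mu,x) - D_K(\mu_P,x)|$ and $|\dK_\mu(p) - \dK_{\mu_P}(p)|$ for every $p \in P$. Since $\Kpow{P}$ is $1$-Lipschitz in its weight function (from the elementary inequality $\sqrt{a^2 + (b+\delta)^2} \leq \sqrt{a^2+b^2}+\delta$), these perturbations propagate additively into $\Kpow{P}(\mu,x)$ relative to $\Kpow{P}(\mu_P,x)$, and the first part then converts everything into a multiple of $D_K(\mu,x) + \eps$.

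The main obstacle is the bookkeeping of the additive errors through the two nested square roots: a naive chain of triangle inequalities at the $D_K$-level produces slack on the order of $\sqrt{\eps}$ rather than $\eps$, since $D_K(\mu,\mu_P)^2$ is what the kernel-sample hypothesis controls linearly. To obtain the stated $\eps$ one must collect the errors at the $\kappa$-level before taking any square roots, exploiting that all three quantities $\kappa(\mu,\mu)$, $\kappa(\mu,\mu_P)$, and $\kappa(\mu_P,\mu_P)$ are simultaneously within $\eps/4$ of one another. A subsidiary subtlety is that, just as in the first part, Lemma \ref{lem:sq-bnd} produces a linear $D_K$ factor; the conversion to the claimed quadratic form implicitly requires the analogue of the $D_K(P,x) \geq 1$ hypothesis to survive the transfer from $\mu_P$ to $\mu$, so I would verify that this hypothesis is preserved (up to an $O(\eps)$ change) under the sampling step before concluding.
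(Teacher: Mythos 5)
Your first part is exactly the paper's argument: Lemma \ref{lem:pK-up} at $p^\star$, then Lemma \ref{lem:sq-bnd} to get $D_K(p^\star,x)^2 \leq (2\sigma+2)D_K(P,x)$, then the hypothesis $D_K(P,x)\geq 1$ to promote the linear factor to a quadratic one, giving the constant $2+3(2\sigma+2)=6\sigma+8$. For the second part the paper is far terser than you are: it simply asserts $D_K(P,x)\leq D_K(\mu,x)+\eps$ ``via Lemma \ref{lem:KD-samp}'' and substitutes, without addressing either of the two issues you correctly identify --- namely that the kernel-sample hypothesis controls the \emph{squared} distances (so a naive square root yields $\sqrt{\eps}$ rather than $\eps$ unless the distances are bounded below, e.g.\ by the standing hypothesis $D_K(P,x)\geq 1$ together with Theorem \ref{thm:powK-low}), and that $\Kpow{P}(\mu,x)$ uses the weight $\dK_\mu(p)$ whereas the first part bounds $\Kpow{P}(P,x)$ with weight $\dK_{\mu_P}(p)$, so the weight function must also be perturbed. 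Your plan to collect errors at the $\kappa$-level and to check that the lower bound on $D_K$ survives the transfer is the right way to close these gaps; the paper's own one-line derivation implicitly relies on exactly those facts.
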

\begin{proof}
We combine Lemma \ref{lem:sq-bnd} with Lemma \ref{lem:pK-up} to achieve
\[
\Kpow{P}(P,x)^2 
\leq 
2 D_K(P,x)^2 + 3 D_K(p^\star,x)^2
\leq
2 D_K(P,x)^2 + 3(2\sigma+2)  D_K(P,x).
\]
\emph{Aside:  Note that the first $D_K(P,x)$ is squared and the second is not.}
If $D_K(P,x) \geq \alpha$ then $D_K(P,x) \leq (1/\alpha) D_K(P, x)^2$ we have
\[
\Kpow{P}(P,x)^2 \leq (2 + (6 + 6 \sigma)/\alpha) D_K(P,x)^2. 
\]
Let $\alpha=1$.  We have 
\[
\Kpow{P}(P,x)^2 \leq (6 \sigma + 8) D_K(P,x)^2. 
\]

Since $D_K(P,x) \leq D_K(\mu,x)+\eps$, via Lemma \ref{lem:KD-samp}. 
We obtain, 
\[
\Kpow{P}(\mu,x) \leq \sqrt{6 \sigma+8} (D_K(\mu,x) +\eps). \qedhere
\]
\end{proof}

\subsection{Approximating the Minimum Kernel Distance Point}
\label{app:phat+}
The goal in this section is to find a point that approximately minimizes the kernel distance to a point set $P$.  We assume here $P$ contains $n$ points and describes a measure made of $n$ Dirac mass at each $p \in P$ with weight $1/n$ (this is the empirical measure $\muP$ defined in Section \ref{sec:kernel}).  
Let $p_+ = \arg \min_{q \in \Rspace^d} D_K(\muP,q) = \arg \max_{q \in \Rspace^d} \kappa(\muP,q)$. 
Since $D_K(\muP, q) = D_K(P, q)$, for simplicity in notation, we work with point set $P$ instead of $\muP$ for the remaining of this section. 
That is, we define $p_+ = \arg \min_{q \in \Rspace^d} D_K(P,q) = \arg \max_{q \in \Rspace^d} \kappa(P,q)$.  
Note that $p_+$ is chosen over all of $\Rspace^d$, as the bound in Theorem \ref{thm:powK-up-eh} is not sufficient when choosing a point from $P$.  
In particular, for any $\delta>0$, we want a point $\hat p_+$ such that $D_K(P,\hat p_+) \leq (1+\delta) D_K(P,p_+)$.  

Note that Agarwal \etal~\cite{AHKS13} provide an algorithm that with high probability finds a point $\hat q$ such that $\kappa(P,\hat q) \geq (1-\delta) \kappa(P,p_+)$ in time $O((1/\delta^4) n \log n)$.  However this point $\hat q$ is \emph{not} sufficient for our purpose
(that is, $\hat q$ does not satisfy the condition  $D_K(P,\hat q_+) \leq (1+\delta) D_K(P,p_+)$),  since $\hat q$ yields  
\[
D_K(P,\hat q)^2 \leq \sigma^2 + \kappa(P,P) - 2(1-\delta)\kappa(P,p_+) 
\nleq 
(1+\delta)\big(\sigma^2 + \kappa(P,P) - 2\kappa(P,p_+)\big) = (1-\delta) D_K(P,p_+)^2,
\]
since in general it is not true that $4\kappa(P,p_+) \leq \sigma^2 + \kappa(P,P)$, as would be required.  

First we need some structural properties. 
For each point $x \in \Rspace^d$, define a radius $r_x = \arg \sup_{r >0} \{|B_r(x) \cap P| \leq n/2$\}, where $B_r(x)$ is a ball of radius $r$ centered at $x$.  In other words, it is the largest radius such that at most half of points in $P$ are within $B_r(x)$.  
Let $\hat p_2$ be the point in $P$ such that $\|p_+ - \hat p_2\| = r_{p_+}$.  
In other words, $\hat p_2$ is a point such that no more than $n/2$ points in $P$ satisfy $\|p_+ - p\| \geq \|p_+ - \hat p_2\|$.  
Finally it is useful to define $r_{x,K}$ which is $r_{x,K} = D_K(x,p)$ where $\|x-p\| = r_x$; in particular $r_{p_+,K} = D_K(p_+, \hat p_2)$.  

We now need to lower bound $D_K(P,p_+)$ in terms of $D_K(P,\hat p_2)$.
Lemma \ref{lem:sq-bnd} already provides a bound in terms of the closest point for any $x \in \Rspace^d$.  We follow a similar construction here.

\begin{lemma}
\label{lem:vec-shrink2}
Consider a set $V = \{v_1, \ldots, v_n\}$ of vectors in a Hilbert space endowed with norm $\|\cdot\|$ and inner product $\langle \cdot, \cdot \rangle$.  Let each $v_i$ have norm $\|v_i\| = \eta$.  
Consider weights $W = \{w_1, \ldots, w_n\}$ such that $1/2 \geq w_i \geq 0$ and $\sum_{i=1}^n w_i = 1$.  Let $r = \sum_{i=1}^n w_i v_i$.  Define a partition of $V$ with $V_1$ and $V_2$ such that $V_2$ is the smallest set such that $\sum_{v_i \in V_2} w_i \geq 1/2$, and for all $v_1 \in V_1$ and $v_2 \in V_2$ we have $\| r- v_1\| < \| r - v_2\|$.  Let $\hat v_2 = \arg \min_{v_i \in V_2} \|v_i - r\|$.
Then
 \[
 \|r\|^2 \leq \eta^2 - \frac{\|r- \hat v_2\|^2}{2}.
 \]
\end{lemma}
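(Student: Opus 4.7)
My plan is to avoid the inner-product manipulation used in Lemma \ref{lem:vec-shrink} and instead work directly with the squared distances $\|r - v_i\|^2$. The key identity is
\[
\sum_{i=1}^n w_i \|r - v_i\|^2 \;=\; \eta^2 - \|r\|^2,
\]
which one obtains in one line by expanding $\|r - v_i\|^2 = \|r\|^2 + \|v_i\|^2 - 2\langle r,v_i\rangle$, summing against $w_i$, and using $\sum_i w_i = 1$, $\|v_i\| = \eta$, and $r = \sum_i w_i v_i$ (so $\sum_i w_i \langle r, v_i\rangle = \langle r, r\rangle$). So the problem is reduced to lower-bounding $\sum_i w_i \|r - v_i\|^2$ by $\tfrac{1}{2}\|r - \hat v_2\|^2$.

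Next I would exploit the structure of $V_2$. By definition of $\hat v_2$ as the closest element of $V_2$ to $r$, every $v_i \in V_2$ satisfies $\|r - v_i\| \geq \|r - \hat v_2\|$. Combined with $\sum_{v_i \in V_2} w_i \geq 1/2$ (which is built into the definition of the partition), this yields
\[
\sum_{i=1}^n w_i \|r - v_i\|^2 \;\geq\; \sum_{v_i \in V_2} w_i \|r - v_i\|^2 \;\geq\; \Bigl(\sum_{v_i \in V_2} w_i\Bigr) \|r - \hat v_2\|^2 \;\geq\; \tfrac{1}{2}\,\|r - \hat v_2\|^2,
\]
where the $V_1$ contribution is simply dropped since each term is nonnegative. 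Plugging this into the identity above gives $\|r\|^2 \leq \eta^2 - \tfrac{1}{2}\|r - \hat v_2\|^2$, as claimed.

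There is no real obstacle here; the only subtlety is noticing that the ``two halves'' of the hypothesis are used for two different purposes. The constraint $w_i \leq 1/2$ guarantees that the greedy partition actually produces a nonempty $V_2$ with $\sum_{V_2} w_i \geq 1/2$ (without some such cap, a single vector could carry all the weight and the partition would degenerate), while the fact that elements of $V_1$ are closer to $r$ than those of $V_2$ is not used at all in the bound above. The factor $1/2$ (rather than the $1$ of Lemma \ref{lem:vec-shrink}) is exactly the price paid for throwing away the $V_1$ contribution, and it is tight up to this bookkeeping: if all but one vector lies at $r$ itself with total weight $1/2$, the inequality is saturated.
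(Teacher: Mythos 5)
Your proof is correct, and it takes a genuinely different and cleaner route than the paper's. The paper proves the bound by working with inner products: it orders the $v_i$ by $\langle v_i, r\rangle$, introduces an auxiliary norm-$\eta$ vector $\hat v$ with $\langle \hat v, r\rangle = \tfrac{1}{2}(\langle \hat v_1, r\rangle + \langle \hat v_2, r\rangle)$ (where $\hat v_1$ is the closest element of $V_1$ to $r$), shows $\|r\|^2 = \sum_i w_i \langle v_i, r\rangle \leq \langle \hat v, r\rangle$ using the weight split $\sum_{V_1} w_i = 1/2 - \delta$, $\sum_{V_2} w_i = 1/2 + \delta$, and finally converts $\langle \hat v, r\rangle$ back into $\tfrac{1}{2}(\|r\|^2 + \eta^2 - \|\hat v - r\|^2)$ with $\|\hat v - r\|^2 = \tfrac{1}{2}(\|\hat v_1 - r\|^2 + \|\hat v_2 - r\|^2) \geq \tfrac{1}{2}\|\hat v_2 - r\|^2$. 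Your argument replaces all of this with the single identity $\sum_i w_i \|r - v_i\|^2 = \eta^2 - \|r\|^2$ and a direct lower bound on the left-hand side obtained by keeping only the $V_2$ terms; it is shorter, avoids constructing the auxiliary vector $\hat v$, and makes transparent exactly which hypotheses do work (the weight of $V_2$ and the minimality of $\hat v_2$ within $V_2$) and which are inert (the closeness ordering between $V_1$ and $V_2$, and the cap $w_i \leq 1/2$, which serve only to make the partition well defined). The paper's route has the mild virtue of paralleling its earlier Lemma \ref{lem:vec-shrink} almost verbatim, but yours is the more economical proof of the statement as written. One small caveat on your closing remarks: vectors of norm $\eta$ cannot literally sit at $r$ unless $\|r\| = \eta$, so the saturation example should be read as a limiting or approximate one; this does not affect the proof itself.
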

\begin{proof}
For ease of notation, we assume that $\langle v_i,r\rangle > \langle v_{i+1}, r\rangle$ for all $i$, and let $\{v_1, \ldots, v_k\} = V_1$.    Let $\hat v_1 = \arg \min_{v_i \in V_1} \|v_i - r\| = \arg \min_{v_i \in V} \|v_i - r\|$.  
Let $\hat v$ be a norm $\eta$ vector that has $\langle \hat v,  r\rangle = (\langle \hat v_1, r \rangle + \langle \hat v_2, r \rangle)/2$.
Since  $\sum_{v_i \in V_2} w_i \geq 1/2$ and $\sum_{v_i \in V_1} w_i \leq 1/2$, 
let $\sum_{v_i \in V_2} w_i = 1/2 + \delta$ and $\sum_{v_i \in V_1} w_i = 1/2 - \delta$ (for $0 \leq \delta \leq 1/2)$. 
By definition, we also have $\langle \hat v_1, r \rangle \geq \langle \hat v_2, r \rangle$. 
We can decompose $r$ as 
\begin{align*}
\|r\|^2 
= 
\langle r, r\rangle 
&= 
\sum_{i=1}^n w_i \langle v_i, r \rangle
=
\sum_{i=1}^k w_i \langle v_i, r\rangle + \sum_{i={k+1}}^n w_i \langle v_i, r \rangle
\\ &\leq
\sum_{i=1}^k w_i \langle \hat v_1, r\rangle + \sum_{i=k+1}^n w_i \langle \hat v_2, r \rangle
= 
\left(\sum_{i=1}^k w_i\right) \langle \hat v_1, r\rangle  + \left(\sum_{i=k+1}^n w_i \right) \langle \hat v_2, r \rangle 
\\ &
= (1/2 - \delta) \langle \hat v_1, r \rangle + (1/2 + \delta)  \langle \hat v_2, r \rangle 
= (1/2) ( \langle \hat v_1, r\rangle + \langle \hat v_2, r \rangle) + \delta( \langle \hat v_2, r \rangle -  \langle \hat v_1, r \rangle)
\\ & \leq
(\langle \hat v_1, r \rangle + \langle \hat v_2, r \rangle)/2
= 
\langle \hat v, r \rangle
\\ &= 
\frac{1}{2} ( \|r\|^2 + \|\hat v\|^2 - \|\hat v - r\|^2).
\end{align*}
The last inequality holds by $\|\hat v - r\|^2 = \|r\|^2 + \|\hat v\|^2 - 2 \langle \hat v, r \rangle$.  
Then since $\|\hat v\| = \eta$ we can solve for $\|r\|^2$ as 
\[
\|r\|^2 \leq \eta^2 - \|\hat v - r\|^2 = \eta^2 - (\|\hat v_2 -r\|^2 + \|\hat v_1 - r\|^2)/2
\leq
\eta^2 - \|\hat v_2 - r\|^2/2.  \qedhere
\]
\end{proof}

\begin{lemma}
\label{lem:HK-shrink2}
Using $\hat p_2$ as defined above, then
$\kappa(P,P) \leq \sigma^2 - D_K(P,\hat p_2)^2/2$.
\end{lemma}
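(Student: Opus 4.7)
The strategy is to parallel the proof of Lemma \ref{lem:HK-shrink} but invoke the sharper Lemma \ref{lem:vec-shrink2} in place of Lemma \ref{lem:vec-shrink}. Lift $P$ into the RKHS $\Eu{H}_K$ via $v_p := \phi_K(p)$ for each $p \in P$. Each $v_p$ has norm $\sqrt{K(p,p)} = \sigma$; with uniform weights $w_p = 1/|P| \leq 1/2$ (for $|P| \geq 2$) the mean vector $r := \Phi_K(P)$ satisfies $\|r\|_{\Eu{H}_K}^2 = \kappa(P,P)$ and $\|r - v_p\|_{\Eu{H}_K} = D_K(P,p)$. These match the hypotheses of Lemma \ref{lem:vec-shrink2} with $\eta = \sigma$.

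I would then partition $V = \{v_p\}_{p \in P}$ into $V_1 \sqcup V_2$ by RKHS distance from $r$: the $\lceil |P|/2 \rceil$ closest vectors go into $V_1$, the remaining ones into $V_2$, so that $V_2$ is the smallest subset of total weight at least $1/2$ and is separated from $V_1$ in the sense required by the lemma. Applying Lemma \ref{lem:vec-shrink2} then yields
\[
\kappa(P,P) \;\leq\; \sigma^2 - \tfrac{1}{2}\,\|r - \hat v_2\|_{\Eu{H}_K}^2 \;=\; \sigma^2 - \tfrac{1}{2}\, D_K(P,q)^2,
\]
where $\hat v_2 = \phi_K(q)$ is the element of $V_2$ closest to $r$ and $q \in P$ is the corresponding preimage, i.e., $q$ is a median of $\{D_K(P,p) : p \in P\}$ taken from the far side.

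The remaining step, which I expect to be the main obstacle, is to replace $D_K(P,q)$ by $D_K(P,\hat p_2)$ from the statement. The difficulty is that $\hat p_2$ is defined via the Euclidean median distance from $p_+ \in \Rspace^d$, whereas $q$ arises from a median of RKHS distances from $r = \Phi_K(P) \in \Eu{H}_K$. I would bridge this in two substeps. First, because $\|\phi_K(p) - \phi_K(p_+)\|_{\Eu{H}_K}^2 = 2\sigma^2(1 - \exp(-\|p - p_+\|^2/2\sigma^2))$ is strictly increasing in $\|p - p_+\|$ for the Gaussian kernel, the Euclidean median definition of $\hat p_2$ is equivalent to a median with respect to RKHS distance from $\phi_K(p_+)$. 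Second, since $\phi_K(p_+)$ is by definition the point on the image sphere $\{\phi_K(x) : x \in \Rspace^d\}$ closest to $r$, distances from $\phi_K(p_+)$ and from $r$ induce closely related orderings on this sphere, which I would use to deduce $D_K(P,\hat p_2) \leq D_K(P,q)$. If a clean ordering inequality fails in full generality, the fallback is to argue directly from the variance identity $\sigma^2 - \kappa(P,P) = (1/|P|)\sum_{p \in P} D_K(P,p)^2$: the target inequality is equivalent to $D_K(P,\hat p_2)^2 \leq 2\,(1/|P|)\sum_{p \in P} D_K(P,p)^2$, a Markov-type bound that will follow once $\hat p_2$'s "far-from-$p_+$" characterization is translated into a "far-from-$r$" statement via the two ingredients above.
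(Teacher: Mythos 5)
Your setup is exactly the paper's: lift $P$ into $\Eu{H}_K$, apply Lemma \ref{lem:vec-shrink2} with $\eta = \sigma$, $w_p = 1/|P|$, $r = \Phi_K(P)$, and read off $\kappa(P,P) \leq \sigma^2 - \|r - \hat v_2\|^2/2$. The difference is that you are explicit about the step the paper handles in a single unjustified sentence: the paper simply declares $\hat v_2 = \phi_K(\hat p_2)$ ``since $V_2$ represents the set of points which are further or equal to $P$ than is $\hat p_2$.'' That declaration is precisely the claim you isolate as the main obstacle, namely that at least half the points $p \in P$ satisfy $D_K(P,p) \geq D_K(P,\hat p_2)$, equivalently $\kde_P(p) \leq \kde_P(\hat p_2)$. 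What the definition of $\hat p_2$ actually guarantees is the different statement that at least half the points satisfy $\|p - p_+\| \geq \|\hat p_2 - p_+\|$; your first bridge (this is the same as the far-side median in the $D_K(p_+,\cdot)$ ordering, by monotonicity of the Gaussian) is fine, but it is the wrong ordering.

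Your second bridge does not close the gap, and neither does the Markov fallback, because the orderings by $\|\cdot - p_+\|$ and by $D_K(P,\cdot)$ can genuinely disagree: a point at median Euclidean distance from $p_+$ may be isolated (low $\kde_P$, hence large $D_K(P,\cdot)$) while many Euclidean-farther points sit in a dense cluster (high $\kde_P$, small $D_K(P,\cdot)$). Concretely, place $n/2$ points in a tight cluster at the origin (so $p_+ \approx 0$), $n/2 - 1$ points in a tight cluster at distance $2L \gg \sigma$, and one point $z$ at distance $L$ in a perpendicular direction; then $r_{p_+} = L$ and $\hat p_2 = z$, yet $\kappa(P,z) \approx \sigma^2/n$ gives $D_K(P,z)^2 \approx \sigma^2 + \kappa(P,P)$ while $\kappa(P,P) \approx \sigma^2/2$, so the claimed inequality $D_K(P,\hat p_2)^2 \leq 2\left(\sigma^2 - \kappa(P,P)\right)$ fails. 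In other words, the missing step is not merely hard; as a general ordering statement it is false, and the lemma needs $\hat p_2$ to be the far-side median with respect to $D_K(P,\cdot)$ (which is what the paper's one-line justification silently assumes), or an additional hypothesis tying the Euclidean ordering around $p_+$ to the $\kde_P$ ordering. Your variance identity $\sigma^2 - \kappa(P,P) = \frac{1}{|P|}\sum_{p \in P} D_K(P,p)^2$ is correct and is the cleanest way to see exactly what is needed, but it reduces to the same unresolved ordering question.
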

\begin{proof}
Let $\phi_K : \Rspace^d \to \Eu{H}_K$ map points in $\Rspace^d$ to the reproducing kernel Hilbert space (RKHS) $\Eu{H}_K$ defined by kernel $K$.  This space has norm $\|P\|_{\Eu{H}_K} = \sqrt{\kappa(P,P)}$ defined on a set of points $P$ and inner product $\kappa(P,P)$.  Let $\Phi_K(P) = \frac{1}{|P|}\sum_{p \in P} \phi_K(p)$ be the representation of a set of points $P$ in $\Eu{H}_K$.  
Note that $D_K(P,Q) = \|\Phi_K(P) - \Phi_K(Q)\|_{\Eu{H}_K}$.  
We can now apply Lemma \ref{lem:vec-shrink2} to $\{\phi_K(p)\}_{p \in P}$ with weights $w(p) = 1/|P|$ and $r = \Phi_K(P)$, and norm $\eta = \sigma$.  
Finally note that we can use $\phi_K(\hat p_2) = \hat v_2$ since $V_2$ represents the set of points which are further or equal to $P$ than is $\hat p_2$.   
In addition, by the property of RKHS, $\| \Phi_K(P) - \phi_K(\hat p_2) \| = D_K(P, \hat p_2)$. 
Hence
$\kappa(P,P) = \|P\|_{\Eu{H}_K}^2 \leq \sigma^2 - D_K(P,\hat p_2)^2/2$.
\end{proof}

\begin{lemma}
\label{lem:sq-bnd2}
$D_K(P,p_+) \geq D_K(p_+,\hat p_2)^2/(4\sigma)$.  
\end{lemma}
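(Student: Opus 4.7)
The plan is to lower-bound $D_K(P, p_+)$ directly, using a Cauchy--Schwarz bound in the RKHS together with a median-based upper bound on $\kappa(P, p_+)$. Routing the argument through Lemma~\ref{lem:HK-shrink2} (the analogue of Lemma~\ref{lem:HK-shrink} used in the proof of Lemma~\ref{lem:sq-bnd}) only controls $D_K(P, \hat p_2)$; converting this to a bound in $D_K(p_+, \hat p_2)$ via the triangle inequality loses an extra factor of $2$, yielding $16\sigma$ rather than $4\sigma$. A direct approach avoids this.

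First, I would apply Cauchy--Schwarz in $\Eu{H}_K$: since $\kappa(P, p_+) = \langle \Phi_K(P), \phi_K(p_+)\rangle_{\Eu{H}_K}$ and $\|\phi_K(p_+)\|_{\Eu{H}_K} = \sigma$, we obtain $\kappa(P, p_+)^2 \leq \sigma^2\,\kappa(P, P)$, i.e.\ $\kappa(P, P) \geq \kappa(P, p_+)^2/\sigma^2$. Substituting into the identity $D_K(P, p_+)^2 = \sigma^2 + \kappa(P, P) - 2\kappa(P, p_+)$ gives
\[
D_K(P, p_+)^2 \;\geq\; \sigma^2 - 2\kappa(P, p_+) + \frac{\kappa(P, p_+)^2}{\sigma^2} \;=\; \Big(\sigma - \frac{\kappa(P, p_+)}{\sigma}\Big)^{\!2}.
\]
Because $K(p,q) \leq \sigma^2$ pointwise, $\kappa(P, p_+) \leq \sigma^2$, so $\sigma - \kappa(P, p_+)/\sigma \geq 0$ and taking a square root yields $D_K(P, p_+) \geq \sigma - \kappa(P, p_+)/\sigma$.

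Second, I would upper-bound $\kappa(P, p_+)$ using the median structure of $\hat p_2$. Using $K(p, q) = \sigma^2 - D_K(p, q)^2/2$, rewrite
\[
\sigma^2 - \kappa(P, p_+) \;=\; \frac{1}{2n}\sum_{p \in P} D_K(p, p_+)^2.
\]
By the definition of $r_{p_+}$ and $\hat p_2$, at least $n/2$ points $p \in P$ satisfy $\|p - p_+\| \geq \|\hat p_2 - p_+\|$, and since the Gaussian kernel is strictly decreasing in Euclidean distance, these points also satisfy $D_K(p, p_+) \geq D_K(\hat p_2, p_+)$. Therefore $\sum_{p \in P} D_K(p, p_+)^2 \geq (n/2)\,D_K(\hat p_2, p_+)^2$, and so $\kappa(P, p_+) \leq \sigma^2 - D_K(\hat p_2, p_+)^2/4$.

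Combining the two steps,
\[
D_K(P, p_+) \;\geq\; \sigma - \frac{\sigma^2 - D_K(\hat p_2, p_+)^2/4}{\sigma} \;=\; \frac{D_K(\hat p_2, p_+)^2}{4\sigma},
\]
as claimed. The main subtle point to double-check is the square-root step, i.e.\ that $\sigma - \kappa(P, p_+)/\sigma \geq 0$; this is the only nontrivial inequality-of-signs and it follows immediately from the pointwise bound $K(p,q) \leq \sigma^2$.
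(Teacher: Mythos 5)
Your proof is correct, and it takes a genuinely different route from the paper's. The paper lower-bounds $D_K(P,p_+)$ by a triangle inequality through the zero measure $\nu_0$ (giving $\sigma - \sqrt{\kappa(P,P)}$), then controls $\kappa(P,P)$ via the Hilbert-space median argument of Lemmas \ref{lem:vec-shrink2} and \ref{lem:HK-shrink2}, and finally linearizes the square root with Lemma \ref{lem:outa-root}. You instead apply Cauchy--Schwarz to get $D_K(P,p_+) \geq \sigma - \kappa(P,p_+)/\sigma$ --- a strictly stronger starting point than the $\nu_0$ bound, and one that makes Lemma \ref{lem:outa-root} unnecessary since no square root appears --- and then run the median count directly on $\kappa(P,p_+)$ rather than on $\kappa(P,P)$. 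Each step checks out: the identity $K(p,q) = \sigma^2 - D_K(p,q)^2/2$ is exact for the $\sigma^2$-normalized Gaussian, at least $n/2$ points of $P$ lie at Euclidean distance $\geq \|\hat p_2 - p_+\|$ from $p_+$ by the definition of $r_{p_+}$, and monotonicity of $D_K(p,q)$ in $\|p-q\|$ transfers this to kernel distances. A notable bonus of your version: it establishes the inequality in exactly the form stated and used downstream in the net construction, namely with $D_K(p_+,\hat p_2)$ on the right-hand side, whereas the paper's chain of inequalities terminates with $D_K(P,\hat p_2)$; converting the latter to the former via the triangle inequality (using $D_K(P,p_+) \leq D_K(P,\hat p_2)$) would degrade the constant from $4\sigma$ to $16\sigma$, as you observe. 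So your argument is not only self-contained but arguably repairs a mismatch between the paper's statement and its proof.
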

\begin{proof}
Refer to Figure \ref{fig:kern-pow} for geometric intuition in this proof.  
Let $\nu_0$ be a measure that is $\nu_0(p) = 0$ for all $p \in \Rspace^d$; thus it has a norm $\kappa(\nu_0, \nu_0) = 0$.  
We can measure the distance from $\nu_0$ to $p_+$ and $P$, noting that $D_K(\nu_0,x) = \sqrt{\kappa(x,x)} = \sigma$ and $D_K(P,\nu_0) = \sqrt{\kappa(P,P)}$.  Thus by triangle inequality, Lemma \ref{lem:HK-shrink2}, and Lemma \ref{lem:outa-root} 
\begin{align*}
D_K(P,p_+) 
& \geq 
D_K(\nu_0, p_+) - D_K(P, \nu_0) 
\\ &= 
\sigma - \sqrt{\kappa(P,P)}
\\ &\geq
\sigma - \sqrt{\sigma^2 - D_K(P,\hat p_2)^2/2} 
\\ & \geq
D_K(P,\hat p_2)^2/4\sigma.  \qedhere
\end{align*}
\end{proof}

Now we place a net $\Eu{N}$ on $\Rspace^d$; specifically, it is a set of points such that for some $q \in \Eu{N}$ that
$\|q - p_+\| \leq \delta D_K(p_+,\hat p_2)^2/4\sigma \leq \delta D_K(P,p_+)$ (we refer to this inequality as the \emph{net condition}, therefore, $\Eu{N}$ is a set of points such that some points in it satisfy the net condition).  Since $D_K(P,\cdot)$ is $1$-Lipschitz, we have 
$D_K(P, p_+) - D_K(P, q) \leq \|q - p_+ \|$. 
This ensures that some point $q \in \Eu{N}$ satisfies $D_K(P,q) \leq (1+\delta) D_K(P, p_+)$, and can serve as $\hat p_+$.  
In other words, $\Eu{N}$ is guaranteed to contain some point $q$ that can serve as $p_+$.  

Note that $p_+$ must be in $\CH(P)$, the convex hull of $P$.  Otherwise, moving to the closest point on $\CH(P)$ decreases the distance to all points, and thus increases $\kappa(P,p_+)$, which cannot happen by definition of $p_+$.  Let $\Delta$ be the diameter of $P$ (the distance between the two furthest points).  Clearly for some $p \in P$ we must have $\|p_+ - p\| \leq \Delta$.  

Also note that $p_+ : = \arg \max_{q \in \Rspace^d} \kappa(P, q)$ must be within a distance $R_\sigma = \sigma \sqrt{2 \ln(n)}$ to some $p \in P$, otherwise for $p^\star = \arg \min_{p \in P} \|p_+ -p\|$, we can bound
$
\kappa(P,p_+) \leq K(p^\star,p_+) \leq \sigma^2/n = K(p^\star,p^\star)/n \leq \kappa(P,p^\star),
$ which means $p_+$ is not a maximum.  
The first inequality is by definition of $p^*$, the second by assuming $\|p_+ - p^\star\| \geq \sigma \sqrt{2 \ln(n)}$.

Let $B_R(p)$ be the ball centered at $p$ with radius $R = \min(R_\sigma,\Delta)$.  
Let $R_p = \min(R,r_p/2)$.  
So $p_+$ must be in $\bigcup_{p \in P} B_R(p)$.  
We describe a net $\Eu{N}_p$ construction for one ball $B_R(p)$; that is for any $x$ such that $p \in P$ is the closest point to $x$, then some point $q \in \Eu{N}_p$ satisfies  $\|q-x\| \leq \delta (r_{x,K})^2/4 \sigma$.  Thus if this point $x = p_+$, the correct property holds, and we can use the corresponding $q$ as $\hat p_+$.  
Then $\Eu{N} = \bigcup_{p \in P} \Eu{N}_p$, and is at most $n$ times the size of $\Eu{N}_p$.  
Let $k_p$ be the smallest integer $k$ such that $r_p/2 \geq R/2^k$.  
The net $\Eu{N}_p$ will be composed of $\Eu{N}_p = \bigcup_{i=0}^{k_p} \Eu{N}_i = \Eu{N}_0 \cup \Eu{N}'_p$, where $\Eu{N}'_p = \bigcup_{i=1}^{k_p} \Eu{N}_i$. 

Before we proceed with the construction, we need an assumption:  That $\Lambda_P = \min_{p \in P} r_p$ is a bounded quantity, it is not too small.  That is,  no point has \emph{more} than half the points within an absolute radius $\Lambda_P$.  We call $\Lambda_P$ the \emph{median concentration}.  

\begin{lemma}
\label{lem:N0}
A net $\Eu{N}_0$ can be constructed of size $O((\sigma/ \delta \Lambda_P)^d + \log^{d/2}(n))$ so that all points $x \in B_{R_p}(p)$ satisfy $\|q - x\| \leq \delta (r_{x,K})^2/4 \sigma$ for some $q \in \Eu{N}_0$.
\end{lemma}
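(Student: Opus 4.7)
}
The plan is to place a regular axis-aligned grid inside $B_{R_p}(p)$ whose spacing is tuned to the worst-case required accuracy $\delta (r_{x,K})^2/(4\sigma)$ as $x$ ranges over that ball, and then to split the count into two regimes depending on the size of $r_p$ relative to $\sigma$. The first step is a purely geometric lemma: every $x \in B_{R_p}(p)$ satisfies $r_x \geq r_p/2$. This follows because $\|x-p\| \leq R_p \leq r_p/2$ implies $B_{r_p/2}(x) \subseteq B_{r_p}(p)$, and $B_{r_p}(p)$ contains at most $n/2$ points of $P$ by the defining supremum property of $r_p$ (up to an arbitrarily small $\epsilon$).

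Next I would translate this lower bound on $r_x$ into a lower bound on $r_{x,K}$ via Lemma \ref{lem:E2DK}. Concretely, if $r_x \leq \sqrt{3}\sigma$ then $r_{x,K} \geq r_x/2 \geq r_p/4$; otherwise $r_x > \sqrt{3}\sigma$ and a direct evaluation of $(D_K(x,p'))^2 = 2\sigma^2(1 - e^{-\|x-p'\|^2/2\sigma^2})$ at $\|x-p'\| = r_x$ gives $r_{x,K} \geq c\sigma$ for an absolute constant $c \approx 1.24$. Combining, one has the uniform lower bound $r_{x,K} \geq \tfrac{1}{2}\min(r_p/2,\,\sqrt{3}\sigma)$ for every $x \in B_{R_p}(p)$, hence a uniform required grid spacing
\[
\ell \;:=\; \frac{\delta}{4\sigma}\,\Bigl(\tfrac{1}{2}\min(r_p/2,\sqrt{3}\sigma)\Bigr)^2 \;=\; \Theta\!\Bigl(\min\!\bigl(\delta r_p^2/\sigma,\;\delta\sigma\bigr)\Bigr).
\]

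The final step is a volume count. An axis-aligned grid of spacing $\ell/\sqrt{d}$ in a ball of radius $R_p$ has $O((R_p/\ell)^d)$ points, and every $x$ in the ball is within $\ell$ of some grid point. I would then split into two cases:
\emph{(i)} $r_p \leq 4\sqrt{3}\sigma$, so $\ell = \Theta(\delta r_p^2/\sigma)$ and $R_p \leq r_p/2$, yielding $O((R_p/\ell)^d) = O((\sigma/(\delta r_p))^d) \leq O((\sigma/(\delta\Lambda_P))^d)$ grid points by the median concentration assumption $r_p \geq \Lambda_P$;
\emph{(ii)} $r_p > 4\sqrt{3}\sigma$, so $\ell = \Theta(\delta\sigma)$ and $R_p \leq R \leq R_\sigma = \sigma\sqrt{2\ln n}$, yielding $O((\sqrt{\ln n}/\delta)^d)$ grid points, which, treating $\delta$ as a fixed parameter absorbed in the big-$O$, is $O(\log^{d/2}(n))$. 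Summing the two case contributions gives the claimed bound on $|\Eu{N}_0|$.

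The main obstacle I expect is the second case bound: matching the stated $O(\log^{d/2}(n))$ exactly (rather than $O((\sqrt{\log n}/\delta)^d)$) requires being careful about how $\delta$ is tracked, and one may need either to declare $\delta$ constant in that term or to state the bound as $O((\sigma/(\delta\Lambda_P))^d + (\sqrt{\log n}/\delta)^d)$. A secondary subtlety is that $r_x$ is defined as a supremum, so strict versus non-strict containment when bounding $|B_{r_p}(p) \cap P|$ requires an $\epsilon$-perturbation argument; this is routine but should be written explicitly.
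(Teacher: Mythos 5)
Your proposal is correct and follows essentially the same route as the paper: establish $r_x \geq r_p/2$ on $B_{R_p}(p)$ by the containment argument, convert this to a uniform lower bound on $r_{x,K}$ via Lemma \ref{lem:E2DK}, lay down an axis-aligned grid at the resulting spacing, and count grid points in two regimes ($r_p$ small versus large relative to $\sigma$). The one place you diverge is also the place where you are more careful than the paper. The paper's net spacing is $\min(\delta r_p^2/32\sigma,\sqrt{3}\sigma)$, and in the regime where the minimum is $\sqrt{3}\sigma$ it silently assumes $r_x^2 \leq 2 r_{x,K}^2$, which Lemma \ref{lem:E2DK} only supplies for $r_x \leq \sqrt{3}\sigma$ --- precisely the regime that is excluded there. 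Since $r_{x,K}^2 < 2\sigma^2$ always, the required accuracy in that regime is $\Theta(\delta\sigma)$, not $\sqrt{3}\sigma$, so spacing $\sqrt{3}\sigma$ only suffices when $\delta \geq 2\sqrt{3}$. Your explicit evaluation of $D_K$ at $\|x-p'\| = \sqrt{3}\sigma$ to get $r_{x,K} \geq c\sigma$ handles this correctly, and the $O((\sqrt{\log n}/\delta)^d)$ you obtain for the second term is the honest bound; the paper's stated $O(\log^{d/2} n)$ absorbs $\delta$ as a constant, which is consistent with how the lemma is ultimately invoked (Theorem \ref{thm:powK-up-hat} fixes $\delta = 1/2$) but should be stated as an assumption, exactly as you suggest. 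Your remark about the supremum definition of $r_x$ is routine and matches the paper's implicit treatment.
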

If $x = p_+$, then such a point satisfies the net condition, that is there is a point $q \in \Eu{N}_0$ such that   
$\|q - x\| = \| q - p_+\| \leq \delta (r_{p_+,K})^2/(4 \sigma) = \delta D_K(p_+,\hat p_2)/(4\sigma) \leq \delta D_K(P, p_+)$.  
\begin{proof}
For all points $x \in B_{R_p} \subset B_{r_p/2}(p)$, they must have $r_x \geq r_p/2$, otherwise $B_{r_p/2}(x)$ is completely inside $B_{r_p}(p)$, and cannot have enough points.   
Within $B_{R_p}(p)$ we place the net $\Eu{N}_0$ so that all points $x \in B_{R_p}(p)$ satisfy $\|x-q\| \leq \min(\delta r_p^2/32\sigma, \sqrt{3}\sigma)$ for some $q \in \Eu{N}_0$.  
Now $\delta r_p^2/32 \sigma \leq \delta r_x^2 / 8 \sigma$, and since $\|x-y\|^2/2 \leq D_K(x,y)^2$ (for $\|x-y\| \leq \sqrt{3}\sigma$, via Lemma \ref{lem:E2DK}), thus the net ensures if $p_+ \in B_{R_p}(p)$, then some $q \in \Eu{N}_0$ is sufficiently close to $p_+$.  

Since $B_{R_p}(p)$ fits in a squared box of side length $\min(2 R_\sigma,r_p)$, then we can describe $\Eu{N}_0$ as an axis-aligned grid with $g$ points along each axis.  We define two cases to bound $g$.  When $\delta r_p^2/32 \sigma < \sqrt{3} \sigma$ then we can set
\[
g 
= 
\frac{R_p}{\delta r_p^2/ (32\sigma\sqrt{d}) } 
\leq 
\frac{32 \sigma \sqrt{d} }{\delta r_p} 
= 
O(\sigma/\delta r_p) = O(\sigma / \delta \Lambda_P)
\]
Otherwise, 
\[
g = 
\frac{R_p}{\sqrt{3} \sigma/\sqrt{d}}
\leq
\frac{\sigma \sqrt{2\ln(n)}}{\sqrt{3} \sigma/\sqrt{d}}
=
\sqrt{2d \ln(n)/3} = O(\sqrt{\log(n)}).
\]
Then we need $|\Eu{N}_0| = O(g^d) = O((\sigma / \delta \Lambda_P)^2 + \ln^{d/2}(n))$.  
\end{proof}

When $r_p/2 < R$ we still need to handle the case for $x \in A_p$ where the annulus $A_p = B_R(p) \setminus B_{r_p/2}(p)$.  
For a point $x \in A_p$ if $p = \min_{p' \in P} \|x-p'\|$ then $r_x \geq \|x-p\|$.  We only worry about the net $\Eu{N}_p'$ on $A_p$ for these points where $p$ is the closest point, the others will be handled by another $\Eu{N}_{p'}$ for $p' \in P$ and $p' \neq p$.  

Recall $k_p$ is the smallest integer $k$ such that $r_p/2 \geq R/2^k$.  

\begin{lemma}
\label{lem:Ann-net}
A net $\Eu{N}_p'$ can be constructed of size $O(k_p + (\sigma/ \delta \Lambda_P)^d + \log^{d/2}(n))$ so that all points $x \in A_p$ where $p = \arg \min_{p' \in P} \|x-p'\|$,  satisfy $\|q - x\| \leq \delta (r_{x,K})^2/4 \sigma$ for some $q \in \Eu{N}_p'$.  
\end{lemma}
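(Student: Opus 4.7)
The natural strategy is to mimic the construction of $\Eu{N}_0$ in Lemma \ref{lem:N0}, but adapted to exploit the fact that $r_x$ varies significantly across the annulus. I would partition $A_p$ into $k_p$ dyadic shells $A_i = B_{R/2^{i-1}}(p) \setminus B_{R/2^i}(p)$ for $i = 1, \ldots, k_p$, place a separate axis-aligned grid $\Eu{N}_i$ inside the bounding box of each shell, and set $\Eu{N}_p' = \bigcup_{i=1}^{k_p} \Eu{N}_i$. The additive $k_p$ term in the bound accounts for the fact that each shell contributes at least one grid point even if its ideal count is sub-constant.

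The driving observation is that for any $x \in A_i$ whose closest point in $P$ is $p$, the open ball $B_{\|x-p\|}(x)$ is $P$-free, so in particular $r_x \geq \|x-p\| \geq R/2^i$. Combined with the identity $D_K(x,y)^2 = 2\sigma^2(1-\exp(-\|x-y\|^2/2\sigma^2))$ and the monotonicity of $D_K$ in Euclidean distance, this yields a lower bound on $r_{x,K}^2$ that grows with $i$ decreasing, so I can afford a coarser grid in outer shells than in inner ones.

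From here I would split into two regimes. When $R/2^i \leq \sqrt{3}\sigma$, Lemma \ref{lem:E2DK} gives $r_{x,K}^2 \geq r_x^2/4 \geq (R/2^{i+1})^2$, so I use grid spacing of order $\delta R^2/(\sigma 4^i \sqrt d)$. Since shell $A_i$ has $d$-volume $\Theta((R/2^i)^d)$, this produces $O((\sigma 2^i/(\delta R))^d)$ points. For $i = k_p$ this is $O((\sigma/(\delta \Lambda_P))^d)$, and for smaller $i$ (outer shells in this regime) the count decays geometrically, so the whole inner regime totals $O((\sigma/(\delta \Lambda_P))^d)$. When $R/2^i > \sqrt{3}\sigma$, the inequality $r_x > \sqrt{3}\sigma$ forces $r_{x,K}^2 \geq 2\sigma^2(1 - e^{-3/2}) > \sigma^2$, so it suffices to cover the union of the remaining outer shells by a single grid with spacing $\sqrt{3}\sigma/\sqrt d$ (as in the outer-regime half of Lemma \ref{lem:N0}) over a bounding box of side $2R \leq 2R_\sigma = O(\sigma\sqrt{\log n})$, which contributes $O(\log^{d/2}(n))$ points.

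The main obstacle will be the transition shell where $R/2^i \approx \sqrt{3}\sigma$: here neither lower bound on $r_{x,K}^2$ is tight, and one must argue that the two grids (with different spacings) together still cover every $x$ in this shell with some $q$ satisfying $\|q-x\| \leq \delta(r_{x,K})^2/(4\sigma)$. A secondary issue, also inherited from Lemma \ref{lem:N0}, is that using $\sqrt{3}\sigma$ as the Regime B spacing implicitly requires $\delta$ to be at least an absolute constant; for smaller $\delta$ the outer contribution sharpens to $O((\sqrt{\log n}/\delta)^d)$ and the stated bound should be read with $1/\delta^d$ absorbed into the implicit constants, consistent with the statement of Lemma \ref{lem:N0}. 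Once the geometric summation in Regime A and the single coarse grid in Regime B are in place, the claimed size $O(k_p + (\sigma/(\delta\Lambda_P))^d + \log^{d/2}(n))$ falls out.
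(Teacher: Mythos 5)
Your proposal is correct and follows essentially the same route as the paper: the same dyadic shells $A_i$, the same lower bound $r_x \geq \|x-p\| \geq R/2^i$, the same two regimes split at $\sqrt{3}\sigma$, and the same geometric summation dominated by $t_{k_p}$ (inner regime) and $t_1$ (outer regime) with the additive $O(k_p)$ absorbing the per-shell constants. The transition-shell worry dissolves because one simply uses spacing $\min\bigl(\sqrt{3}\sigma,\ \delta(R/2^i)^2/8\sigma\bigr)$ within each shell, as the paper does, and the suppressed $\delta$-dependence in the outer regime is indeed inherited verbatim from Lemma \ref{lem:N0}.
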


If $x = p_+$, then such a point satisfies the net condition, that is there is a point $q \in \Eu{N}'_p$ such that   
$\|q - x\| = \| q - p_+\| \leq \delta (r_{p_+,K})^2/(4 \sigma) = \delta D_K(p_+,\hat p_2)/(4\sigma) \leq \delta D_K(P, p_+)$.

\begin{proof}
We now consider the $k_p$ annuli $\{A_1, \ldots, A_{k_p}\}$ which cover $A_p$.  Each $A_i = \{x \in \Rspace^d \mid R/2^{i-1} \geq \|p-x\| > R/2^i\}$ has volume $O((R/2^{i-1})^d)$.  For any $x \in A_i$ we have $r_x \geq \|x-p\| \geq R/2^i$, so the Euclidean distance to the nearest $q \in \Eu{N}_i$ can be at most $\min(\sqrt{3}\sigma , \delta (R/2^i)^2/ 8\sigma)$. Thus we can cover $A_i$ with a net $\Eu{N}_i$ of size $t_i$ based on two cases again.  If $\delta (R/2_i)^2/8\sigma < \sqrt{3} \sigma$ then
\[
t_i 
= 
O\left(1 +  \left( \frac{R}{2^i} / \left(\frac{\delta}{\sigma} \left(\frac{R}{2^i}\right)^2\right)\right)^d\right)
=
O\left(1 + \left( \frac{2^i}{R} \frac{\sigma}{\delta}\right)^d\right)
=
O(1) + O\left( \left(\frac{\sigma}{\delta R}\right)^d (2^d)^i\right).
\]
Otherwise
\[
t_i 
= 
O\left( 1+ \left(\left(\frac{R}{2^i}\right) / \sqrt{3}\sigma \right)^d \right)
=
O\left( 1+ \left(\frac{R_\sigma = \sigma \sqrt{\log(n)}}{2^i \sigma }\right)^d \right)
=
O(1) + O\left(\frac{\log^{d/2}(n)}{(2^d)^i}\right).
\]
Since $R/2^{k_p} \geq r_p/2 \geq \Lambda_P/2$, then
the total size of $\Eu{N}_p'$, the union of all of these nets, is $\sum_{i=1}^{k_p} t_i \leq O(k_p) + 2t_{k_p}  + 2 t_1 = O(k_p + (\sigma/\delta \Lambda_P)^d + \log^{d/2}(n))$.  
In the first case $t_{k_p}$ dominates the cost and in the second case it is $t_1$.  
\end{proof}

Thus the total size of $\Eu{N}_p$ is $O((\sigma/\delta \Lambda_P)^d + \log^{d/2}(n) + k_p)$ where $k_p \leq \log(R/r_p)+2$.  
It just remains to bound $k_p$.  Given that no more than $n/2$ points are collocated on the same spot (which already holds by $\Lambda_P$ being a bounded quantity), then for all $p \in P$, $r_p \geq \min_{q \neq q' \in P} \|q-q'\|$.  
The value $\beta_P = \Delta / \min_{q \neq q' \in P} \|q-q'\|$ is known as the \emph{spread} of a point set, and it is common to assume it is an absolute bounded quantity related to the precision of coordinates, where $\log(\beta_P)$ is not too large.  Thus we can bound $k_p = O(\log(\beta_P))$.  

\begin{theorem}
\label{thm:phat+}
Consider a point set $P \subset \Rspace^d$ with $n$ points, spread $\beta_P$, and median concentration $\Lambda_P$.  
For any $\delta > 0$, in time $O(n^2 ((\sigma/\delta \Lambda_P)^d + \log^{d/2}(n)+ \log(\beta_P)))$ we can find a point $\hat p_+$ such that $D_K(P,\hat p_+) \leq (1 + \delta) D_K(P,p_+)$.  
\end{theorem}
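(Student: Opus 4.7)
The plan is to brute-force search a carefully designed net $\Eu{N}$ and return the net point minimizing $D_K(P,\cdot)$, exploiting the fact that the previous discussion has already done the heavy lifting in Lemmas~\ref{lem:N0} and~\ref{lem:Ann-net}. First I would localize $p_+$: the paragraphs preceding the theorem already establish that $p_+ \in \mathrm{CH}(P)$ and that some $p \in P$ satisfies $\|p-p_+\|\leq R=\min(R_\sigma,\Delta)$, so $p_+\in\bigcup_{p\in P} B_R(p)$. I partition this union by nearest point in $P$, so that on the cell associated with $p$ the maximizer candidate $x$ satisfies $r_x \geq \|x-p\|$, which is the property used by Lemmas~\ref{lem:N0} and~\ref{lem:Ann-net}.

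Next I would assemble $\Eu{N} = \bigcup_{p\in P}\Eu{N}_p$ with $\Eu{N}_p = \Eu{N}_0 \cup \Eu{N}_p'$ as constructed in those two lemmas (applied to the ball $B_{R_p}(p)$ and annulus $A_p$ respectively). Together they guarantee the net condition: for any $x$ whose nearest point in $P$ is $p$, there exists $q \in \Eu{N}_p$ with $\|q-x\|\leq \delta(r_{x,K})^2/(4\sigma)$. Specializing to $x=p_+$ and recalling $r_{p_+,K}=D_K(p_+,\hat p_2)$, Lemma~\ref{lem:sq-bnd2} upgrades this to $\|q-p_+\|\leq \delta D_K(P,p_+)$. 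Then the $1$-Lipschitz property of $D_K(P,\cdot)$ (Lemma~\ref{lem:Lipschitz}) yields
\[
D_K(P,q) \leq D_K(P,p_+) + \|q-p_+\| \leq (1+\delta)\,D_K(P,p_+),
\]
so taking $\hat p_+ = \arg\min_{q\in\Eu{N}} D_K(P,q)$ gives the claimed approximation.

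For the running time, I would first precompute $\kappa(P,P)$ in $O(n^2)$ time. Each evaluation $D_K(P,q)^2 = \sigma^2 + \kappa(P,P) - 2\kappa(P,q)$ then costs $O(n)$. By Lemmas~\ref{lem:N0} and~\ref{lem:Ann-net}, $|\Eu{N}_p| = O((\sigma/\delta\Lambda_P)^d + \log^{d/2}(n) + k_p)$, and the spread assumption bounds $k_p = O(\log\beta_P)$. Summing over $p\in P$ gives $|\Eu{N}| = O(n((\sigma/\delta\Lambda_P)^d + \log^{d/2}(n) + \log\beta_P))$, so the total cost is
\[
O(n^2) + O(n\cdot|\Eu{N}|) = O\!\left(n^2\bigl((\sigma/\delta\Lambda_P)^d + \log^{d/2}(n) + \log\beta_P\bigr)\right),
\]
as required.

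The main conceptual obstacle was already handled upstream: the fact that $p_+$ is not guaranteed to lie near any single input point at Euclidean scale $\sigma$ forced the multi-scale annular net construction in Lemma~\ref{lem:Ann-net}, and the squared-kernel-distance shrinkage of Lemma~\ref{lem:sq-bnd2} (via the RKHS halfspace argument of Lemma~\ref{lem:vec-shrink2}) is what allows a very sparse net of spacing $\delta(r_{x,K})^2/(4\sigma)$ — rather than $\delta\cdot D_K(P,p_+)$ directly — to still realize the $(1+\delta)$ approximation. Given those pieces, the remaining work is essentially bookkeeping: verifying that the nearest-point partition matches the region covered by each $\Eu{N}_p$, and checking that $k_p$ is absorbed into $\log\beta_P$ rather than becoming the dominant term.
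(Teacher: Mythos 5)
Your proposal is correct and follows essentially the same route as the paper: the same nets from Lemmas~\ref{lem:N0} and~\ref{lem:Ann-net}, the same use of Lemma~\ref{lem:sq-bnd2} to convert the net spacing $\delta(r_{p_+,K})^2/(4\sigma)$ into $\delta D_K(P,p_+)$, the same $1$-Lipschitz step, and the same $O(n)$-per-candidate evaluation over a net of size $O(n((\sigma/\delta\Lambda_P)^d + \log^{d/2}(n)+\log\beta_P))$. The only cosmetic difference is that you select $\hat p_+$ by minimizing $D_K(P,\cdot)$ while the paper maximizes $\kappa(P,\cdot)$, which are equivalent.
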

\begin{proof}
Using Lemma \ref{lem:N0} and Lemma \ref{lem:Ann-net} we can build a net $\Eu{N}$ of size $O(n((\sigma/\delta \Lambda_P)^d + \log^{d/2}(n) + \log(\beta_P))$ such that some $q \in \Eu{N}$ satisfies $\|q-p_+\| \leq \delta D_K(q,p_+)^2 /4 \sigma \leq \delta D_K(P,p_+)$.  Lemma \ref{lem:sq-bnd2} ensures that this $q$ satisfies $D_K(P,q) \leq (1 + \delta) D_K(P,p_+)$ since $D_K(P,\cdot)$ is $1$-Lipschitz.  

We can find such a $q$ and set it as $p_+$ by evaluating $\kappa(P,q)$ for all $q \in \Eu{N}$ and taking the one with largest value.  This takes $O(n)$ for each $q \in \Eu{N}$. 
\end{proof}

We claim that in many realistic settings $\sigma/\Lambda_P = O(1)$.  In such a case the algorithm runs in $O(n^2 (1/\delta^d + \log^{d/2} n + \log(\beta_P)))$ time.  
If $\sigma/\Lambda_P = o(1)$, then \emph{over} half of the measure described by $P$ will essentially behave as a single point.  In many settings $P$ is drawn uniformly from a compact set $S$, so then choosing $\sigma$ so that more than half of $S$ has negligible diameter compared to $\sigma$ will cause that data to be over smoothed.  
In fact, the definition of $\Lambda_P$ can be modified so that this radius never contains more than any $\tau n$ points for any constant $\tau < 1$, and the bounds do not change asymptotically.

\section{Details on Reconstruction Properties of Kernel Distance}
\label{app:recon}

In this section we provide the full proof for some statements from Section \ref{sec:recon}.  

\subsection{Topological Estimates using Kernel Power Distance}
\label{subsec:approximation}
For persistence diagrams of sublevel sets filtration of $\dK_{\mu}$ and the weighted Rips filtration $\{R_{\alpha}(P,\dK_\mu)\}$ to be well-defined, we need the technical condition (proved in Lemma \ref{lemma:q-tame} and \ref{lemma:q-tame-rips}) that they are $q$-tame. 
Recall a filtration $F$ is \emph{$q$-tame} if for any $\alpha < \beta$, the homomorphism between 
$\Hgroup(F_{\alpha})$ and $\Hgroup(F_{\beta})$ induced by the canonical inclusion has finite rank \cite{ChazalCohen-SteinerGlisse2009,ChazalSilvaGlisse2013}. 

\begin{lemma}
The sublevel sets filtration of $\dK_\mu$ is $q$-tame. 
\label{lemma:q-tame}
\end{lemma}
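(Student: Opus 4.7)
The plan is to verify the $q$-tame condition directly: for every $\alpha<\beta$ and every $k\ge 0$, the inclusion-induced map $H_k((\dK_\mu)^\alpha)\to H_k((\dK_\mu)^\beta)$ has finite rank. I would split on $c_\mu$. For $\alpha<0$ the sublevel set is empty. For any $\alpha\ge c_\mu$: since the Gaussian kernel gives $\kappa(\mu,x)>0$ pointwise, $\dK_\mu(x)^2 = c_\mu^2 - 2\kappa(\mu,x) < c_\mu^2$ everywhere, so $(\dK_\mu)^\alpha = \Rspace^d$, with $H_0 = \Rspace$ and $H_k=0$ for $k\ge 1$; any inclusion-induced map into or out of this is automatically finite-rank. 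The interesting regime is therefore $0\le\alpha<\beta<c_\mu$, and the mixed case $\alpha<c_\mu\le\beta$ reduces to it by factoring the inclusion through $(\dK_\mu)^{\beta'}$ for any $\beta'\in(\alpha,c_\mu)$.

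In the main regime I would reduce $q$-tameness to finite generation of singular homology of each compact sublevel set. By Lemma \ref{lem:proper} (properness of $\dK_\mu$ on the range $[0,c_\mu)$), the set $(\dK_\mu)^\beta$ is a compact subset of $\Rspace^d$. It then suffices to show $H_k((\dK_\mu)^\beta)$ is finite-dimensional, for the inclusion out of $(\dK_\mu)^\alpha$ automatically factors through a finite-dimensional target. For this I would invoke the critical-point theory of distance-like functions from \cite{ChazalCohen-SteinerMerigot2011,ChazalCohen-SteinerLieutier2009}, which applies because $\dK_\mu$ satisfies (D1)--(D3) by Lemmas \ref{lem:Lipschitz}, \ref{lem:semi}, \ref{lem:proper}. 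That machinery supplies a generalized gradient and a continuous flow along which $\dK_\mu$ is monotone, together with an isotopy lemma relating sublevel sets across critical-free intervals. Combined with the semiconcavity (D2) and compactness of $(\dK_\mu)^\beta$, a Morse-theoretic argument then shows $(\dK_\mu)^\beta$ has the homotopy type of a finite CW complex, hence finite-rank singular homology in every degree.

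The main obstacle is that the critical set of a distance-like function need not be discrete, so one cannot naively count cell attachments by enumerating critical values. The standard resolution, used in the reconstruction proofs cited, is that semiconcavity (D2) together with compactness of the sublevel set force only finitely many essential topology changes on the closed bounded interval $[0,\beta]$, after which the finite CW conclusion follows. Once (D1)--(D3) are in hand---already established in Section \ref{sec:prop}---this is a direct application of the distance-like machinery, and the resulting finite generation of $H_k((\dK_\mu)^\beta)$ in each degree is exactly what is needed to conclude $q$-tameness of the filtration $\{(\dK_\mu)^\alpha\}_{\alpha\in\Rspace}$.
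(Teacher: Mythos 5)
Your skeleton---reduce to the regime $0\leq\alpha<\beta<c_\mu$, use properness (Lemma \ref{lem:proper}) to get compactness of the sublevel sets, and handle the extreme values of $\alpha$ separately---is sound as far as it goes, and the observation that $(\dK_\mu)^\alpha=\Rspace^d$ for $\alpha\geq c_\mu$ is correct. The gap is in the central step. You claim that semiconcavity (D2) together with compactness of $(\dK_\mu)^\beta$ ``force only finitely many essential topology changes'' on $[0,\beta]$, so that $(\dK_\mu)^\beta$ has the homotopy type of a finite CW complex with finitely generated homology. There is no such theorem for distance-like functions, and the assertion is false in that generality: the ordinary distance function $f_S$ to a compact set $S\subset\Rspace^d$ is $1$-Lipschitz, $1$-semiconcave and proper, yet its offsets $(S)^r$ can have infinitely generated homology. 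This is exactly the phenomenon that motivated the introduction of $q$-tameness in \cite{ChazalCohen-SteinerGlisse2009,ChazalSilvaGlisse2013}: one wants finite rank of the inclusion-induced \emph{maps} without assuming finite-dimensionality of the individual groups. The isotopy lemma only controls the topology between consecutive critical values, and a distance-like function may have critical values accumulating inside $[0,\beta]$; neither (D2) nor compactness excludes this. So your reduction of $q$-tameness to finite generation of $\Hgroup((\dK_\mu)^\beta)$, while logically valid if that finite generation held, rests on an unproved and generally false claim.

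The paper takes a route that sidesteps this entirely: $\Rspace^d$ is triangulable, the compact set $h((\dK_\mu)^\beta)$ sits inside a \emph{finite} subcomplex $C_\beta$ of a locally finite triangulation, and Theorem 2.22 of \cite{ChazalSilvaGlisse2013} states that the sublevel set filtration of any continuous function on the realization of a finite simplicial complex is $q$-tame. That gives finite rank of the inclusion-induced maps directly, with no claim about the homology of any single sublevel set. If you insist on the finite-generation route for this particular function, the honest justification is not the distance-like machinery but the real-analyticity of $\kappa(\mu,\cdot)$ for the Gaussian kernel, which makes compact sublevel sets subanalytic and hence finitely triangulable; that is a different argument from the one you wrote, and the paper's argument is both shorter and insensitive to the choice of kernel.
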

\begin{proof}
The proof resembles the proof of $q$-tameness for distance to measure sublevel sets filtration (Proposition 12, \cite{BuchetChazalOudot2013}). 
We have shown that $\dK_\mu$ is $1$-Lipschitz and proper. Its properness property implies that any sublevel set $A:=(\dK_\mu)^{-1}([0,\alpha])$ (for $\alpha < c_\mu$) is compact. 
Since $\Rspace^d$ is triangulable (i.e. homeomorphic to a locally finite simplicial complex), there exists a homeomorphism $h$ from $\Rspace^d$ to a locally finite simplicial complex $C$. 
For any $\alpha > 0$, since $A$ is compact, we consider the restriction of $C$ to a finite simplicial complex $C_{\alpha}$ that contains $h(A)$. 
The function $(\dK_\mu \circ h^{-1})\mid_{C_{\alpha}}$ is continuous on $C_{\alpha}$, 
therefore its sublevel set filtration is $q$-tame based on 
Theorem 2.22 of \cite{ChazalSilvaGlisse2013}, which states that the sublevel sets filtration of a continuous function (defined on a realization of a finite simplicial complex) is $q$-tame. 
Extending the above construction to any $\alpha$, the sublevel sets filtration of $\dK_\mu \circ h^{-1}$ is therefore $q$-tame. 
As homology is preserved by homeomorphisms $h$, this implies that the sublevel sets filtration of $\dK_{\mu}$ is $q$-tame. 
\end{proof}

Setting $\mu = \muP$, Lemma \ref{lemma:q-tame} implies that the sublevel sets filtration of $\dK_\muP$ is also $q$-tame. 

\begin{lemma}
The weighted Rips filtration $\{R_{\alpha}(P,\dK_\mu)\}$ is $q$-tame for compact subset $P \subset \Rspace^d$. 
\label{lemma:q-tame-rips}
\end{lemma}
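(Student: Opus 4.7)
The plan is to transfer $q$-tameness from the sublevel sets filtration of the kernel power distance $\Kpow{P}(\mu,\cdot)$ to the weighted Rips filtration via two intermediate steps: first to the weighted \v{C}ech filtration (using the Persistent Nerve Lemma), and then to the weighted Rips filtration (using its multiplicative interleaving with the weighted \v{C}ech filtration). This is essentially the same strategy used in \cite{BuchetChazalOudot2013} for power distances over distance-to-a-measure, adapted to our kernel-based setting.

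First, I would show that the sublevel sets filtration of $\Kpow{P}(\mu,\cdot)$ is $q$-tame by mimicking the argument of Lemma \ref{lemma:q-tame}. Concretely, $\Kpow{P}(\mu,\cdot)$ is continuous (in fact $1$-Lipschitz, since each function $x \mapsto \sqrt{D_K(p,x)^2 + \dK_\mu(p)^2}$ is $1$-Lipschitz in $x$ because $D_K(p,\cdot)$ is $1$-Lipschitz, and the minimum of $1$-Lipschitz functions is $1$-Lipschitz). Properness of $\Kpow{P}(\mu,\cdot)$ on a bounded range follows from Theorem \ref{thm:powK-bnd} combined with properness of $\dK_\mu$ (Lemma \ref{lem:proper}): any sequence escaping to infinity forces $\dK_\mu$, and hence $\Kpow{P}(\mu,\cdot)$, to approach its supremum. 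Triangulating $\Rspace^d$ and restricting to a finite subcomplex that contains the compact sublevel set, Theorem 2.22 of \cite{ChazalSilvaGlisse2013} gives $q$-tameness on each sublevel, hence on the whole filtration.

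Next, I would invoke the Persistent Nerve Lemma (see \cite{ChazalOudot2008,Sheehy2012}) applied to the family of closed balls $\{B(p, r_p(\alpha))\}_{p\in P}$ that make up the sublevel set $(\Kpow{P}(\mu,\cdot))^{-1}((-\infty,\alpha])$. Since these balls are convex and closed in $\Rspace^d$, this lemma identifies the persistent homology of the sublevel sets filtration of $\Kpow{P}(\mu,\cdot)$ with that of the weighted \v{C}ech filtration $\{C_\alpha(P,\dK_\mu)\}$. Consequently, $q$-tameness of the former implies $q$-tameness of the latter.

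Finally, Proposition 31 of \cite{ChazalSilvaGlisse2013} gives the multiplicative interleaving
\[
C_\alpha(P,\dK_\mu) \subseteq R_\alpha(P,\dK_\mu) \subseteq C_{2\alpha}(P,\dK_\mu)
\]
for every $\alpha$. Because $q$-tameness of a persistence module is preserved under interleaving with a $q$-tame module (the induced maps at scale $\alpha \to \beta$ factor through finite-rank maps of the \v{C}ech modules), the weighted Rips filtration $\{R_\alpha(P,\dK_\mu)\}$ is $q$-tame as well. The main subtlety to verify carefully will be the properness/compactness bookkeeping for $\Kpow{P}(\mu,\cdot)$ when $P$ is only assumed compact rather than finite, but since $P$ itself is compact and $\dK_\mu$ is continuous and bounded on $P$, the infimum defining $\Kpow{P}$ is attained and the argument above goes through.
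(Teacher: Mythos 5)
Your route is genuinely different from the paper's: the paper proves this lemma in one line by citing Proposition 32 of \cite{ChazalSilvaGlisse2013}, which asserts $q$-tameness of the weighted Rips filtration of a compact subset of a metric space directly. You instead try to derive it from $q$-tameness of the sublevel set filtration of $\Kpow{P}(\mu,\cdot)$ via the Persistent Nerve Lemma and the \v{C}ech--Rips interleaving. Your first two steps are essentially sound, modulo bookkeeping: the ``balls'' $\{x : D_K(p,x)\le r\}$ of the kernel metric are Euclidean balls only for $r<\sqrt{2}\sigma$ and are all of $\Rspace^d$ otherwise (still convex, so the nerve argument survives), and the lower bound you need for properness is Theorem \ref{thm:powK-low}, which holds for arbitrary $P$, rather than Theorem \ref{thm:powK-bnd}, which is specific to $\hat P_+$.

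The genuine gap is in the last step: $q$-tameness is \emph{not} preserved by interleaving with a $q$-tame module. The factorization you invoke, $\Hgroup(R_\alpha)\to \Hgroup(C_{2\alpha})\to \Hgroup(C_\beta)\to \Hgroup(R_\beta)$, needs $\beta\ge 2\alpha$ for the middle arrow to exist; for $\alpha<\beta<2\alpha$ you get no control at all, and $q$-tameness requires finite rank for \emph{every} pair $\alpha<\beta$. A standard counterexample: a persistence module equal to a fixed infinite-dimensional vector space with identity transition maps on a short interval and zero elsewhere is interleaved with the zero module (which is $q$-tame) but is itself not $q$-tame. So your argument only establishes finite rank of $\Hgroup(R_\alpha)\to\Hgroup(R_\beta)$ when $\beta\ge 2\alpha$, which is strictly weaker than the claim. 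The proof of Proposition 32 in \cite{ChazalSilvaGlisse2013} avoids this by approximating the compact vertex set by finite $\eps$-nets and, for each fixed pair $\alpha<\beta$, choosing the net fine enough (as a function of the gap between $\alpha$ and $\beta$) that the transition map factors through the homology of a Rips complex on finitely many vertices, which is finite-dimensional; a self-contained argument needs that quantifier order --- a family of arbitrarily tight interleavings applied pairwise --- not one fixed multiplicative interleaving. (In the paper's actual application $P=\hat P_+$ is finite, so the Rips complex has finitely many vertices and $q$-tameness is immediate; but the lemma as stated allows infinite compact $P$, where your argument does not close.)
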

\begin{proof}
Since $P$ is compact subset of $\Rspace^d$, $\dgmD{\{R_{\alpha}(P,\dK_\mu)\}})$ is $q$-tame based on Proposition 32 of \cite{ChazalSilvaGlisse2013}, which states that the weighted Rips filtration with respect to a compact subset $P$ in metric space and its corresponding weight function is $q$-tame. 
\end{proof}

Setting $P = \hP_+$, $\mu = \muP$, Lemma \ref{lemma:q-tame-rips} implies that the weighted Rips filtration $\{R_{\alpha}(\hP_+,\dK_\muP)\}$ is well-defined.

\subsection{Inference of Compact Set $S$ with the Kernel Distance}
\label{app:infer}

Suppose $\mu$ is a uniform measure on a compact set $S$ in $\Rspace^d$. 
We now compare the kernel distance $\dK_\mu$ with the distance function $f_S$ to the support $S$ of $\mu$. We show how $\dK_\mu$ approximates $f_S$, and thus allows one to infer geometric properties of $S$ from samples from $\mu$. 

For a point $x \in \Rspace^d$, the distance function $f_S$ measures the minimum distance between $x$ and any point in $S$, $f_S(x) = \inf_{y \in S}||x-y||$. 
The point $x_S$ that realizes the minimum in the definition of $f_S(x)$ is the \emph{orthogonal projection} of $x$ on $S$.  The location of the points $x \in \Rspace^d$ that have more than one projection on $S$ is the \emph{medial axis} of $S$ \cite{Merigot2010}, denoted as $\ma(S)$. Since $\ma(S)$ resides in the unbounded component $\Rspace^d \setminus S$, it is referred to as the \emph{outer medial axis} similar to the concept found in \cite{Dey2007}. The \emph{reach} of $S$ is the minimum distance between a point in $S$ and a point in its medial axis, denoted as $\reach(S)$. 
Similarly, one could define the medial axis of $\Rspace^d \setminus S$ (i.e. 
the \emph{inner medial axis} which resides in the interior of $S$) following definitions in \cite{Lieutier2004}, and denote its associated reach as $\reach(\Rspace^d \setminus S)$. The concepts of reach associated with the inner and outer medial axis of $S$ capture curvature information of the compact set.

Recall that a generalized gradient and its corresponding flow to a distance function are described in \cite{ChazalCohen-SteinerLieutier2009} and later adapted for distance-like functions in \cite{ChazalCohen-SteinerMerigot2011}. 
Let $f_S: \Rspace^d \to \Rspace$ be a distance function associated with a compact set $S$ of $\Rspace^d$. 
It is not differentiable on the medial axis of $S$. 
It is possible to define a \emph{generalized gradient function} $\grad{S}{}: \Rspace^d \to \Rspace^d$ coincides with the usual gradient of $f_S$ where $f_S$ is differentiable, and is defined everywhere and can be integrated into a continuous flow $\Phi^t: \Rspace^d \to \Rspace^d$.   
Such a flow points away from $S$, towards local maxima of $f_S$ (that belong to the medial axis of $S$) \cite{Merigot2010}. 
The integral (flow) line $\gamma$ of this flow starting at point in $\Rspace^d$ can be parameterized by arc length, $\gamma: [a, b] \to \Rspace^d$, and we have $f_S(\gamma(b)) = f_S(\gamma(a)) + \int_{a}^{b}||\grad{S}{}(\gamma(t))|| d_t$.

\begin{figure}
\includegraphics[width=0.3 \linewidth]{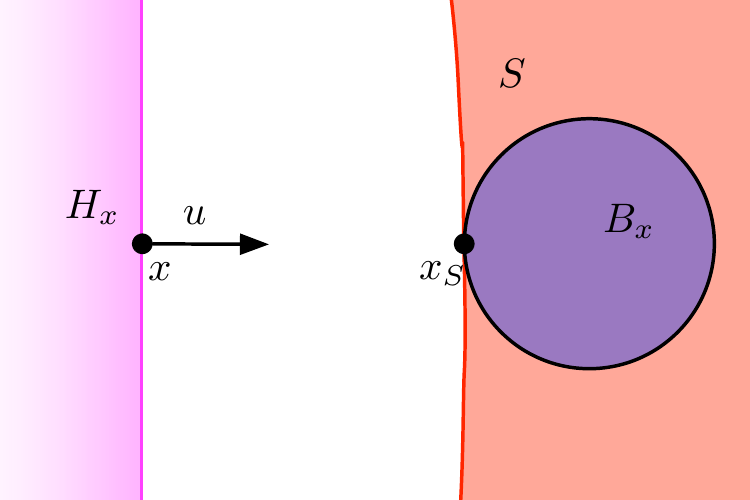}
\hspace{.2in}
\includegraphics[width=0.3\linewidth]{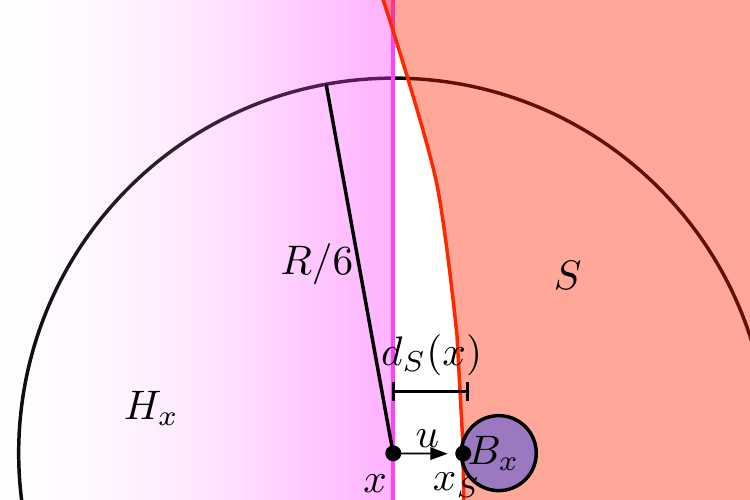}
\hspace{.2in}
\includegraphics[width=0.3 \linewidth]{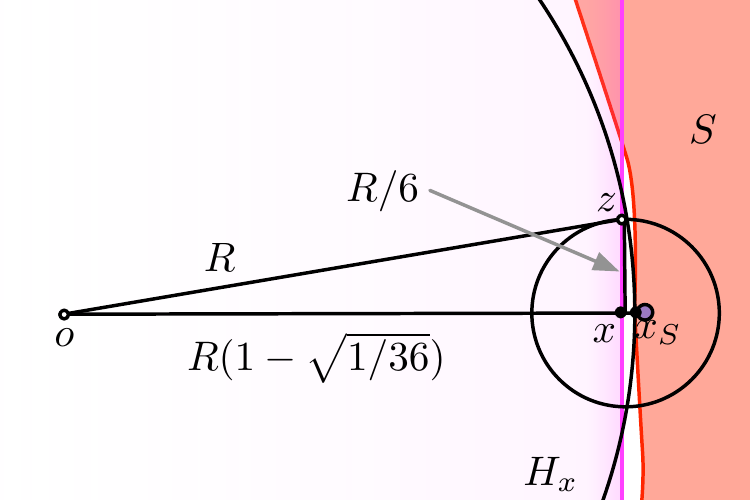}
\caption{\label{fig:inference}
Illustrations of the geometric inference of $S$ from $\dK_\mu$ at three scales.}
\end{figure}

\begin{lemma}[Lemma \ref{lem:monotonicity}]
Given any flow line $\gamma$ associated with the generalized gradient function $\grad{S}{}$, $\dK_\mu(x)$ is strictly monotonically increasing along $\gamma$ for $x$ sufficiently far away from the medial axis of $S$, for $\sigma \leq \frac{R}{6\Delta_G}$ and $ f_S(x) \in (0.014 R, 2\sigma)$. 
Here $B(\sigma/2)$ denotes a ball of radius $\sigma/2$,  $G := \frac{\vol(B(\sigma/2))}{\vol(S)}$, $\Delta_G : = \sqrt{12 + 3\ln(4/G)}$ and suppose $R := \min(\reach(S), \reach(\Rspace^d \setminus S)) > 0$.  
\label{lem:monotonicity-app}
\end{lemma}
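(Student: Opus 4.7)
The plan is to reduce monotonicity along $\gamma$ to a pointwise directional-derivative inequality, and then verify that inequality using the reach hypotheses together with a Gaussian tail estimate. At a differentiability point, the flow satisfies $\gamma'(t) = v$, where $v = (x - x_S)/f_S(x)$ is the unit outward normal at the unique projection $x_S$ of $x = \gamma(t)$ onto $S$ (uniqueness follows from $x$ being far from the medial axis). Since $(\dK_\mu)^2(x) = \kappa(\mu,\mu) + \sigma^2 - 2\kde_\mu(x)$ and $\mu$ is uniform on $S$, strict monotonicity of $\dK_\mu$ along $\gamma$ reduces to
\begin{equation*}
I(x) \;:=\; \int_S \langle x - p, v \rangle\, e^{-\|p-x\|^2/2\sigma^2}\, dp \;>\; 0,
\end{equation*}
with the non-differentiable case handled by the subdifferential machinery of~\cite{ChazalCohen-SteinerLieutier2009,ChazalCohen-SteinerMerigot2011}.

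Writing $x - p = f_S(x)\,v + (x_S - p)$ splits $I(x)$ into a dominant positive term $f_S(x)\int_S e^{-\|p-x\|^2/2\sigma^2}\, dp$ and a ``sideways'' term whose sign is not controlled. The reach hypothesis $\reach(\Rspace^d \setminus S) \geq R$ gives Federer's ``cup'' bound $\langle p - x_S, v \rangle \leq \|p - x_S\|^2 / (2R)$ for every $p \in S$, via the rolling ball $B_R(x_S + Rv) \subset \Rspace^d \setminus S$ tangent to $\partial S$ at $x_S$. Plugging this in, it suffices to show
\begin{equation*}
f_S(x) \;>\; \frac{1}{2R} \cdot \frac{\int_S \|p - x_S\|^2\, e^{-\|p-x\|^2/2\sigma^2}\, dp}{\int_S e^{-\|p-x\|^2/2\sigma^2}\, dp}.
\end{equation*}

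For the denominator, the complementary reach condition guarantees a ball $B_{\sigma/2}(x_S - (\sigma/2)v) \subset S$ (since $\sigma/2 \leq R/12 \leq R$), and every $p$ there satisfies $\|p - x\| \leq \sigma + f_S(x) \leq 3\sigma$ given $f_S(x) < 2\sigma$. Hence the denominator is at least $e^{-9/2}\, \vol(B(\sigma/2)) = e^{-9/2}\, G\, \vol(S)$. For the numerator, I would split $S = T \cup F$ with $T = S \cap B_{\sigma \Delta_G}(x)$. On $T$, the triangle inequality yields $\|p - x_S\| \leq f_S(x) + \sigma \Delta_G$, giving a bound proportional to the denominator. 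On $F$, the precise tuning $\Delta_G = \sqrt{12 + 3\ln(4/G)}$ makes the Gaussian tail $e^{-\Delta_G^2/2} = e^{-6}(G/4)^{3/2}$ small enough that, combined with the crude bound $\|p - x_S\|^2 \leq 2\|p - x\|^2 + 2 f_S(x)^2$ and the standard estimate for $\int_{\|y\| \geq \sigma \Delta_G} \|y\|^2 e^{-\|y\|^2/2\sigma^2}\, dy$, the far contribution becomes lower order (of relative size $O(G^{1/2})$) once divided by the denominator's lower bound.

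Collecting the pieces yields a quadratic condition of the shape $f_S(x) > (f_S(x) + \sigma \Delta_G)^2/(2R) + (\text{negligible tail})$. Substituting $\sigma \Delta_G \leq R/6$ and solving for $f_S(x)$ produces the explicit threshold $0.014 R$ as the smaller root. The main technical obstacle is precisely the tail estimate on $F$: one must show that the Gaussian-weighted mass, and second moment, of $S$ outside $B_{\sigma \Delta_G}(x)$ is dominated by the small-ball-inside-$S$ contribution, \emph{uniformly in the unknown global shape of $S$}. This is what the logarithmic dependence of $\Delta_G$ on the volume ratio $G$ is engineered to achieve, but tracking the constants so that the two conditions $\sigma \leq R/(6\Delta_G)$ and $f_S(x) > 0.014 R$ emerge cleanly from the quadratic is the delicate bookkeeping where essentially all of the work resides; every other step is standard once the truncation has been made.
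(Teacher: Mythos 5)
Your overall strategy is viable but genuinely different from the paper's. The paper never isolates the normal component $f_S(x)$ at all: it splits $S$ by \emph{sign} of $\langle y-x,u\rangle$ using a halfspace argument --- the outer rolling ball of radius $R$ at $x_S$ guarantees that every point of $S$ on the wrong side of the hyperplane through $x$ is at distance at least $\sqrt{2Rf_S(x)-f_S(x)^2}\geq R/6\geq \Delta_G\sigma$ from $x$ (this is exactly where $0.014R=(1-\sqrt{35/36})R$ comes from), so only the far set can contribute negatively; the positive contribution comes from the inner $\sigma/2$-ball via $\langle y-x,u\rangle\geq\sigma/2$ on half its volume, with no dependence on $f_S(x)$. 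You instead split the inner product into $f_S(x)$ plus a tangential term and control the latter by Federer's bound $\langle p-x_S,v\rangle\leq\|p-x_S\|^2/(2R)$, reducing everything to a weighted second-moment estimate. Both proofs share the truncation at radius $\Delta_G\sigma$ and the inner-ball lower bound on the Gaussian mass, but the mechanisms by which curvature enters are different, and your version has the appeal of making the role of $\reach(S)$ quantitative rather than combinatorial.

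Two concrete problems remain. First, your far-field estimate is the wrong one: bounding $\int_F\|p-x\|^2e^{-\|p-x\|^2/2\sigma^2}\,dp$ by the full Gaussian tail $\int_{\|y\|\geq\sigma\Delta_G}\|y\|^2e^{-\|y\|^2/2\sigma^2}\,dy$ introduces a factor of order $\sigma^{d+2}\Delta_G^d$ with dimension-dependent constants, and after dividing by the denominator's lower bound $e^{-9/2}G\vol(S)=e^{-9/2}\vol(B(\sigma/2))$ you pick up $2^d\Delta_G^d\,\Gamma(d/2+1)$-type factors that are \emph{not} $O(G^{1/2})$ uniformly in $d$; for moderate $d$ and $G$ the "tail" swamps the main term. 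The fix is the paper's device: the integrand $r^2e^{-r^2/2\sigma^2}$ is decreasing for $r\geq\sqrt{2}\sigma$, so bound it pointwise by its value at $r=\Delta_G\sigma$ and multiply by $\vol(F)\leq\vol(S)$; this is precisely what makes the bound uniform in the shape of $S$ and is why $\Delta_G$ carries the $\ln(4/G)$. Second, the constant does not come out as claimed: ignoring the tail, $f_S(x)>(f_S(x)+\sigma\Delta_G)^2/(2R)$ with $\sigma\Delta_G\leq R/6$ has smaller root $(5/6-\sqrt{2/3})R\approx 0.0168R$, not $0.014R$, and the tail term only pushes this up. So your route proves the lemma on a slightly smaller interval $(cR,2\sigma)$ with $c\approx 0.017$; the paper's $0.014R$ threshold is tied to its halfspace construction and is not recoverable from your quadratic.
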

\begin{proof}
Since $\dK_\mu(x)$ is always positive, and $\dK_\mu(x) = \sqrt{c_\mu - 2\kde_\mu(x)}$ where $c_\mu$ is a constant that depends only on $\mu$, $K$, and $\sigma$, then it is sufficient to show that $\kde_\mu(x)$ is strictly monotonically decreasing along $\gamma$.  

Let $u$ be the negative of the direction of the flow line $\gamma$ at $x$ (i.e $u$ is a unit vector that points towards $S$). We show that $\kde_\mu(x)$ is strictly monotonically increasing along $u$. 
Informally, we will observe that all parts of $S$ that are ``close'' to $x$ are in the direction $u$, and that these parts dominate the gradient of $\kde_\mu(x)$ along $u$.  
We now make this more formal by describing two quantities, $B_x$ and $A_x$, illustrated in Figure \ref{fig:inference}.  

For a point $x \in \Rspace^d$, let $x_S = \arg \min_{x' \in S} \|x'-x\|$; since $x$ is not on the medial axis of $S$, $x_S$ is uniquely defined and $u$ points in the direction of $(x_S - x)/f_S(x)$. 
First, we claim that there exists a ball $B_x$ of radius $\sigma/2$ incident to $x_S$ that is completely contained in $S$.
This holds since $\sigma/2 \leq \frac{R}{6\Delta_G} < R \leq \reach(\Rspace^d \setminus S)$. 
In addition, since $f_S(x) < 2\sigma$, no part in $B_x$ is further than $3 \sigma$ from $x$.
Second, we claim that no part of $S$ within $\Delta_G \cdot \sigma$ ($\leq R/6$)  of $x$ (this includes $B_x$) is in the halfspace $H_x$ with boundary passing through $x$ and outward normal defined by $u$.  
To see this, let $o$ be the center of a ball with radius $R$ that is incident to $x_S$ but not in $S$, refer to such a ball as $B_o$. This implies that points $o$, $x$ and $x_S$ are colinear. Then a ball centered at $x$ with radius $R/6$ should intersect $S$ outside of $B_o$, and in the worst case, on the boundary of $H_x$. 
This holds as long as $\|x-x_S\| \geq 0.014R \geq (1-\sqrt{35/36})R$; see Figure \ref{fig:inference}.  
Define $A_x = \{y \in S \mid \|x-y\| > \Delta_G \cdot \sigma\}$.

Now we examine the contributions to the directional derivative of $\kde_\mu(x)$ along the direction of $u$ from points in $B_x$ and $A_x$, respectively. 
Such a directional derivative is denoted as $\dd_u \kde_\mu(x)$. 
Recall $\kde_\mu(x) = \int_{y \in S} K(x,y) \dir{\mu(y)}$ and $\mu$ is a uniform measure on $S$, $\dd_u \kde_\mu(x) = \frac{1}{\vol(S)} \int_{y \in S} \dd_u K(x,y)$. 
For any point $y \in \mathbb{R}^d$, we define  
$g(y) := \dd_u K(x,y) = \exp(-\|x-y\|^2/2\sigma^2) \langle y-x,u\rangle$. 
Therefore $\dd_u \kde_\mu(x) = \frac{1}{\textsf{vol}(S)} \int_{y \in S}  g(y)$. 

We now examine the contribution to $\dd_u \kde_\mu(x)$ from points in $B_x$, 
$\frac{1}{\vol(S)}\int_{y \in B_x} g(y)$.  First, for all points $y \in B_x$, since $||x - y|| \leq 3\sigma$, we have $\exp(-\|x-y\|^2/2\sigma^2) \geq \exp(-9/2)$. 
Second, at least half of points $y \in B_x$ (that covers half the volume of $B_x$) is at least $\sigma/2$ away from $x_S$, and correspondingly for these points $\langle y-x,u\rangle \geq \sigma/2$. 
We have $\int_{y \in B_x} g(y) \geq \frac{1}{2} \vol(B_x) \cdot \exp(-9/2) \cdot \sigma/2$. Given $\vol(B_x) = G \cdot \vol(S)$, we have $\frac{1}{\vol(S)}\int_{y \in B_x} g(y) \geq  \frac{1}{4} G \cdot \exp(-9/2) \cdot \sigma$.   
Denote $B =  \frac{1}{4} G \cdot \exp(-9/2) \cdot \sigma$. 

We now examine the contribution to $\dd_u \kde_\mu(x)$ from points in $A_x$, 
$\frac{1}{\vol(S)}\int_{y \in A_x} g(y)$. For any point $y \in \Rspace^d$ (including $y \in A_x$), 
$\langle y-x,u\rangle \leq \|x-y\|$. 
Let $\phi_y = \|x-y\|/\sigma$ so we have $g(y) \leq \exp(-\phi_y^2/2) \phi_y \sigma$. 
Since this bound on $g(y)$ is maximized at $\phi_y = 1$, under the condition $\phi_y \geq \Delta_G \geq \sqrt{12}>1$, we can set $\phi_y=\Delta_G$ to achieve the bound $g(y) \leq \exp(-\Delta_G^2/2) \cdot \Delta_G \sigma$ for $\|x-y\| \geq \Delta_G \cdot \sigma$ (that is, for all $y \in A_x$).  
Now we have $\int_{y \in A_x} g(y) \leq \vol(S) \exp(-\Delta_G^2/2) \cdot \Delta_G \sigma$, leading to $\frac{1}{\vol(S)} \int_{y \in A_x} g(y) \leq  \exp(-\Delta_G^2/2) \cdot \Delta_G \sigma$. Denote $A = \exp(-\Delta_G^2/2) \cdot \Delta_G \sigma$. 

Since only the points $y \in A_x$ could possibly reside in $H_x$ and thus can cause $g(y)$ to be negative, we just need to show that $B > A$.  
This can be confirmed by plugging in $\Delta_G = \sqrt{12 + 3\ln(4/G)}$, and using some algebraic manipulation.  
\end{proof}


\section{Lower Bound on Wasserstein Distance}
\label{app:stability}

We note the next result is a known lower bounds for the Earth movers distance~\cite{Coh99}[Theorem 7].  We reprove it here for completeness.  

\begin{lemma}[\textbf{Lemma.} \ref{lem:EMD}]
For any probability measures $\mu$ and $\nu$ defined on $\Rspace^d$ we have
$\|\bar \mu - \bar \nu\| \leq W_2(\mu,\nu).$
\end{lemma}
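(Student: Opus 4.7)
\medskip

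\noindent\textbf{Proof plan.} The plan is to exploit the defining feature of a transference plan $\pi \in \Pi(\mu,\nu)$: its marginals are $\mu$ and $\nu$, so both means can be written as integrals against $\pi$. Specifically, I would observe that
\[
\bar\mu \;=\; \int_{\mathbb{R}^d \times \mathbb{R}^d} x \, d\pi(x,y),
\qquad
\bar\nu \;=\; \int_{\mathbb{R}^d \times \mathbb{R}^d} y \, d\pi(x,y),
\]
for any $\pi \in \Pi(\mu,\nu)$, which allows me to combine them into the single expression
\[
\bar\mu - \bar\nu \;=\; \int_{\mathbb{R}^d \times \mathbb{R}^d} (x-y) \, d\pi(x,y).
\]

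\noindent Next, I would estimate the norm of this vector-valued integral. First by the triangle inequality (Jensen's inequality applied to $\|\cdot\|$, which is convex),
\[
\|\bar\mu - \bar\nu\| \;\leq\; \int_{\mathbb{R}^d \times \mathbb{R}^d} \|x-y\| \, d\pi(x,y).
\]
Then, since $\pi$ is a probability measure on $\mathbb{R}^d \times \mathbb{R}^d$, the Cauchy--Schwarz inequality (applied to the functions $\|x-y\|$ and $1$ in $L^2(\pi)$) gives
\[
\int \|x-y\|\, d\pi(x,y) \;\leq\; \left(\int \|x-y\|^2 \, d\pi(x,y)\right)^{1/2}.
\]

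\noindent Chaining these two inequalities yields $\|\bar\mu - \bar\nu\| \leq \bigl(\int \|x-y\|^2 d\pi\bigr)^{1/2}$ for every transference plan $\pi$, and taking the infimum over $\pi \in \Pi(\mu,\nu)$ gives exactly $\|\bar\mu - \bar\nu\| \leq W_2(\mu,\nu)$. There is no serious obstacle here; the only subtlety is the vector-valued Jensen step, which can alternatively be handled by testing against an arbitrary unit vector $u$: write $\langle \bar\mu - \bar\nu, u\rangle = \int \langle x-y, u\rangle d\pi \leq \int \|x-y\| d\pi$, and then take the supremum over $u$ with $\|u\|=1$ to recover $\|\bar\mu - \bar\nu\|$ without invoking Bochner integration.
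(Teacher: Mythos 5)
Your proof is correct and rests on the same two ingredients as the paper's: the marginal property of a transference plan (so that $\int (x-y)\,d\pi = \bar\mu - \bar\nu$) followed by a convexity inequality passing from the second moment to the first. The paper merely reorders these steps by projecting onto the fixed unit direction $u_{\mu,\nu} = (\bar\mu-\bar\nu)/\|\bar\mu-\bar\nu\|$ and applying Jensen to the scalar projections, which is exactly the alternative you sketch in your final remark.
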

\begin{proof}
Let $\pi : \Rspace^d \times \Rspace^d \to \Rspace^+$ describes the optimal transportation plan from $\mu$ to $\nu$.  Also let $u_{\mu,\nu} = \frac{(\bar \mu - \bar \nu)}{\|\bar \mu - \bar \nu \|}$ be the unit vector from $\bar \mu$ to $\bar \nu$.   Then we can expand 
\begin{align*}
(W_2(\mu,\nu))^2
= 
\int_{(p,q)} \|p-q\|^2 \dir{\pi(p,q)}
& \geq
\int_{(p,q)} (\langle (p-q), u_{\mu,\nu} \rangle)^2 \dir{\pi(p,q)}
\\ & \geq
\| \bar \mu - \bar \nu\|^2.
\end{align*}
The first inequality follows since $\langle (p-q), u_{\mu,\nu} \rangle$ is the length of a projection and thus must be at most $\|p-q\|$.  
The second inequality follows since that projection describes the squared length of mass $\pi(p,q)$ along the direction between the two centers $\bar \mu$ and $\bar \nu$, and the total sum of squared length of unit mass moved is exactly $\|\bar \mu - \bar \nu\|^2$.  Note the left-hand-side of the second inequality could be larger since some movement may cancel out (e.g. a rotation).  
\end{proof}

\end{document}